\def\wasyfamily{\fontencoding{U}\fontfamily{wasy}\selectfont}
\def\hexagon{\hbox{\wasyfamily\char55}}
\newtheorem{fact}{Fact}
\newtheorem{observation}{Observation}
\newtheorem{lemma}{Lemma}
\newtheorem{theorem}{Theorem}
\newtheorem{corollary}{Corollary}
\newtheorem{definition}{Definition}
\newtheorem{claim}{Claim}
\newenvironment{myenum}
  {\begin{enumerate}[leftmargin=0cm,itemindent=.5cm,labelwidth=\itemindent,labelsep=0cm,align=left]}
  {\end{enumerate}}
\DeclareMathOperator{\front}{front} 
\newcommand{\hp}[2]{\ensuremath{l_{#1 #2}^+}}
\newcommand{\eps}{\ensuremath{\varepsilon}}
\newcommand{\sap}{self-approaching path\xspace}
\newcommand{\saps}{self-approaching paths\xspace}
\newcommand{\sa}{self-ap\-proaching\xspace}
\newcommand{\sad}{self-approaching drawing\xspace}
\newcommand{\sads}{self-approaching drawings\xspace}
\newcommand{\ic}{increasing-chord\xspace}
\newcommand{\R}{\ensuremath{\mathbb R}}
\newcommand{\Hp}{\ensuremath{\mathbb H^2}}
\newcommand{\ifao}{if and only if\xspace}
\newcommand{\vright}{\vec{e_1}}
\newcommand{\vup}{\vec{e_2}}\newcommand{\ray}[2]{\ensuremath{\textnormal{ray}(#1,#2)}}
\newcommand{\bc}{BC\xspace}
\newcommand{\rt}[1]{\ensuremath{r(#1)}}
\newcommand{\dir}[1]{\ensuremath{\textnormal{dir}(#1)}}
\newcommand{\subcact}[1]{\ensuremath{G^{#1}}}
\newcommand{\subcactb}[2]{\ensuremath{G^{#1}_{#2}}}
\newcommand{\dg}{^\circ}
\newcommand{\aschnyder}[1]{#1-Schnyder\xspace}
\newcommand{\wlg}{without loss of generality}
\newcommand{\ang}[2]{\angle(#1,#2)}
\newcommand{\dist}[2]{\textnormal{dist}(#1,#2)}
\newcommand{\pred}[1]{\pi(#1)}
\newcommand{\gpred}[2]{\pi^{#1}(#2)}
\newcommand{\sublem}[1]{\lowercase\expandafter{(\romannumeral #1)\relax}\xspace}
\newcommand{\Wlog}{Without loss of generality}
\newcommand{\concat}[2]{#1.#2}
\renewcommand{\vec}[1]{\vv{#1}}
\newcommand{\vecl}[1]{\vec{#1}}
\newcommand{\upwdir}[1]{U(#1)}
\newcommand{\downwdir}[1]{D(#1)}
\newcommand{\colvect}[2]{(#1,#2)^\top}
\newcommand{\matr}[4]{%
  \ensuremath{\bigl(\negthinspace\begin{smallmatrix}#1&#2\\#3&#4\end{smallmatrix}\bigr)}}
\newcommand{\regred}[1]{R_{#1}^r}
\newcommand{\reggreen}[1]{R_{#1}^g}
\newcommand{\regblue}[1]{R_{#1}^b}
\newcommand{\angleccw}[2]{\angle_\textnormal{ccw}(#1,#2)}
\title{On Self-Approaching and Increasing-Chord Drawings of 3-Connected Planar Graphs\thanks{A preliminary version of this paper appeared at the 22nd International Symposium on Graph Drawing, Würzburg, Germany}
}
\author{Martin N\"ollenburg, Roman Prutkin, and Ignaz Rutter}  
\affil{Institute of Theoretical Informatics,\\ Karlsruhe Institute of Technology, Germany}
\date{}
\begin{document}

\maketitle

\begin{abstract}
    An $st$-path in a drawing of a graph is \emph{self-approaching} if during the traversal
  of the corresponding curve from $s$ to any point $t'$ on the curve the
  distance to $t'$ is non-increasing.  A path has \emph{increasing chords} if it
  is self-approaching in both directions.  A drawing is self-approaching
  (increasing-chord) if any pair of vertices is connected by a
  self-approaching (increasing-chord) path.

  We study self-approaching and increasing-chord drawings of
  triangulations and 3-connected planar graphs.  We show that in the
  Euclidean plane, triangulations admit increasing-chord drawings, and
  for planar 3-trees we can ensure planarity. We prove that strongly
  monotone (and thus increasing-chord) drawings of trees and binary
  cactuses require exponential resolution in the worst case, answering
  an open question by Kindermann et al.~\cite{kssw-mdt-14}.  Moreover,
  we provide a binary cactus that does not admit a \sa drawing.  Finally,
  we show that 3-connected planar graphs admit \ic drawings in the
  hyperbolic plane and characterize the trees that admit such
  drawings.
\end{abstract}

\section{Introduction}

Finding paths between two vertices is one of the most fundamental
tasks users want to solve when considering graph drawings~\cite{lppfh-ttgv-06}, for example
to find a connection in a schematic map of a public transport
system. Empirical studies have shown that users perform better in
path-finding tasks if the drawings exhibit a strong geodesic-path
tendency~\cite{heh-grbgt-09,phnk-ulgd-12}. Not surprisingly, graph
drawings in which a path with certain properties exists between every
pair of vertices have become a popular research topic.  Over the last
years a number of different drawing conventions implementing the
notion of strong geodesic-path tendency have been suggested, namely
\emph{greedy drawings}~\cite{Rao2003}, \emph{(strongly) monotone
  drawings}~\cite{acbfp-mdg-2012}, and \emph{self-approaching} and
\emph{increasing-chord drawings}~\cite{acglp-sag-12}.  Note that
throughout this paper, all drawings are straight-line and vertices are
mapped to distinct points.

The notion of greedy drawings came first and was
introduced by Rao et
al.~\cite{Rao2003}.  Motivated by greedy routing schemes, e.g., for
sensor networks, one seeks a drawing where for every pair of vertices
$s$ and $t$, there exists an $st$-path, along which the distance to $t$ decreases in every vertex.
 This ensures that greedily sending a message to a vertex that is
closer to the destination guarantees delivery.
Papadimitriou and Ratajczak conjectured that every 3-connected planar
graph admits a greedy embedding into the Euclidean
plane~\cite{Papadimitriou2005}.  This conjecture has been proved
independently by Leighton and Moitra~\cite{Moitra2008} and Angelini et
al.~\cite{angelini2010algorithm}.  Kleinberg~\cite{k-gruhs-07} showed
that every connected graph has a greedy drawing in the hyperbolic
plane.  Eppstein and Goodrich~\cite{Eppstein2011} showed how to
construct such an embedding, in which the coordinates of each vertex
are represented using only~$O(\log n)$ bits, and Goodrich and
Strash~\cite{Goodrich2009} provided a corresponding \emph{succinct}
representation for greedy embeddings of 3-connected planar graphs
in~$\R^2$. Angelini et al.~\cite{Angelini2012} showed that some graphs
require exponential area for a greedy drawing in~$\mathbb R^2$.  Wang
and He~\cite{Wang2014} used a custom distance metric to construct
planar, convex and succinct greedy embeddings of 3-connected planar
graphs using Schnyder realizers~\cite{Schnyder1990}.  Nöllenburg and
Prutkin~\cite{np-egdt-2013} characterized trees admitting a Euclidean
greedy embedding.  However, a number of interesting questions remain
open, e.g., whether every 3-connected planar graph admits a planar and
convex Euclidean greedy embedding (strong Papadimitriou-Ratajczak
conjecture~\cite{Papadimitriou2005}).  Regarding planar greedy
drawings of triangulations, the only known result is an existential
proof and a heuristic construction by Dhandapani~\cite{Dhandapani2010}
based on face-weighted Schnyder drawings.

While getting closer to the destination in each step, a greedy path can make
numerous turns and may even look like a spiral, which hardly matches
the intuitive notion of geodesic-path tendency.  To overcome this,
Angelini et al.~\cite{acbfp-mdg-2012} introduced monotone drawings,
where one requires that for every pair of vertices~$s$ and~$t$ there
exists a \emph{monotone path}, i.e., a path that is monotone with
respect to some direction.  Ideally, the monotonicity direction should
be~$\vec{st}$.  This property is called \emph{strong monotonicity}.
Angelini et al. showed that biconnected planar graphs admit monotone
drawings~\cite{acbfp-mdg-2012} and that plane graphs admit monotone
drawings with few bends~\cite{adkmrsw-mdgfe-2012}.  Kindermann
et al.~\cite{kssw-mdt-14} showed that every tree admits a strongly
monotone drawing. The existence of strongly monotone planar drawings
remains open, even for triangulations.

Both greedy and monotone paths may have arbitrarily large
\emph{detour}, i.e., the ratio of the path length and the distance of
the endpoints can, in general, not be bounded by a constant.
Motivated by this fact, Alamdari et al.~\cite{acglp-sag-12} recently
initiated the study of \emph{self-approaching} graph drawings.
Self-approaching curves, introduced by Icking~\cite{ikl-sac-99}, are
curves, where for any point $t'$ on the curve, the distance to $t'$
decreases continuously while traversing the curve from the start to
$t'$.  Equivalently, a curve is \sa if, for any three points $a$, $b$,
$c$ in this order along the curve, it is $\dist{a}{c} \geq
\dist{b}{c}$, where $\textrm{dist}$ denotes the Euclidean distance.
An even stricter requirement are so-called \emph{increasing-chord}
curves, which are curves that are self-approaching in both directions.
The name is motivated by the characterization of such curves, which
states that a curve has increasing chords if and only if for any four
distinct points $a,b,c,d$ in that order, it is $\dist{b}{c} \le
\dist{a}{d}$.  Self-approaching curves have detour at most
5.333~\cite{ikl-sac-99} and increasing-chord curves have detour at
most 2.094~\cite{r-cic-94}.  Alamdari et al.~\cite{acglp-sag-12}
studied the problem of recognizing whether a given graph drawing is
self-approaching and gave a complete characterization of trees
admitting self-approaching drawings. Furthermore, Alamdari et
al.~\cite{acglp-sag-12} and Frati et al.~\cite{dfg-icgps-14}
investigated the problem of connecting given points to an \ic drawing.

We note that every \ic drawing is self-approaching and strongly
monotone~\cite{acglp-sag-12}.  The converse is not true.  A self-approaching
drawing is greedy, but not necesserily monotone, and a greedy drawing is
generally neither \sa nor monotone.  For trees, the notions of \sa and \ic drawing coincide since all paths are unique.

\paragraph{Contribution.} 
We obtain the following results on constructing \sa or \ic drawings.
\begin{myenum}
\item We show that every triangulation has an \ic drawing (answering
  an open question of Alamdari et al.~\cite{acglp-sag-12}) and
  construct a \emph{binary cactus} that does not admit a \sa drawing
  (Sect.~\ref{sec:3conn}).  The latter is a notable difference to
  greedy drawings since both constructions of greedy drawings for
  3-connected planar graphs~\cite{Moitra2008,angelini2010algorithm}
  essentially show that every binary cactus has a greedy drawing.  We
  also prove that strongly monotone (and, thus, increasing-chord)
  drawings of trees and binary cactuses require exponential resolution
  in the worst case, answering an open question by Kindermann et
  al.~\cite{kssw-mdt-14}.

\item We show how to construct plane \ic drawings for \emph{planar
    3-trees} (a special class of triangulations) using Schnyder
  realizers (Sect.~\ref{sec:schnyder}).  To the best of our
  knowledge, this is the first construction for this graph class, even
  for greedy and strongly monotone plane drawings, which addresses an open question of Angelini et al.~\cite{acbfp-mdg-2012}.

\item We show that, similar to the greedy case, the hyperbolic plane
  $\Hp$ allows representing a broader class of graphs than~$\R^2$ 
  (Sect.~\ref{sec:hyperbolic}).  We prove that a tree has a \sa or \ic
  drawing in~$\Hp$ \ifao it either has maximum degree~3 or is a
  subdivision of~$K_{1,4}$ (this is not the case in~$\R^2$; see the
  characterization by Alamdari et al.~\cite{acglp-sag-12}),
  implying every 3-connected planar graph has an \ic drawing. We also show how to construct planar \ic drawings of
  binary cactuses in~$\Hp$.
\end{myenum}
\section{Preliminaries}
\label{sec:prelim}

For points $a,b,c,d \in \R^2$, let~$\ray{a}{b}$ denote the ray with origin~$a$
and direction~$\vec{ab}$ and let~$\ray{a}{\vec{bc}}$ denote the ray with origin~$a$ and
direction~$\vec{bc}$. Let~$\dir{ab}$ be the vector~$\vec{ab}$ normalized to unit
length. Let~$\angle(\vec{ab}, \vec{cd})$ denote the smaller angle formed by the two
vectors $\vec{ab}$ and $\vec{cd}$.  
For an angle~$\alpha \in [0, 2\pi]$, let~$R_\alpha$ denote the
rotation matrix $\matr{\cos \alpha}{-\sin \alpha}{\sin \alpha}{\cos
  \alpha}$.
\newcommand{\dirintcont}[1]{\overline{#1}} For vectors~$\vec{v_1}, \vec{v_2}$
with $\dir{\vec{v_2}} = R_\alpha \cdot \dir{\vec{v_1}}$, $\alpha \in [0, 2\pi)$,
we write $\angleccw{\vec{v_1}}{\vec{v_2}} := \alpha$. 

We reuse some notation from the work of Alamdari et
al.~\cite{acglp-sag-12}.  For points $p,q \in \mathbb R^2, p \neq q$,
let~$\hp{p}{q}$ denote the halfplane not containing $p$ bounded by the line
through~$q$ orthogonal to the segment~$pq$.  A piecewise-smooth curve is
self-approaching if and only if for each point~$a$ on the curve, the line
perpendicular to the curve at~$a$ does not intersect the curve at a later
point~\cite{ikl-sac-99}.  This leads to the following characterization of \saps. 

\begin{fact}[Corollary 2 in \cite{acglp-sag-12}]
 \label{rem:sap-char}
 Let $\rho = (v_1, v_2, \ldots, v_k)$ be a directed path embedded in $\R^2$ with
 straight-line segments. Then,~$\rho$ is self-approaching if and only if for all
 $1 \le i<j \leq k$, the point $v_j$ lies in $\hp{v_{i}}{v_{i+1}}$.
\end{fact}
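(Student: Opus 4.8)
The plan is to reduce the self-approaching condition, which a priori constrains all triples of points on the curve, to the finitely many stated inequalities among the vertices, using one elementary observation about segments and otherwise leaning on convexity. The observation is: for distinct $u,w\in\R^2$ and any $p\in\R^2$, the endpoint $w$ is the (unique) closest point of the segment $uw$ to $p$ \ifao $p\in\hp{u}{w}$ (equivalently, $\langle p-w,\,w-u\rangle\ge 0$); this follows by minimizing the upward parabola $t\mapsto\dist{u+t(w-u)}{p}^2$ over $[0,1]$ and unwinding the definition of $\hp{u}{w}$ as the halfplane bounded by the line through $w$ orthogonal to $uw$ and not containing $u$. I will also use that $t\mapsto\dist{u+t(w-u)}{p}$ is convex, so that on a segment it is non-increasing exactly when $w$ is its closest point to $p$.

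For the ``only if'' direction I would fix $1\le i<j\le k$; the case $j=i+1$ is immediate since then $v_{i+1}$ lies on the bounding line of $\hp{v_i}{v_{i+1}}$. For $j\ge i+2$ I would let $a$ range over the \emph{entire} segment $v_iv_{i+1}$: the triple $a,v_{i+1},v_j$ occurs in this order along $\rho$, so the defining inequality of self-approaching curves gives $\dist{a}{v_j}\ge\dist{v_{i+1}}{v_j}$ for all such $a$, i.e.\ $v_{i+1}$ is the closest point of $v_iv_{i+1}$ to $v_j$, whence $v_j\in\hp{v_i}{v_{i+1}}$ by the observation. It is worth noting that using only $a=v_i$ would yield the strictly weaker statement that $v_j$ lies beyond the perpendicular bisector of $v_iv_{i+1}$; sweeping $a$ over the whole segment is exactly what strengthens this to membership in $\hp{v_i}{v_{i+1}}$.

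For the ``if'' direction I would fix an arbitrary point $c$ on $\rho$, say on segment $v_rv_{r+1}$, and show that $\dist{\cdot}{c}$ is non-increasing along $\rho$ from $v_1$ to $c$. By continuity of $p\mapsto\dist{p}{c}$ it suffices to verify this on each segment lying before $c$: on $v_rv_{r+1}$ itself it is clear, since we are walking towards $c$; on a segment $v_iv_{i+1}$ with $i<r$ we have $v_r,v_{r+1}\in\hp{v_i}{v_{i+1}}$ by hypothesis (as $r,r+1>i$), hence $c\in v_rv_{r+1}\subseteq\hp{v_i}{v_{i+1}}$ by convexity of the halfplane, hence $v_{i+1}$ is the closest point of $v_iv_{i+1}$ to $c$ by the observation, and hence $\dist{\cdot}{c}$ is monotonically non-increasing along $v_iv_{i+1}$ by convexity of its restriction to that segment. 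Since $c$ was arbitrary, $\rho$ is self-approaching. (Alternatively this direction drops out of the perpendicular-line characterization preceding the statement: for $a$ in the interior of $v_iv_{i+1}$, the perpendicular to $\rho$ at $a$ is parallel to, and strictly on the $v_i$-side of, the bounding line of $\hp{v_i}{v_{i+1}}$, hence disjoint from $\hp{v_i}{v_{i+1}}$, which by convexity contains all of $v_{i+1}v_{i+2}\cdots v_k$.)

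The one point that needs care — and the only real obstacle I anticipate — is that a polygonal curve is not smooth at its vertices, so ``the perpendicular line at a point'' in the characterization preceding the statement is strictly defined only in the interior of each segment. I would sidestep this exactly as above: control the distance to every later point on the open segments and then invoke continuity of this distance along the whole curve, rather than attempting to interpret perpendiculars at the corners. Everything else is a routine consequence of the two convexity facts.
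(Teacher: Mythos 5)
Your proof is correct, but there is nothing in the paper to compare it against: the statement is imported verbatim as Corollary~2 of Alamdari et al.~\cite{acglp-sag-12}, and the only justification the paper offers is the preceding sentence, which says the characterization ``follows from'' Icking et al.'s normal-line criterion for piecewise-smooth self-approaching curves~\cite{ikl-sac-99}. Your argument is a genuinely self-contained alternative that works directly from the three-point definition and two elementary facts about the strictly convex function $t\mapsto\dist{u+t(w-u)}{p}^2$; it never needs the normal-line characterization, and so it also never has to confront the non-smoothness of a polygonal path at its vertices except through continuity of the distance function, which you handle correctly. The two points where a blind attempt typically goes wrong are both dealt with properly: in the ``only if'' direction you sweep $a$ over the entire segment $v_iv_{i+1}$ rather than testing only $a=v_i$ (which, as you note, would only place $v_j$ beyond the perpendicular bisector, a strictly weaker conclusion than membership in $\hp{v_i}{v_{i+1}}$), and in the ``if'' direction you correctly use convexity of the closed halfplane to place the arbitrary later point $c$, not just the later vertices, inside $\hp{v_i}{v_{i+1}}$. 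Your parenthetical alternative for the ``if'' direction is essentially the route the cited sources take. The only cosmetic caveat is that your argument presumes $\hp{u}{w}$ is the \emph{closed} halfplane $\{x:\langle x-w,\,w-u\rangle\ge 0\}$; that is the reading the Fact requires (otherwise already a collinear path would violate it), but it is worth stating explicitly since the paper's definition does not say ``closed.''
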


\begin{figure}[tb]
  \hfill \subfloat[]{ \includegraphics[]{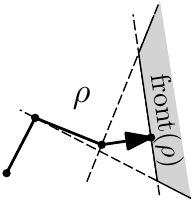} \label{fig:sap-front} }
  \hfill
  \subfloat[]{ \includegraphics[page=1]{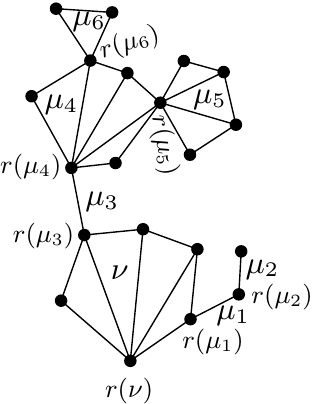} \label{fig:triang-cactus}}
  \hfill
  \subfloat[]{ \includegraphics[page=2]{fig/triang-cactus} \label{fig:triang-cactus:bctree}}
  \hfill\null
  \caption{\protect\subref{fig:sap-front} \sap $\rho$ and $\front(\rho)$ (gray).
    \protect\subref{fig:triang-cactus},
    \protect\subref{fig:triang-cactus:bctree}: downward-triangulated binary cactus and
    the corresponding \bc-tree. $B$-nodes are black, $C$-nodes white. }
  \end{figure}
  \noindent We shall denote the reverse of a path~$\rho$ by~$\rho^{-1}$. 
 Let $\rho =
  (v_1, v_2, \ldots, v_k)$ be a \sap. Define
  $\front(\rho) = \bigcap_{i=1}^{k-1}~\hp{v_i}{v_{i+1}}$, see also Fig.~\ref{fig:sap-front}.
Using Fact~\ref{rem:sap-char}, we can decide
 whether a concatenation of two paths is \sa.
\begin{fact}  Let $\rho_1=(v_1$, \ldots, $v_k)$ and $\rho_2 = (v_k, v_{k+1}$, \ldots, $v_m)$ be
  \sa paths. The path $\concat{\rho_1}{\rho_2}:= (v_1, \ldots, v_k, v_{k+1}$, $\ldots,
  v_m)$ is self-approaching if and only if $\rho_2 \subseteq \front(\rho_1)$.
 \label{rem:contpath}
\end{fact}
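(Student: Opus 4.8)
The plan is to reduce everything to the vertex-wise criterion of Fact~\ref{rem:sap-char}, applied to the concatenated path $\concat{\rho_1}{\rho_2} = (v_1,\ldots,v_m)$, and then to sort the pairs of indices appearing in that criterion according to where they sit relative to the join vertex~$v_k$. By Fact~\ref{rem:sap-char}, $\concat{\rho_1}{\rho_2}$ is self-approaching \ifao $v_j \in \hp{v_i}{v_{i+1}}$ holds for every $1 \le i < j \le m$. First I would split these pairs into three groups: (a) $1 \le i < j \le k$; (b) $k \le i < j \le m$; and (c) $1 \le i \le k-1$ together with $k+1 \le j \le m$. Group~(a) is literally the set of conditions that Fact~\ref{rem:sap-char} imposes on~$\rho_1$, and group~(b) is literally the set of conditions it imposes on~$\rho_2$ (after shifting indices so that $\rho_2$ starts at position~$1$); both hold by hypothesis. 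So the question collapses to group~(c).

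Next I would observe that the group-(c) conditions say precisely that $v_j \in \bigcap_{i=1}^{k-1}\hp{v_i}{v_{i+1}} = \front(\rho_1)$ for every $j \in \{k+1,\ldots,m\}$, i.e.\ that all vertices of~$\rho_2$ except possibly~$v_k$ lie in $\front(\rho_1)$. Two short remarks then finish the argument. First, $\front(\rho_1)$ is an intersection of halfplanes, hence convex, so it contains the polygonal curve $\rho_2$ as soon as it contains all vertices $v_k, v_{k+1}, \ldots, v_m$ of~$\rho_2$ (the converse being trivial). Second, $v_k \in \front(\rho_1)$ is automatic: taking $j = k$ in Fact~\ref{rem:sap-char} for~$\rho_1$ yields $v_k \in \hp{v_i}{v_{i+1}}$ for every $i \le k-1$. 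Combining these, the group-(c) conditions are equivalent to $\rho_2 \subseteq \front(\rho_1)$, and, since groups (a) and (b) hold by assumption, $\concat{\rho_1}{\rho_2}$ is self-approaching \ifao $\rho_2 \subseteq \front(\rho_1)$.

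I do not expect a genuine obstacle here: everything is bookkeeping layered on top of Fact~\ref{rem:sap-char}. The only points needing a little care are getting the three-way index partition exactly right, so that no pair $(i,j)$ with $i<j$ is missed or double-counted, and invoking convexity of $\front(\rho_1)$ together with the free membership $v_k \in \front(\rho_1)$ --- which is what upgrades ``every vertex of $\rho_2$ lies in $\front(\rho_1)$'' to ``$\rho_2 \subseteq \front(\rho_1)$''.
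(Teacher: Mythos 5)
Your argument is correct, and it follows exactly the route the paper intends: the paper states this as a Fact with no written proof, introducing it with ``Using Fact~\ref{rem:sap-char}, we can decide whether a concatenation of two paths is self-approaching,'' so the intended derivation is precisely your bookkeeping over the pairs $(i,j)$ in the criterion of Fact~\ref{rem:sap-char}. Your index partition is exhaustive, and the two finishing observations (convexity of $\front(\rho_1)$ and the free membership $v_k \in \front(\rho_1)$) correctly upgrade the vertex conditions to $\rho_2 \subseteq \front(\rho_1)$.
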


A path~$\rho$ has \emph{increasing chords} if for any points $a,b,c,d$ in this
order along~$\rho$, it is $\dist{b}{c} \leq \dist{a}{d}$. A path has increasing
chords \ifao it is \sa in both directions. The following result is easy to see.

\newcommand{\LemPathIcText}{Let~$\rho = (v_1, \dots, v_k)$ be a path such that for any~$i<j$, $i,j \in \{1,
  \dots, k-1\}$, it is $\ang{\vecl{v_i v_{i+1}}}{\vecl{v_j v_{j+1}}} \leq
90\dg$. Then, $\rho$ has increasing chords.}
\begin{lemma} 
\label{lem:path-ic90} 
\LemPathIcText
\end{lemma}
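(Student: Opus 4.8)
The plan is to verify the two one-directional self-approaching conditions from Fact~\ref{rem:sap-char} and then invoke the equivalence ``increasing chords $\iff$ self-approaching in both directions.'' So I must show that both $\rho = (v_1, \dots, v_k)$ and its reverse $\rho^{-1} = (v_k, v_{k-1}, \dots, v_1)$ are self-approaching.

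First I would handle the forward direction. By Fact~\ref{rem:sap-char}, it suffices to show that for all $1 \le i < j \le k$, the vertex $v_j$ lies in the halfplane $\hp{v_i}{v_{i+1}}$, i.e., that the angle between $\vecl{v_i v_{i+1}}$ and $\vecl{v_{i+1} v_j}$ is at most $90\dg$ (equivalently, $\langle \vecl{v_i v_{i+1}}, \vecl{v_{i+1} v_j}\rangle \ge 0$). The key observation is that $\vecl{v_{i+1} v_j}$ decomposes as the sum of the consecutive edge vectors $\vecl{v_{i+1} v_{i+2}} + \vecl{v_{i+2} v_{i+3}} + \dots + \vecl{v_{j-1} v_j}$. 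By hypothesis, each of these edge vectors $\vecl{v_\ell v_{\ell+1}}$ with $i+1 \le \ell \le j-1$ forms an angle of at most $90\dg$ with $\vecl{v_i v_{i+1}}$, so each has a nonnegative inner product with $\vecl{v_i v_{i+1}}$. Summing, $\langle \vecl{v_i v_{i+1}}, \vecl{v_{i+1} v_j}\rangle \ge 0$, which is exactly what is needed. (The degenerate case $j = i+1$ is trivial since $v_{i+1} \in \hp{v_i}{v_{i+1}}$ on its boundary.)

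For the reverse direction, the same argument applies after relabeling: in $\rho^{-1}$ the consecutive edge vectors are $\vecl{v_{\ell+1} v_\ell} = -\vecl{v_\ell v_{\ell+1}}$, and the hypothesis $\ang{\vecl{v_i v_{i+1}}}{\vecl{v_j v_{j+1}}} \le 90\dg$ is symmetric under negating both vectors, so it transfers verbatim to the reversed edge vectors. Hence $\rho^{-1}$ also satisfies the hypothesis of the lemma (with indices read in the opposite order) and the forward argument shows it is self-approaching. Since $\rho$ is self-approaching in both directions, it has increasing chords.

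The only mild subtlety — not really an obstacle — is bookkeeping with the index ranges and the boundary/degenerate cases (consecutive vertices, and making sure the hypothesis is quoted for exactly the pairs of edges that appear between $v_{i+1}$ and $v_j$). The geometric content is entirely captured by the elementary fact that a sum of vectors each making an angle $\le 90\dg$ with a fixed vector again makes an angle $\le 90\dg$ with it; everything else is translating between the angle formulation, the inner-product formulation, and the halfplane formulation of Fact~\ref{rem:sap-char}.
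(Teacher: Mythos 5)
Your proposal is correct and is essentially the paper's argument in different clothing: the paper proves $v_j \in \hp{v_i}{v_{i+1}}$ by induction on $j$, sliding a parallel halfplane forward one edge at a time, which is exactly the unrolled form of your telescoping sum of nonnegative inner products $\langle \vecl{v_i v_{i+1}}, \vecl{v_\ell v_{\ell+1}}\rangle \ge 0$. The reduction to the forward direction via the symmetry of the hypothesis under reversal is also how the paper handles $\rho^{-1}$.
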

\begin{proof}
  For any~$j > i$, $i,j \in \{1, \dots, k-1\}$, it is
  $\ang{\vecl{v_{j+1} v_j}}{\vecl{v_{i+1} v_i}} \leq 90\dg$. Thus,
  the condition of the lemma also holds for~$\rho^{-1}$, and by
  symmetry it is sufficient to prove that~$\rho$ is \sa.

  We claim that for each $i \in \{ 1, \dots, k-1\}$ and each~$j \in \{
  i+1, \dots, k\}$, it is $v_j \in \hp{v_i}{v_{i+1}}$.  Once the claim
  is proved, it follows from Fact~\ref{rem:sap-char} that $\rho$ is \sa. 
  For the proof of the claim let~$i \in \{ 1, \dots, k-1\}$ be
  arbitrary and fixed.  It suffices to show that $v_{i+2}$, \dots,
  $v_k \in \hp{v_i}{v_{i+1}}$.

  First consider $v_{i+2}$.  By the condition of the lemma, it is
  $\ang{\vec{v_i v_{i+1}}}{\vec{v_{i+1} v_{i+2}}} \leq 90\dg$.
  Therefore, $v_{i+2} \in \hp{v_i}{v_{i+1}}$.  Now assume $v_j \in
  \hp{v_i}{v_{i+1}}$ for some $j \in \{ i+2, \dots, k-1\}$. We show
  $v_{j+1} \in \hp{v_i}{v_{i+1}}$. Consider the halfplane~$h \subseteq
  \hp{v_i}{v_{i+1}}$ whose boundary is parallel to that of
  $\hp{v_i}{v_{i+1}}$ and contains~$v_j$.  Since $\angle(\vec{v_i
    v_{i+1}}, \vec{v_j v_{j+1}}) \leq 90\dg$, it is $v_{j+1} \in h
  \subseteq \hp{v_i}{v_{i+1}}$.  
\end{proof}

\newcommand{\bdepth}[1]{\textnormal{depth}_B(#1)}
\newcommand{\cdepth}[1]{\textnormal{depth}_C(#1)} Let~$G=(V,E)$ be a
connected graph.  A \emph{separating $k$-set} is a set of $k$ vertices whose removal
disconnects the graph.  A vertex forming a separating $1$-set is called
\emph{cutvertex}.  A graph is \emph{$c$-connected} if it does not admit a
separating $k$-set with $k \le c-1$; $2$-connected graphs are also called
\emph{biconnected}.  A connected graph is biconnected if and only if it does not
contain a cutvertex. A \emph{block} is a
maximal biconnected subgraph.  The \emph{block-cutvertex tree} (or
\emph{\bc-tree})~$T_G$ of~$G$ has a \emph{$B$-node} for each block of~$G$, a
\emph{$C$-node} for each cutvertex of $G$ and, for each block~$\nu$ containing a
cutvertex~$v$, an edge between the corresponding $B$- and $C$-node.
We associate $B$-nodes with their corresponding blocks and $C$-nodes with their
corresponding cutvertices.

The following notation follows the work of Angelini et
al.~\cite{angelini2010algorithm}.  Let~$T_G$ be rooted at some block~$\nu$
containing a non-cutvertex (such a block~$\nu$ always exists).  For each block
$\mu \neq \nu$, let~$\pred{\mu}$ denote the \emph{parent block} of~$\mu$, i.e.,
the grandparent of~$\mu$ in~$T_G$. Let~$\gpred{2}{\mu}$ denote the
parent block of~$\pred{\mu}$ and, generally,~$\gpred{i+1}{\mu}$ the parent block of~$\gpred{i}{\mu}$. Further, we define the \emph{root
  $\rt{\mu}$} of $\mu$ as the cutvertex contained in both~$\mu$
and~$\pred{\mu}$. Note that~$\rt{\mu}$ is the parent of~$\mu$ in~$T_G$.  In
addition, for the root node $\nu$ of $T_G$, we define~$\rt{\nu}$ to be some
non-cutvertex of~$\nu$.  Let~$\bdepth{\mu}$ denote the number of $B$-nodes on
the $\nu \mu$-path in~$T_G$ minus~1, and let~$\cdepth{\rt{\mu}} =
\bdepth{\mu}$. If~$\mu$ is a leaf of~$T_G$, we call it a \emph{leaf block}.

A \emph{cactus} is a graph in which every edge is part of at most one
cycle. Note that every cactus is outerplanar. In a \emph{binary}
cactus every cutvertex is part of exactly two blocks.  For a binary
cactus~$G$ with a block~$\mu$ containing a cutvertex $v$,
let~$\subcactb{v}{\mu}$ denote the maximal connected subgraph
containing~$v$ but no other vertex of~$\mu$.  We say
that~$\subcactb{v}{\mu}$ is a \emph{subcactus} of~$G$.
For a fixed cactus root, the block~$\mu$ containing~$v$, such that $v
\neq \rt{\mu}$, is unique, and we write~$\subcact{v}$
for~$\subcactb{v}{\mu}$.

A \emph{triangulated} cactus is a cactus together with additional
edges, which make each of the cactus blocks internally
triangulated. A \emph{triangular fan with vertices
  $V_t= \{v_0, v_1, \dots, v_k\}$ and root~$v_0$} is a graph on~$V_t$
with edges $v_i v_{i+1}$, $i=1, \dots, k-1$, as well as~$v_0 v_i$,
$i=1, \dots, k$. Let us consider a special kind of triangulated
cactuses, each of whose blocks~$\mu$ is a triangular fan with root~$\rt{\mu}$. 
We call such a cactus \emph{downward-triangulated} and every edge of a
block~$\mu$ incident to~$\rt{\mu}$ a \emph{downward}
edge. Fig.~\ref{fig:triang-cactus} and~\ref{fig:triang-cactus:bctree} show a
downward-triangulated binary cactus and the corresponding \bc-tree.

For a fixed straight-line drawing of a binary cactus~$G$, we define
the set of \emph{upward} directions~$\upwdir{G} = \{ \vec{\rt{\mu} v}
\mid \mu \textnormal{ is a block of } G \textnormal{ containing } v,
\; v \neq \rt{\mu} \}$ and the set of \emph{downward}
directions~$\downwdir{G} = \{ \vec{uv} \mid \vec{vu} \in \upwdir{G}
\}$. 

\section{Graphs with Self-Approaching Drawings}
\label{sec:3conn}

A natural approach to construct (not necessarily plane) \sads is to construct a
\sad of a spanning subgraph. For instance, to draw a graph~$G$ containing a
Hamiltonian path~$H$ with increasing chords, we simply draw~$H$ consecutively on
a line. In this section, we consider 3-connected planar graphs and the special
case of triangulations, which addresses an open question of Alamdari et al.~\cite{acglp-sag-12}. These graphs are known to have a spanning binary
cactus~\cite{angelini2010algorithm, Moitra2008}. Angelini et
al.~\cite{angelini2010algorithm} showed that every triangulation has a spanning
downward-triangulated binary cactus.

\subsection{Increasing-chord drawings of triangulations}
\label{subsec:ic-triang}
We show that every downward-triangulated binary
cactus has an \ic drawing.
The construction is similar to the one of the greedy
drawings of binary cactuses in the two proofs of the Papadimitriou-Ratajczak
conjecture~\cite{Moitra2008, angelini2010algorithm}.  Our proof is by induction
on the height of the \bc-tree. We show that $G$ can be drawn
such that all downward edges are almost vertical and the
remaining edges almost horizontal. Then, for vertices~$s,t$ of~$G$, an
$st$-path with increasing chords goes downwards to some block~$\mu$,
then sideways to another cutvertex of~$\mu$ and, finally, upwards
to~$t$. Let~$\vright,\vup$ be vectors $\colvect{1}{0}$, $\colvect{0}{1}$. 

\begin{theorem}
  \label{lem:triang-ic}
  Let~$G=(V,E)$ be a downward-triangulated binary cactus. For any~$0\dg < \eps <
  90\dg$, there exists an increasing-chord drawing~$\Gamma_\eps$ of~$G$, such
  that for each vertex~$v$ contained in some block~$\mu$, $v \neq r(\mu)$, the
  angle formed by $\vec{r(\mu) v}$ and~$\vup$ is at
  most~$\frac{\eps}{2}$. 
\end{theorem}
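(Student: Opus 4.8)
The plan is to proceed by induction on the height of the \bc-tree $T_G$, as announced in the text. The base case is a single block $\nu$, which is a triangular fan with root $\rt{\nu}$. I would place $\rt{\nu}$ at the origin and draw the path $v_1, \dots, v_k$ of the fan (the non-root vertices in their natural order around the fan) as an almost-horizontal polygonal path, while keeping every edge $\rt{\nu} v_i$ within angle $\eps/2$ of $\vup$. Concretely, the $v_i$ can be placed on a slightly concave chain high above the origin so that all ``sideways'' edges $v_i v_{i+1}$ make pairwise angles at most some small $\delta(\eps) \le 90\dg$ with each other, and all ``downward'' edges $\rt{\nu} v_i$ are within $\eps/2$ of vertical. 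Increasing-chord-ness of all the needed $st$-paths in this single block then follows from Lemma~\ref{lem:path-ic90}: a path from $v_i$ to $v_j$ either runs monotonically along the sideways chain, or goes $v_i \to \rt{\nu} \to v_j$, and in each case the consecutive edge vectors turn by at most $90\dg$ (here one uses that downward edges are nearly antiparallel and sideways edges nearly parallel, so a ``down then up'' turn is close to $180\dg - 2\cdot(\eps/2)$ measured the right way, i.e.\ the relevant angle between $\vec{v_i \rt{\nu}}$ reversed and $\vec{\rt{\nu} v_j}$ is small).

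For the inductive step, let $\nu$ be the root block of $T_G$ and let $v_1, \dots, v_k$ be its non-root vertices as above; some of these $v_i$ are cutvertices carrying subcactuses $\subcact{v_i} = \subcactb{v_i}{\nu}$ of strictly smaller \bc-tree height. I would first lay out $\nu$ itself exactly as in the base case, obtaining positions for $\rt{\nu}$ and all $v_i$ with the stated angle property and with all sideways edges within a tiny angular cone. Then, for each cutvertex $v_i$, apply the induction hypothesis to $\subcact{v_i}$ with a suitably small parameter $\eps_i \ll \eps$ to get a drawing $\Gamma_{\eps_i}$ of that subcactus in which its own downward edges are within $\eps_i/2$ of vertical; scale it down and translate it so that its copy of $v_i$ coincides with the already-placed $v_i$, and rotate it by a small angle chosen so that the edge of $\subcact{v_i}$ at $v_i$ leaving ``upward into the subcactus'' points in a direction that, together with the edge $\rt{\nu} v_i$, still keeps the whole picture consistent — in fact the cleanest choice is to keep each subcactus drawn with its vertical axis essentially aligned with $\vup$, placed just above $v_i$, and scaled small enough that it does not collide with anything else (disjoint small bounding boxes). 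Scaling is harmless for the increasing-chord property since it is scale-invariant. The angle condition demanded by the theorem for vertices inside $\subcact{v_i}$ is inherited directly from the induction hypothesis, because the root of each inner block is unchanged and edges are only scaled/translated, not rotated (or rotated by a negligible amount absorbed into $\eps_i$).

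The crux is verifying that an arbitrary $st$-path is increasing-chord, where $s$ and $t$ may lie in different subcactuses. Such a path has the canonical shape described in the text: from $s$ it climbs \emph{down} through the subcactus containing it along downward edges to reach some common block $\mu$ (possibly $\mu = \nu$), travels \emph{sideways} across $\mu$ from one cutvertex to another, then climbs \emph{up} into the subcactus containing $t$ along downward edges. I would split this path at the sideways portion and use Fact~\ref{rem:contpath} (concatenation of self-approaching paths) twice, plus the reversal symmetry, so it suffices to show: (a) the ``down'' part is self-approaching, (b) the ``up'' part reversed is self-approaching, and (c) the concatenation points lie in the appropriate $\front(\cdot)$. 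For (a) and (b) I would again invoke Lemma~\ref{lem:path-ic90}: along a monotone down-path all edge vectors point within $\eps/2$ of $-\vup$ (summing the per-level error budgets $\eps_i$, which I will have chosen geometrically decreasing so the total stays below $\eps/2$), so consecutive turns are tiny; and the single sideways edge makes an angle close to $90\dg$ with vertical, which is where the condition $\le 90\dg$ is used and why the angular slack $\eps/2$ rather than $\eps$ is reserved. The main obstacle — and the step I expect to need the most care — is exactly this error-budget bookkeeping: choosing the sequence of parameters $\eps_i$ for the recursive calls and the drawing of the fan $\nu$ so that (i) the accumulated deviation of any down-path from the vertical never exceeds $\eps/2$, (ii) the accumulated deviation of any sideways edge from the horizontal plus the down/up contributions still leaves every pairwise edge angle at most $90\dg$, and (iii) the nested subcactus drawings are scaled down enough to be vertex-disjoint and non-overlapping so the result is a valid straight-line drawing with distinct points. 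Once the quantitative choices are pinned down, each geometric claim reduces to an elementary angle computation and an application of Lemma~\ref{lem:path-ic90}, Fact~\ref{rem:sap-char}, or Fact~\ref{rem:contpath}.
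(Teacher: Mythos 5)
Your overall architecture matches the paper's: induction on the height of the BC-tree, the root fan drawn on a near-vertical arc, canonical down--sideways--up paths, and Lemma~\ref{lem:path-ic90} plus Fact~\ref{rem:contpath} to certify them. The gap is in the one step you defer as ``elementary angle computation'': guaranteeing that for every downward edge $uv$ on the descending part $\rho_i$ inside $G_i$, the \emph{entire} drawing of every other subcactus $G_j$ (and all fan vertices $v_l$) lies in $\hp{u}{v}$, i.e.\ $\Gamma_j \subseteq \front(\rho_i)$. Your proposed mechanism --- keep each subdrawing's axis aligned with $\vup$ and shrink it into a small bounding box disjoint from the others --- does not achieve this. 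Disjointness of bounding boxes controls overlap, not where the normals of one subdrawing's edges travel; and with $\vup$-alignment the normal at a peripheral cutvertex $v_i$ is a nearly horizontal line through (a point near) $v_i$, while the other $v_l$ on the arc sit \emph{higher} than $v_i$ by an amount independent of how small you scale the subdrawings, so they fall on the wrong side of that normal and the concatenated path is not self-approaching. This is precisely why the paper rotates each $\Gamma_{i,\eps'}$ to align with $\vec{v_0 v_i}$ (so normals on its downward edges are nearly tangent to the arc at $v_i$, leaving all other $v_l$ strictly inside) and, more importantly, confines $\Gamma_{i,\eps'}$ not merely to a small box but to the region $\Diamond_i = \Lambda_i \cap h_{i-1}^r \cap h_{i+1}^l$, where $h_i^l, h_i^r$ are halfplanes orthogonal to the boundary rays of the cone $\Lambda_i$; the containment $\Diamond_j \subseteq h_i^r \cap h_i^l \subseteq \hp{u}{v}$ for $j \neq i$, together with the calibration $\eps' = \alpha/2$ that lets the chords $v_i v_{i\pm 1}$ clear the tilted halfplane boundaries, is the substance of the proof and is absent from your proposal.

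A smaller error: in your base case you suggest that a $v_i$-to-$v_j$ connection could alternatively go $v_i \to \rt{\nu} \to v_j$ because ``the relevant angle is small.'' That path is not self-approaching (the distance to $v_j$ increases on the way down to the root, and the actual edge vectors $\vec{v_i \rt{\nu}}$ and $\vec{\rt{\nu} v_j}$ form an angle near $180\dg$, so Lemma~\ref{lem:path-ic90} does not apply); within a single fan only the sideways chain works, which is why the construction keeps it nearly straight.
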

\begin{proof}
\begin{figure}[tb] \hfill
\subfloat[]{ \includegraphics[page=1]{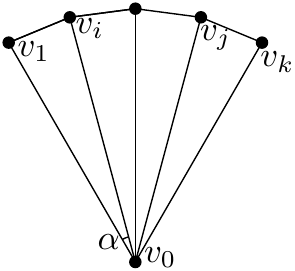}
    \label{fig:triang-induction:1}}\hfill
\subfloat[]{ \includegraphics[page=2]{fig/triang-induction.pdf}
    \label{fig:triang-induction:2}}\hfill
\subfloat[]{ \includegraphics[page=3]{fig/triang-induction.pdf}\hfill\null
  \label{fig:triang-induction:safe-region}}\hfill\null
\caption{Drawing a triangulated binary cactus with increasing chords
  inductively. The drawings~$\Gamma_{i,\eps'}$ of the subcactuses,
  $\eps' = \frac{\eps}{4k}$, are contained inside the gray cones. It
  is $\beta = 90\dg - \eps'$, $\gamma = 90\dg + \eps'/2$.}
\end{figure}
Let~$G$ be rooted at block~$\nu$. As our base case, let $\nu=G$
be a triangular fan with vertices~$v_0, v_1, \ldots, v_k$ and root~$v_0 =
r(\nu)$. We draw~$v_0$ at the origin and distribute $v_1$, \ldots, $v_k$ on the
unit circle, such that $\ang{\vup}{\vec{v_0 v_1}}= k \alpha/2$ and
$\ang{\vec{v_0 v_i}}{\vec{v_0 v_{i+1}}} = \alpha$, $\alpha = \eps/k$; see
Fig.~\ref{fig:triang-induction:1}. By Lemma~\ref{lem:path-ic90}, path~$(v_1,
\ldots, v_k)$ has increasing chords.

Now let~$G$ have multiple blocks. We draw the root block~$\nu$, $v_0 = r(\nu)$,
as in the previous case, but with $\alpha = \frac{\eps}{2k}$. Then, for each $i
= 1, \ldots, k$, we choose $\eps' = \frac{\eps}{4k}$ and draw the
subcactus~$G_i = G_\nu^{v_i}$ rooted at~$v_i$ inductively, such that the corresponding
drawing~$\Gamma_{i, \eps'}$ is aligned at $\vec{v_0 v_i}$ instead of~$\vup$; see
Fig~\ref{fig:triang-induction:2}. 
Note that~$\eps'$ is the angle of the cones (gray) containing~$\Gamma_{i, \eps'}$. 
Obviously, all downward edges of $G$ form angles at
most~$\frac{\eps}{2}$ with~$\vup$.

We must be able to reach any $t$ in any~$G_j$ from any~$s$ in
any~$G_i$ via an \ic path~$\rho$.  To achieve this, we make sure that
no normal on a downward edge of~$G_i$ crosses the drawing of~$G_j$, $j
\neq i$. Let~$\Lambda_i$ be the cone with apex~$v_i$ and angle~$\eps'$
aligned with $\vec{v_0 v_i}$, $ v_0 \not \in \Lambda_i$ (gray regions
in Fig.~\ref{fig:triang-induction:2}). Let~$s_i^l$ and~$s_i^r$ be the
left and right boundary rays of~$\Lambda_i$ with respect to~$\vec{v_0
  v_i}$, and~$h_i^l$, $h_i^r$ the halfplanes with boundaries
containing~$v_i$ and orthogonal to $s_i^l$ and~$s_i^r$ respectively,
such that $v_0 \in h_i^l \cap h_i^r$.   Define $\Diamond_i = \Lambda_i
\cap h_{i-1}^r \cap h_{i+1}^l$ (thin blue quadrilateral in
Fig.~\ref{fig:triang-induction:safe-region}), and
analogously~$\Diamond_j$ for~$j \neq i$. It holds $\Diamond_j
\subseteq h_{i}^r \cap h_{i}^l$ for each $i \ne j$.  We now scale each drawing
$\Gamma_{i,\eps'}$ such that it is contained in $\Diamond_i$.  In particular, for any downward edge $uv$
in~$\Gamma_{i, \eps'}$, we have $\Gamma_{j,\eps'} \subseteq \Diamond_j \subseteq \hp{u}{v}$ for $j \ne i$.
We claim that the resulting drawing of $G$ is an increasing-chord drawing.

Consider vertices~$s$,$t$ of~$G$.
If $s$ and $t$ are contained in the same subgraph $G_i$, an \ic $st$-path in $G_i$ exists by induction.
If $s$ is in $G_i$ and $t$ is $v_0$, let~$\rho_i$ be the $s v_i$-path in~$G_i$ that
uses only downward edges. By Lemma~\ref{lem:path-ic90}, path $\rho_i$ is \ic and remains so after adding edge~$v_i v_0$.

Finally, assume $t$ is in $G_j$ with $j \ne i$.
Let~$\rho_j$ be the
$t v_j$-path in~$G_j$ that uses only downward edges. Due to the choice
of~$\eps'$, $h_{i}^r \cap h_{i}^l \subseteq \front(\rho_i)$ contains $v_1,
\ldots, v_k$ in its interior. Consider the path $\rho' = (v_i, v_{i+1}, \ldots,
v_j)$. It is \sa by Lemma~\ref{lem:path-ic90}; also, $\rho' \subseteq \front(\rho_i)$
and~$\rho_j \subseteq \front(\rho')$. It also holds~$\rho_j \subseteq \Diamond_j
\subseteq \front(\rho_i)$. Fact~\ref{rem:contpath} lets us
concatenate $\rho_i$, $\rho'$ and~$\rho_j^{-1}$ to a \sa path. By a symmetric argument, it is also \sa 
in the opposite direction and, thus, is \ic. 
\end{proof}

Since every triangulation has a spanning downward-triangulated binary
cactus~\cite{angelini2010algorithm}, this implies that planar
triangulations admit increasing-chord drawings.

\begin{corollary}
  \label{cor:triangulation-increasing-chord}
  Every planar triangulation admits an increasing-chord drawing.
\end{corollary}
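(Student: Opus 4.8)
The plan is to reduce the corollary to Theorem~\ref{lem:triang-ic} via the structural result of Angelini et al.~\cite{angelini2010algorithm}, which guarantees that every planar triangulation~$G$ contains a spanning downward-triangulated binary cactus~$C$. First I would invoke that result to fix such a subgraph~$C$, together with a choice of root block~$\nu$ of its \bc-tree so that the notions of parent block, root cutvertex, and downward edge used in Theorem~\ref{lem:triang-ic} are well-defined for~$C$.

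Next I would apply Theorem~\ref{lem:triang-ic} to~$C$ with any fixed~$\eps$, say~$\eps = 90\dg$ (the precise value is irrelevant here, since we only need \emph{some} \ic drawing of~$C$), obtaining a straight-line drawing~$\Gamma_\eps$ of~$C$ that is increasing-chord. I would then define the drawing~$\Gamma$ of~$G$ itself by placing every vertex of~$G$ at the position prescribed by~$\Gamma_\eps$ and drawing every edge of~$G$ as a straight-line segment; this is well-defined because $V(C) = V(G)$. The key observation is that for any pair of vertices~$s,t \in V(G) = V(C)$, the \ic $st$-path provided by~$\Gamma_\eps$ is a path in~$C$, hence also a path in the supergraph~$G$, and its geometric realization in~$\Gamma$ coincides with its realization in~$\Gamma_\eps$ (same vertex positions, same straight segments). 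Since the increasing-chord property of a path depends only on the geometry of that path's curve, it is preserved. Therefore every pair of vertices of~$G$ is connected by an increasing-chord path in~$\Gamma$, so~$\Gamma$ is an increasing-chord drawing of~$G$.

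There is essentially no obstacle here: the corollary is a direct consequence of Theorem~\ref{lem:triang-ic} combined with the known spanning-subgraph structure, and the only point requiring a word of care is the (routine) remark that adding edges to a drawing cannot destroy an existing increasing-chord path, since the property is witnessed locally by the path alone. Consequently the proof is a one-line reduction, and I would write it simply as: by~\cite{angelini2010algorithm} every planar triangulation has a spanning downward-triangulated binary cactus, which by Theorem~\ref{lem:triang-ic} admits an increasing-chord drawing; placing the triangulation's vertices accordingly yields an increasing-chord drawing of the triangulation.
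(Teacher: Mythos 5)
Your proposal is correct and matches the paper's own (one-sentence) argument exactly: cite the spanning downward-triangulated binary cactus of Angelini et al., apply Theorem~\ref{lem:triang-ic}, and observe that adding the remaining edges of the triangulation cannot destroy the increasing-chord paths already present in the spanning subgraph. The only nit is that Theorem~\ref{lem:triang-ic} requires $0\dg < \eps < 90\dg$, so you should pick, say, $\eps = 45\dg$ rather than $\eps = 90\dg$; as you note, the value is immaterial.
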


\subsection{Exponential worst case resolution}
\newcommand{\dircone}[1]{U_{#1}}
The construction for a spanning downward-triangulated binary cactus in
Section~\ref{subsec:ic-triang} requires exponential area. In this
section, we show that we cannot do better in the worst case even for
strongly monotone drawings of downward-triangulated binary
cactuses. Recall that \ic drawings are strongly monotone.

The following lemma describes directions of certain edges in a greedy
or monotone drawing of a cactus.

  \begin{figure}[tb]
    \hfill
    \subfloat[]{ \includegraphics[page=2]{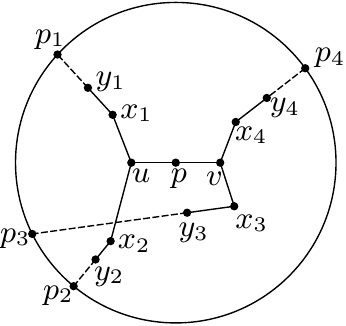} \label{fig:diverge}}\hfill
    \subfloat[]{\includegraphics[page=1]{fig/dir-lemma} \label{fig:slope-disj}}
    \hfill\null
    \caption{\protect\subref{fig:diverge}~Proof of
      Lemma~\ref{lem:diverge}; \protect\subref{fig:slope-disj}~proof of
      Lemma~\ref{lem:cact:slope-disj}.}
  \end{figure}

\begin{figure}[tb]
  \hfill
  \subfloat[]{\includegraphics[page=1]{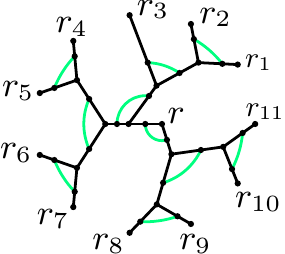} \label{fig:strmon-cact-exp:centraltree}}
  \hfill
  \subfloat[]{\includegraphics[page=2]{fig/strmon-cact-exp.pdf} \label{fig:strmon-cact-exp:caterpillar}}
  \hfill
  \subfloat[]{\includegraphics[page=3]{fig/strmon-cact-exp.pdf} \label{fig:strmon-cact-exp:angles:1}}
  \hfill
  \subfloat[]{\includegraphics[page=4]{fig/strmon-cact-exp.pdf} \label{fig:strmon-cact-exp:angles:2}}
  \hfill\null
  \caption{Family of binary cactuses~$G_k$ requiring exponential area
    for any strongly monotone
    drawing. \protect\subref{fig:strmon-cact-exp:centraltree}~Central
    cactus~$G'$;
    \protect\subref{fig:strmon-cact-exp:caterpillar}~binary subcactus~$C_k$
    attached to each vertex of degree~1 of~$G'$.  In a strongly
    monotone drawing of~$G_k$, it must hold:
    \protect\subref{fig:strmon-cact-exp:angles:1}~$|u_2 u_4| \leq |u_2
    v_2| \tan \eps$;
    \protect\subref{fig:strmon-cact-exp:angles:2}~$|u_4 v_4| \leq |u_2 u_4|
    \tan \eps$. }
\end{figure}

\newcommand{\LemDivergeText}{For a cactus~$G=(V,E)$ and two vertices $s,t \in V$,
  consider the cutvertices~$v_1$, \dots, $v_k$ lying on every $s t$-path
  in~$G$ in this order. In any greedy drawing of~$G$, the path
  $(s,v_1, \dots, v_k,t)$ is drawn greedily, i.e., each subpath of
  $(s,v_1$, $\ldots$, $v_k, t)$ is greedy. In any monotone drawing,
  the path $(s,v_1, \dots, v_k,t)$ is monotone. In both cases,
  $\ray{v_1}{s}$ and~$\ray{v_{k}}{t}$ diverge.}
\begin{lemma}
  \label{lem:diverge}
  \LemDivergeText 
\end{lemma}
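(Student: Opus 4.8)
The plan is to unwind the definitions of greedy and monotone drawings restricted to a cactus, and then do a short geometric argument for the divergence statement. First I would observe that because $G$ is a cactus, every $st$-path necessarily passes through the same ordered sequence of cutvertices $v_1, \dots, v_k$ — these are exactly the cutvertices separating $s$ from $t$ — since in a cactus the removal of such a cutvertex disconnects $s$ from $t$ and there is essentially a unique "backbone" of blocks joining them. Now fix any greedy drawing. Given vertices $a$ and $b$ on opposite sides of all the $v_i$ (for instance $a = s$, $b = t$, or $a = v_i$, $b = t$), the greedy $ab$-path guaranteed by the drawing must itself pass through all the intervening cutvertices in order; hence the concatenation $(s, v_1, \dots, v_k, t)$ is a concrete $st$-path, and greediness of \emph{some} $st$-path together with the fact that the cutvertices are forced on every such path means this particular path is greedy. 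For monotone drawings the same reasoning gives that $(s, v_1, \dots, v_k, t)$ is monotone with respect to some direction.

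Next I would extract the local consequences. In the greedy case, greediness of the subpath $(s, v_1, \dots, v_k, t)$ means, reading from $s$: $|v_1 t| < |s t|$ is not quite what we need; rather, I use that each step strictly decreases the distance to every later vertex on the path. In particular the first step $s \to v_1$ decreases the distance to $t$, so $v_1$ lies strictly inside the disk around $t$ of radius $|st|$; equivalently, and this is the useful reformulation, the angle at $v_1$ in the triangle $s v_1 t$ is obtuse, so $s$ and $t$ lie in the halfplane bounded by the line through $v_1$ perpendicular to $v_1 t$, \emph{on the far side}. Symmetrically, reading from $t$ backwards, the step $t \to v_k$ decreases the distance to $s$, so the angle at $v_k$ in triangle $t v_k s$ is obtuse. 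From "angle at $v_1$ obtuse" I get that $\ray{v_1}{s}$ points away from $t$ in the sense that it makes an angle greater than $90\dg$ with $\vec{v_1 v_k}$ (using that $v_k$ lies on the same side as... — here I need the monotone/greedy structure to control where $v_k$ sits relative to $v_1$; in a greedy drawing every later vertex is closer to $t$ than $v_1$, which pins down the geometry enough). Then $\ray{v_1}{s}$ and $\ray{v_k}{t}$ each open away from the segment $v_1 v_k$ by more than a right angle, so extending them they cannot meet — that is, they diverge — which is exactly the conclusion. See Fig.~\ref{fig:diverge} for the configuration.

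In the monotone case I would argue more directly: if $(s, v_1, \dots, v_k, t)$ is monotone with respect to a direction $\vec{d}$, then the scalar products of all edge vectors with $\vec{d}$ are positive, so in particular $\vec{v_1 s}$ and $\vec{v_k t}$ both have \emph{negative} scalar product with $\vec{d}$ (they are reversals of forward edges), hence they both lie in the open halfplane $\{ x : \langle x, \vec d\rangle < 0\}$; moreover $\vec{v_1 v_k}$ has positive scalar product with $\vec d$. A short computation then shows $\ray{v_1}{s}$ and $\ray{v_k}{t}$ cannot intersect: any intersection point $p$ would satisfy $\langle \vec{v_1 p}, \vec d\rangle < 0$ and $\langle \vec{v_k p}, \vec d\rangle < 0$, but $\vec{v_1 p} - \vec{v_k p} = \vec{v_1 v_k}$ has positive scalar product with $\vec d$ — contradiction. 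So the rays diverge.

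The main obstacle I anticipate is not the halfplane bookkeeping but the very first step: making precise and rigorous the claim that in a cactus \emph{every} $st$-path passes through the same ordered cutvertex sequence, and hence that the greedy/monotone $st$-path promised by the drawing is forced to contain $(s, v_1, \dots, v_k, t)$ as a subpath so that we may transfer the greedy/monotone property to it. This is intuitively clear from the block-cutvertex structure — the blocks containing $s$ and $t$ are joined by a unique path in the BC-tree, whose $C$-nodes are precisely $v_1, \dots, v_k$ — but it needs to be stated carefully, and one must also handle degenerate cases (e.g. $s$ or $t$ itself being one of the $v_i$, or $k = 0$). Once that combinatorial fact is in hand, the geometric divergence argument above is routine.
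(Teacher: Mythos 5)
Your combinatorial reduction --- the cactus/\bc-tree structure forces every $v_iv_j$-path through the intermediate cutvertices, so shortcutting the greedy or monotone $st$-path yields a greedy or monotone drawing of the abstract path $(s,v_1,\dots,v_k,t)$ --- is exactly the paper's first step, and the care you call for there is no more than what the paper leaves implicit. The genuine gap is in the greedy half of the geometric argument. You claim that the greedy step condition $\dist{v_1}{t}<\dist{s}{t}$ is \emph{equivalent} to the angle $\angle s v_1 t$ being obtuse. It is not: by Thales, $\angle s v_1 t>90\dg$ means $v_1$ lies inside the circle with \emph{diameter} $st$, a much smaller region than the disk of radius $\dist{s}{t}$ around $t$. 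For instance, with $s=(0,0)$, $t=(2,0)$, $v_1=(1,1.2)$ one has $\dist{v_1}{t}=\dist{v_1}{s}=\sqrt{2.44}<2=\dist{s}{t}$, so the greedy step conditions in both directions hold, yet $\angle s v_1 t$ is acute. Everything downstream of this reformulation (the halfplane containment at $v_1$, the claim that $\ray{v_1}{s}$ makes an angle greater than $90\dg$ with $\vec{v_1 v_k}$) therefore does not follow, and you additionally leave an explicitly acknowledged hole about where $v_k$ sits relative to $v_1$. The paper does not attempt this argument: it invokes Lemma~7 of Angelini et al.~\cite{Angelini2012}, which derives divergence for greedily drawn paths from the greedy conditions on \emph{all} subpaths by a more delicate argument. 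Some substitute for that lemma is required; per-step distance decrease alone does not produce the obtuse angles your plan relies on.

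The monotone case is nearly right but contains a sign error that invalidates the computation as written: $\vec{v_k t}$ is a \emph{forward} edge of the path, so $\langle\vec{v_k t},\vec{d}\rangle>0$, not $<0$, and with both inner products negative your final subtraction yields no contradiction. With the correct signs the argument does close: a common point $p$ of $\ray{v_1}{s}$ and $\ray{v_k}{t}$ would satisfy $\langle\vec{v_1 p},\vec{d}\rangle\le 0$ and $\langle\vec{v_k p},\vec{d}\rangle\ge 0$, while $\langle\vec{v_1 p},\vec{d}\rangle=\langle\vec{v_1 v_k},\vec{d}\rangle+\langle\vec{v_k p},\vec{d}\rangle\ge\langle\vec{v_1 v_k},\vec{d}\rangle>0$ for $k\ge 2$ (for $k=1$ the rays share only their common apex). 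Repaired in this way your monotone argument is actually more explicit than the paper's one-line appeal to the fact that a monotone path cannot turn by $180\dg$ or more; but as submitted, both halves contain errors, and the greedy half is missing its key idea.
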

\begin{proof}
  Let~$v_0 = s$, $v_{k+1} = t$. For $0 \leq i < j \leq k+1$, any $v_i
  v_j$ and~$v_j v_i$-path in~$G$ contains vertices~$v_i$, $v_{i+1}$,
  \dots, $v_j$. Since a path in a greedy drawing of~$G$ remains greedy after replacing
  subpaths by shortcuts, the segments~$s v_1$,
  $v_1 v_2$, \ldots, $v_{k-1} v_k$, $v_k v_t$ form a greedy
  drawing. By Lemma~7 of Angelini et al.~\cite{Angelini2012}, $\ray{v_1}{s}$
  and~$\ray{v_{k}}{t}$ diverge; see
  Fig.~\ref{fig:diverge}. 

  Analogously, a path remains monotone after replacing  subpaths by
  shortcuts. Therefore, in a monotone drawing of~$G$, segments~$s
  v_1$, $v_1 v_2$, \ldots, $v_{k-1} v_k$, $v_k v_t$ form a monotone
  drawing. Since a monotone path cannot make a turn of~$180\dg$ or
  more, $\ray{v_1}{s}$ and~$\ray{v_{k}}{t}$ must diverge. 
\end{proof}

\newcommand{\upwedges}[1]{E_U(#1)}
For the following lemma, consider a greedy or monotone drawing of a
cactus~$G$ with root~$r$.  We define the set of \emph{upward directed}
edges $\upwedges{G} = \{ \rt{\mu} v \mid \mu \textnormal{ is a block
  of } G \allowbreak\textnormal{containing } v, \; v \neq \rt{\mu} \}$.  Note
that if $G$ is not triangulated, some edges in~$\upwedges{G}$ might not be
edges in~$G$.
For cutvertex $u$, let $\dircone{u}$ denote the upward directed edges
of the subcactus rooted at $u$ or, formally, $\dircone{u} =
\upwedges{\subcact{u}}$. Then, the following property
holds. 

\begin{lemma}
  \label{lem:cact:slope-disj}
  In a monotone or greedy drawing of a cactus with root~$r$, consider
  cutvertices $u,v \neq r$, such that the subcactuses $\subcact{u}$
  and~$\subcact{v}$ are disjoint.   Then the edges in $\dircone{u}$ and in $\dircone{v}$ each form a single interval in the circular order induced by their directions.
\end{lemma}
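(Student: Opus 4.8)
The plan is to exploit Lemma~\ref{lem:diverge} together with the structure of the \bc-tree. Fix the monotone (or greedy) drawing and consider the cutvertex $u \neq r$. I want to show that the set $\dircone{u} = \upwedges{\subcact{u}}$, viewed as a set of points on the circle of directions, is contained in an arc of angular width strictly less than some bound, and moreover that this arc is separated from the corresponding arc for $v$. The natural quantity to control is the following: for an edge $xy \in \upwedges{\subcact{u}}$ with $x = \rt{\mu}$ a cutvertex of the subcactus, the vertices $x$ and $y$ both lie on an $st$-path for suitable $s,t$ in $\subcact{u}$ (take $s$ in the subcactus hanging below $y$ and $t = u$ or a vertex below a sibling), so the direction $\vec{xy}$ is tied to the direction of the "spine" path running from deep inside $\subcact{u}$ up to $u$.

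The key steps, in order, are: (1) Observe that every edge $e \in \dircone{u}$ lies on the unique path in the \bc-tree from its block up toward $u$, and that by Lemma~\ref{lem:diverge} applied to an appropriate pair $s,t$ the ray from the lower endpoint of $e$ through $u$ and the ray from $u$ toward a point in a different branch diverge; iterating along the spine shows the directions of the downward edges encountered on a single root-to-$u$ path turn consistently (they cannot turn back on themselves because a monotone/greedy path makes no $180^\circ$ turn), so they occupy one interval. (2) Do an induction on the height of the subcactus: the directions in $\dircone{u}$ are the directions of the edges incident to $u$ in its child block $\mu$, together (after translation to a common apex, or rather observing they are compared cyclically) with the direction-intervals coming from the child subcactuses rooted at the other cutvertices of $\mu$; using that the edges incident to $u$ within the single block $\mu$ are consecutive in the cyclic order around $u$, and that each child interval is "pulled toward" the corresponding edge direction by the divergence property, conclude the union is still a single interval. (3) For the separation of $\dircone{u}$ from $\dircone{v}$: since $\subcact{u}$ and $\subcact{v}$ are disjoint, there is an $st$-path with $s$ deep in $\subcact{u}$ and $t$ deep in $\subcact{v}$ passing through a common ancestor block; apply Lemma~\ref{lem:diverge} to this pair to force the two spine directions to diverge, which places the two intervals in disjoint circular positions. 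Since the statement as phrased only claims each of $\dircone{u}$, $\dircone{v}$ is a single interval, step (3) may in fact be unnecessary for the statement and only (1)–(2) are required; I would present (1)–(2) as the core argument.

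The main obstacle I expect is step (2), making precise the claim that "a union of a consecutive block-fan of directions with several child intervals, each anchored near one of the fan directions, is again an interval." The subtlety is that divergence in Lemma~\ref{lem:diverge} is a statement about rays from two specific points, not directly about edge directions, so I need to translate it: if $\rho$ is the monotone spine path from a deep vertex $w$ in a child subcactus up through cutvertex $c$ of $\mu$ and then along $\mu$ to $u$, monotonicity of $\rho$ forces the total turning to stay within $180^\circ$, and hence the direction of the first edge of $\rho$ (a downward edge in the child) and the direction of the edge $cu$-segment differ by less than $90^\circ$ on each side; combined with the fact that the fan edges at $u$ inside $\mu$ span an arc of width less than $180^\circ$ (else $\mu$ itself could not be drawn with a monotone path between two of its own vertices), one gets that every direction in $\dircone{u}$ lies in a half-circle, within which "single interval" is equivalent to "the complement is connected," and that can be checked by the ordering argument from step (1). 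I would organize the write-up so that the half-circle observation comes first, reducing the circular-interval claim to an ordinary interval-on-a-line claim, and then the divergence lemma supplies the monotone ordering along the spine.
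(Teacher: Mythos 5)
There is a genuine gap, and it begins with a misreading of what the lemma asserts. ``Each form a single interval in the circular order induced by their directions'' is a statement about the \emph{combined} circular order on $\dircone{u} \cup \dircone{v}$: the content is that the $u$-directions and the $v$-directions do not interleave, i.e., one cannot find $x_1y_1, x_2y_2 \in \dircone{u}$ and $x_3y_3, x_4y_4 \in \dircone{v}$ whose directions alternate around the circle. (If the ground set were $\dircone{u}$ alone, ``forms a single interval'' would be vacuous; and this relative reading is exactly how the lemma is applied in Lemma~\ref{lem:strmon-cact:narrow-cone}, where the conclusion is that the vectors of $\dircone{r_1}\cup\dots\cup\dircone{r_7}$ come first and those of $\dircone{r_8}\cup\dots\cup\dircone{r_{11}}$ after.) Your step (3) is therefore not optional --- it \emph{is} the lemma --- and your stated plan to present only (1)--(2) proves none of what is needed. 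Even step (3) as sketched is too weak: Lemma~\ref{lem:diverge} gives divergence of two specific rays attached to one path, which by itself does not place two whole sets of directions into disjoint, non-interleaving arcs.

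For comparison, the paper's proof is a direct four-edge argument establishing exactly the non-alternation property. Given $x_iy_i$ as above, the six-point sequence $(y_i, x_i, u, v, x_j, y_j)$ for $i\in\{1,2\}$, $j\in\{3,4\}$ is a subsequence of every $y_iy_j$-path (all of $x_i, u, v, x_j$ are cutvertices), hence is itself drawn greedily or monotonically by shortcutting; consequently it is non-self-crossing, makes no turn of $180\dg$ or more, its end rays diverge, and neither ray $\ray{x_i}{\vec{x_i y_i}}$ may cross it. Intersecting these rays with a large circle and letting its radius tend to infinity shows that an alternating circular order of the four directions would force a crossing --- contradiction. Nothing like an induction on subcactus height or a containment of $\dircone{u}$ in a half-circle is used, and indeed your auxiliary claim that the fan of segments from the root inside a single block spans less than $180\dg$ is unjustified: a block is biconnected, so monotone paths between its vertices need not pass through the root, and nothing prevents the root of a cycle block from lying inside the convex hull of the remaining vertices, in which case those segments span the full circle. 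The half-circle reduction therefore does not stand, and even if it did, it would not yield the separation between $\dircone{u}$ and $\dircone{v}$ that the lemma actually claims.
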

\begin{proof}
  Consider four pairs of vertices $x_i$, $y_i$, $i = 1, \dots, 4$,
  such that $x_i y_i \in \dircone{u}$ for $i = 1,2$ and $x_i y_i \in
  \dircone{v}$ for $i=3,4$. For~$i=1,2$, $j=3,4$, let~$\rho_{ij}$
  denote the vertex sequence $y_i$, $x_i$, $u$, $v$, $x_j$,
  $y_j$. Since~$x_i$, $u$, $v$, $x_j$ are cutvertices, $\rho_{ij}$ is
  a subsequence of every $y_i y_j$ path. Therefore, each
  such~$\rho_{ij}$ forms a monotone or a greedy drawing of a path,
  respectively. In both cases, $\rho_{ij}$ is non-crossing and cannot
  make a turn of~$180\dg$ or more. Therefore, rays $\ray{x_i}{y_i}$
  and~$\ray{x_j}{y_j}$ must diverge. Finally, neither $\ray{x_i}{y_i}$
  nor $\ray{x_j}{y_j}$ can cross~$\rho_{ij}$.

  We define $p=(u+v)/2$ and choose an arbitrary $R>0$, such that all
  paths~$\rho_{ij}$ are contained inside a circle $C$ with
  center~$p$ and radius~$R$. Let~$p_i$ be the intersection
  of~$\ray{x_i}{\vec{x_i y_i}}$ and~$C$. Assume $p_1$,
  $p_3$, $p_2$, $p_4$ is the counterclockwise order of~$p_i$ on the
  boundary of~$C$; see Fig.~\ref{fig:slope-disj}. Then, for some
  pair~$i,j$, $i \in \{ 1,2 \}$, $j \in \{ 3,4 \}$, there exists a
  crossing of~$\ray{x_i}{y_i}$ or~$\ray{x_j}{y_j}$
  with~$\rho_{ij}$ or with each other; a contradiction. Therefore,
  $p_1$, $p_2$ as well as $p_3$, $p_4$ appear consecutively on the
  boundary of~$C$. For~$R \rightarrow \infty$, $\vec{p p_i}$ becomes
  parallel with $\vec{x_i y_i}$. Therefore, the circular order
  $\vec{x_1 y_1}$, $\vec{x_3 y_3}$, $\vec{x_2 y_2}$, $\vec{x_4 y_4}$
  is not possible, and the statement follows.
\end{proof}

Note that for trees, Angelini et al.~\cite{acbfp-mdg-2012} call this property \emph{slope disjointness}.
Consider the following family of binary cactuses $G_k$.  Let $G'$ be a
rooted binary cactus with exactly eleven vertices~$r_1, \ldots, r_{11}$ of degree~1
and its root~$r$ as the only vertex of degree~2; see
Fig.~\ref{fig:strmon-cact-exp:centraltree}. 
Next, consider cactus~$C_k$ consisting of a chain of~$k$ triangles and
some additional degree-1 nodes as in
Fig.~\ref{fig:strmon-cact-exp:caterpillar}.
We construct~$G_k$ by attaching a copy of~$C_k$ to each~$r_i$
in~$G'$. From now on, consider a strongly monotone
drawing of~$G_k$. 

Using Lemma~\ref{lem:cact:slope-disj} and the pigeonhole
principle, we can show the following fact.

\begin{lemma}
  \label{lem:strmon-cact:narrow-cone}
  For some $r_i$, $i \in \{ 1, \dots, 11 \}$, each pair of directions
  in $\dircone{r_i}$ forms an angle at most $\eps = 360\dg /
  11$. 
\end{lemma}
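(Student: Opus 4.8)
The plan is to read the eleven subcactuses attached to $G'$ as eleven arcs on the circle of edge directions, to make those arcs pairwise disjoint using the slope‑disjointness of Lemma~\ref{lem:cact:slope-disj}, and then to finish with a pigeonhole count.

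First I would record that the subcactuses $\subcact{r_1},\dots,\subcact{r_{11}}$ are pairwise disjoint: since $r_1,\dots,r_{11}$ are distinct vertices of degree $1$ in $G'$, the block $\mu_i$ of $G_k$ with $r_i\neq\rt{\mu_i}$ is the $G'$‑edge incident to $r_i$, so $\subcact{r_i}$ is exactly the copy of $C_k$ attached at $r_i$ together with $r_i$, and distinct copies share no vertex. Hence Lemma~\ref{lem:cact:slope-disj} applies to every pair $\subcact{r_i},\subcact{r_j}$ and tells us that the directions of the edges in $\dircone{r_i}$ and those of the edges in $\dircone{r_j}$ do not interleave on the circle of directions.

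Next I would turn these $\binom{11}{2}$ pairwise separations into a single family of closed arcs $A_1,\dots,A_{11}$ with the directions of $\dircone{r_i}$ contained in $A_i$ and with $A_1,\dots,A_{11}$ pairwise interior‑disjoint. Given such arcs the count is immediate: the circle has total angular measure $360\dg$ and the arcs are interior‑disjoint, so $\sum_{i=1}^{11}|A_i|\le 360\dg$, whence $|A_i|\le 360\dg/11=\eps$ for some $i$. As $\eps<180\dg$, any two directions lying in an arc of width at most $\eps$ enclose an angle of at most $\eps$; therefore every pair of directions in $\dircone{r_i}$ encloses an angle of at most $\eps$, which is the assertion.

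I expect the middle step to be the main obstacle. Lemma~\ref{lem:cact:slope-disj} separates the subcactuses only two at a time, and in general a pairwise non‑interleaving family of direction sets need not admit pairwise disjoint enclosing arcs — the direction set of one subcactus could in principle be forced to ``straddle'' those of two others. Excluding this needs the special structure of $G_k$: all eleven subcactuses hang off the single small cactus $G'$ whose only vertex of degree $2$ is its root $r$, so that vertices in distinct subcactuses are joined by a cutvertex path running through $r$; applying the monotone part of Lemma~\ref{lem:diverge} to such paths (and using the cyclic structure of $G'$, together with the fact that the drawing is even strongly monotone) one argues that in the circular order the direction sets $\dircone{r_1},\dots,\dircone{r_{11}}$ genuinely split into eleven consecutive blocks, which provides the eleven disjoint arcs. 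Once this is settled, the disjointness of the subcactuses and the pigeonhole count are routine.
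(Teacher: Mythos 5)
Your overall strategy coincides with the paper's: both reduce the claim to showing that the eleven direction sets $\dircone{r_1},\dots,\dircone{r_{11}}$ occupy eleven pairwise disjoint arcs of the direction circle and then apply the pigeonhole bound $360\dg/11$. You have also correctly isolated the one step that is not routine, namely that Lemma~\ref{lem:cact:slope-disj} only forbids interleaving two sets at a time, and a family of pairwise non-interleaving sets need not be globally arrangeable into consecutive blocks (a set can be contiguous with respect to each other set separately and still straddle two of them). However, you leave exactly this step unproven: ``one argues that \dots\ the direction sets genuinely split into eleven consecutive blocks'' is the entire content of the lemma beyond the pigeonhole count, and the tool you point to for closing it --- the divergence statement of Lemma~\ref{lem:diverge} applied to cutvertex paths through the root --- is again only a pairwise statement about two rays and does not by itself exclude the straddling configuration.

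The paper closes this gap by exploiting the block hierarchy of $G'$ rather than only its leaves: Lemma~\ref{lem:cact:slope-disj} is applied first to the two cutvertices of the root block of $G_k$, whose subcactuses are disjoint and whose direction sets contain $\dircone{r_1}\cup\dots\cup\dircone{r_7}$ and $\dircone{r_8}\cup\dots\cup\dircone{r_{11}}$ respectively; this splits the union of all eleven sets into two circularly consecutive groups. Recursing into the child blocks of $G'$ splits each already-consecutive group into consecutive subgroups, and after finitely many levels every $\dircone{r_{\pi(i)}}$ is itself consecutive, which yields the eleven disjoint arcs. So the missing mechanism is a top-down, nested application of the same Lemma~\ref{lem:cact:slope-disj} to the internal cutvertices of $G'$, not an appeal to Lemma~\ref{lem:diverge}. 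If you make that recursion explicit, the rest of your write-up (disjointness of the subcactuses, the arc construction, and the pigeonhole count) matches the paper's proof.
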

\begin{proof}
  Consider the two cutvertices of the root block of~$G_k$; see
  Fig.~\ref{fig:strmon-cact-exp:centraltree}. By
  Lemma~\ref{lem:cact:slope-disj}, vectors in $\dircone{r_1} \cup
  \dots \cup \dircone{r_{11}}$ appear in the following circular order:
  first the vectors in $\dircone{r_1} \cup \dots \cup
  \dircone{r_{7}}$, then the vectors in $\dircone{r_8} \cup \dots \cup
  \dircone{r_{11}}$. By applying the same argument to the child blocks
  repetitively, it follows that the vectors have the following
  circular order: first the vectors in $\dircone{r_{\pi(1)}}$, then
  the vectors in~$\dircone{r_{\pi(2)}}$, \dots, then the vectors in
  $\dircone{r_{\pi(11)}}$ for some permutation~$\pi$. Therefore, for
  some $i$, each pair of directions in $\dircone{r_i}$ form an angle
  at most $\eps = 360\dg / 11$.
\end{proof}

Now consider a vertex $r_i$ with the property of
Lemma~\ref{lem:strmon-cact:narrow-cone}. Let the vertices of its
subcactus be named as in Fig.~\ref{fig:strmon-cact-exp:caterpillar}.
\Wlog, we may assume that each vector in~$\dircone{r_i}$ forms an
angle at most~$\eps/2$ with the upward direction~$\vup$. We show that
certain directions have to be almost horizontal.

\begin{lemma}
  \label{lem:strmon-cact:horiz}
  For even $i,j$, $2 \leq j \leq i$, consider vertices~$u_i$,
  $v_j$. Vector~$\vec{u_i v_j}$ forms an angle at most~$\eps/2$ with
  the horizontal axis.
\end{lemma}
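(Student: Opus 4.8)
The plan is to use Lemma~\ref{lem:strmon-cact:slope-disj}-style reasoning no longer—rather, the structure of the caterpillar cactus $C_k$ together with the normalization just established. Recall that in the subcactus rooted at $r_i$, all upward directions in $\dircone{r_i}$ make an angle at most $\eps/2$ with $\vup$; in particular every downward edge $u_\ell u_{\ell+2}$ and every edge $u_\ell v_\ell$ (pointing from the fan root $u_\ell$ to a degree-1 vertex $v_\ell$) is almost vertical. We want to show the \emph{horizontal} edges of the caterpillar's triangle chain, namely the vectors $\vec{u_i v_j}$ for even $j \le i$, are almost horizontal. The key observation is that $u_i$, $v_j$ (or rather the fan roots between them) are cutvertices lying on the path between two suitable degree-1 leaves of $G_k$, so Lemma~\ref{lem:diverge} forces the path through these cutvertices to be monotone and its endpoint rays to diverge.

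First I would set up the leaf-to-leaf path. Fix $r_i$ as in Lemma~\ref{lem:strmon-cact:narrow-cone}, with vertices named as in Fig.~\ref{fig:strmon-cact-exp:caterpillar}. For even indices $j \le i$, consider a degree-1 vertex hanging off $u_i$ (such as $v_i$, or $u_{i+1}$ further down the chain) and the degree-1 vertex $v_j$ hanging off $u_j$. The unique path in $G_k$ between these two leaves passes through the cutvertices $u_j, u_{j+2}, \dots, u_i$ in order (these are precisely the fan roots of the triangle chain, each a cutvertex since $C_k$ is a binary cactus), and finally reaches $v_j$ via the edge $u_j v_j$ and reaches the other leaf via $u_i$. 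By Lemma~\ref{lem:diverge}, in the strongly monotone drawing this path $(v_j, u_j, u_{j+2}, \dots, u_i, \text{leaf})$ is drawn monotonically, and in particular no turn of $180\dg$ or more occurs.

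Next I would combine this with the near-verticality of the downward edges. Along the chain $u_j, u_{j+2}, \dots, u_i$, consecutive vectors $\vec{u_\ell u_{\ell+2}}$ all lie in $\dircone{r_i}$ (up to reversal), hence all make angle at most $\eps/2$ with $\pm\vup$; since the path is monotone they cannot alternate up and down, so it is genuinely close to vertical throughout. The edge $u_j v_j$ is likewise near-vertical (it is in $\dircone{r_i}$). Now the segment $u_i v_j$ is the chord from $u_i$ back to $v_j$: going from $v_j$ up along the near-vertical chain to $u_i$ and then along $\vec{u_i v_j}$ back to $v_j$ must close up, and because the chain portion deviates from vertical by at most $\eps/2$ and $v_j$ sits essentially "below" $u_i$, the closing vector $\vec{u_i v_j}$ must be essentially horizontal. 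Concretely, projecting onto $\vup$: the vertical extent accumulated going from $v_j$ to $u_i$ is positive and dominates, while $\vec{u_i v_j}$ must have vertical component of opposite sign and comparable-or-smaller magnitude bounded via the $\eps/2$ cone; quantifying the angle of $\vec{u_i v_j}$ with $\vright$ then gives the bound $\eps/2$. (Equivalently, $u_i v_j$ is an edge of the triangle $u_j u_i v_j$—no, more carefully $v_j$ and $u_i$ are connected directly in $C_k$ only when the triangles share that vertex; the figure specifies exactly which $\vec{u_i v_j}$ are edges, and for those the triangle face bounds the geometry.)

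The main obstacle I anticipate is making the "it must be almost horizontal" step fully rigorous rather than merely picture-plausible: one has to argue that $v_j$ lies sufficiently far below $u_i$ that the chord cannot also be near-vertical, and the only available tool for that is the monotonicity of the leaf-to-leaf path plus the cone constraint on $\dircone{r_i}$—so the careful bookkeeping is in showing that monotonicity of the path $(v_j,u_j,\dots,u_i,\text{leaf})$ forbids $\vec{u_i v_j}$ from being within $\eps/2$ of $\pm\vup$, which would create a turn of nearly $180\dg$ or break monotonicity at $u_i$. I would handle this by a direct angle chase at $u_i$: the incoming direction $\vec{u_{i-2} u_i}$ is within $\eps/2$ of $\vup$, and for the path to $v_j$ to be monotone the direction $\vec{u_i v_j}$ must make an angle at most $90\dg$ with the monotonicity direction, which combined with the cone constraint on the chain pins $\vec{u_i v_j}$ to within $\eps/2$ of horizontal. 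Once that local constraint is in place the statement follows; the rest is routine trigonometry of the kind the subsequent lemmas (Fig.~\ref{fig:strmon-cact-exp:angles:1}, \ref{fig:strmon-cact-exp:angles:2}) exploit.
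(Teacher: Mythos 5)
There is a genuine gap, and it sits exactly at the step you flag as the ``main obstacle.'' The paper's proof hinges on \emph{strong} monotonicity of the $u_i v_j$ path $\rho$ itself: by definition of a strongly monotone drawing, $\rho$ is monotone with respect to the specific direction $\vec{u_i v_j}$. Since $\rho$ must traverse $u_i$, $u_{i-1}$, $v_{j-1}$, $v_j$ in this order, the traversed directions $\vec{u_i u_{i-1}}$ and $\vec{v_{j-1} v_j}$ both have positive projection onto $\vec{u_i v_j}$, i.e., $\angle u_{i-1} u_i v_j < 90\dg$ and $\angle v_{j-1} v_j u_i < 90\dg$. Because $\vec{u_{i-1}u_i}$ and $\vec{v_{j-1}v_j}$ both lie in the $\eps/2$-cone around $\vup$, the two traversed directions are nearly antiparallel, so these two angles sum to at least $180\dg-\eps$; hence each lies in $(90\dg-\eps,90\dg)$ and $\vec{u_i v_j}$ is nearly perpendicular to a near-vertical direction, i.e., near-horizontal. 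Your argument never invokes strong monotonicity: Lemma~\ref{lem:diverge} only yields monotonicity with respect to \emph{some} unknown direction, and your closing step---``$\vec{u_i v_j}$ must make an angle at most $90\dg$ with the monotonicity direction''---confines $\vec{u_i v_j}$ to a half-plane, not to an $\eps/2$-cone around the horizontal. A path that descends a near-vertical edge and later ascends one is indeed monotone only with respect to near-horizontal directions, but that constrains the monotonicity direction, not the chord between the endpoints: the endpoints of a merely monotone path may be displaced arbitrarily far orthogonally to that direction, so $\vec{u_i v_j}$ could still be steep.

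The preceding ``closing up'' heuristic does not repair this. Decomposing $\vec{v_j u_i}$ as a sum of near-vertical cone vectors would, if they all pointed the same way, make the chord near-\emph{vertical}, the opposite of the claim; and the observation that the loop $v_j \to \cdots \to u_i \to v_j$ closes carries no angular information about $\vec{u_i v_j}$ by itself. (Your phrase ``$v_j$ sits essentially below $u_i$'' would likewise imply a near-vertical chord.) What is actually true is that the graph path from $u_i$ to $v_j$ first goes \emph{down} a cone edge and finally comes \emph{up} a cone edge, and the only tool in the paper that converts this up--down structure into a statement about the direction of the segment $u_i v_j$ is strong monotonicity with respect to $\vec{u_i v_j}$. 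Replacing your leaf-to-leaf, plain-monotone path with the strongly monotone $u_i v_j$ path and performing the two-endpoint angle count above closes the gap.
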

\begin{proof}
  Consider a strongly monotone~$u_i v_j$ path~$\rho$. Vertices~$u_i$,
  $u_{i-1}$, $v_{j-1}$, $v_{j}$ must appear on~$\rho$ in this
  order. It is $\angle(\vec{u_{i-1} u_i}, \vec{v_{j-1}, v_j}) \leq
  \eps$. Furthermore, by the strong monotonicity of~$\rho$, it is
  $\angle u_{i-1} u_i v_j$, $\angle v_{j-1} v_j u_i <
  90\dg$. Therefore, $\angle u_{i-1} u_i v_j$, $\angle v_{j-1} v_j u_i
  \in(90\dg-\eps, 90\dg)$, and the statement follows.
\end{proof}

The following lemma essentially shows that $G_k$ requires exponential
resolution.

\begin{lemma}
  \label{lem:strmon-cact:lengths}
  For $i =2, 4, \dots, 2k$, it holds $|u_{i+2}v_{i+2}| \leq (\tan \eps)^2 |u_{i}v_{i}|$.
\end{lemma}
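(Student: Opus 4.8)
The plan is to establish the length bound by chaining together the two geometric facts already proved as Lemmas~\ref{lem:strmon-cact:horiz} and, implicitly, the "almost-vertical downward edge" setup, exactly in the spirit of the two inequalities illustrated in Fig.~\ref{fig:strmon-cact-exp:angles:1} and~\ref{fig:strmon-cact-exp:angles:2}. Concretely, I would first observe that for even $i$ the segment $u_i v_i$ is a downward edge of the subcactus $\subcact{r_i}$, so after the normalization that each vector in $\dircone{r_i}$ forms an angle at most $\eps/2$ with $\vup$, the segment $u_i v_i$ is "almost vertical": $\vec{u_i v_i}$ makes an angle at most $\eps/2$ with $\vup$. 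Next, applying Lemma~\ref{lem:strmon-cact:horiz} with the pair $(u_{i+2}, v_i)$ (both indices even, $j = i \le i+2$) gives that $\vec{u_{i+2} v_i}$ is almost horizontal, i.e.\ makes an angle at most $\eps/2$ with the horizontal axis.

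The second step is a right-triangle estimate. Consider the triangle on $u_i$, $v_i$, $u_{i+2}$. Since $\vec{u_i v_i}$ is within $\eps/2$ of vertical and $\vec{u_{i+2} v_i}$ is within $\eps/2$ of horizontal, the angle at $v_i$ between these two segments is at least $90\dg - \eps$, and in particular the angle of the triangle at $v_i$ is close to $90\dg$. Projecting $u_i$ onto the line through $v_i$ in the direction $\vec{u_{i+2} v_i}$, one gets that the component of $u_i$ "horizontal" relative to this frame has length at most $|u_i v_i|\tan(\eps/2) \le |u_i v_i|\tan\eps$; more carefully, comparing $|u_i v_i|$ with the distance from $u_{i+2}$ along the nearly-horizontal direction shows $|u_{i+2} v_i| \cdot (\text{something}) \le |u_i v_i| \tan\eps$, which after bounding the "something" below by $1$ (again using that the relevant angle is close to $90\dg$) yields $|u_{i+2} v_i| \le |u_i v_i|\tan\eps$. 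This is the analogue of the inequality in Fig.~\ref{fig:strmon-cact-exp:angles:1}. Then I would repeat the same argument one level down: $\vec{u_{i+2} v_{i+2}}$ is almost vertical (downward edge) and $\vec{u_{i+2} v_i}$ is almost horizontal, so in the triangle on $u_{i+2}, v_{i+2}$, and the point realizing the horizontal direction, the same right-triangle estimate gives $|u_{i+2} v_{i+2}| \le |u_{i+2} v_i| \tan\eps$. Multiplying the two inequalities gives $|u_{i+2} v_{i+2}| \le (\tan\eps)^2 |u_i v_i|$, as claimed.

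For the base-index bookkeeping I would simply note that the chain of triangles in $C_k$ (Fig.~\ref{fig:strmon-cact-exp:caterpillar}) is labeled so that $u_i u_{i+1}$, $u_{i+1} v_{i+1}$, etc., are consecutive edges, so all the vertices $u_{i-1}, v_{j-1}$ invoked by Lemma~\ref{lem:strmon-cact:horiz} and all the "downward" edges $u_i v_i$ for even $i$ genuinely exist and have the stated directions; this is a matter of reading off the construction and the normalization, not a computation.

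The main obstacle I anticipate is the middle right-triangle estimate: one must be careful that "$\vec{u_i v_i}$ within $\eps/2$ of vertical" and "$\vec{u_{i+2} v_i}$ within $\eps/2$ of horizontal" really do force the clean bound $|u_{i+2} v_i| \le |u_i v_i|\tan\eps$ rather than something with an extra $1/\cos\eps$ factor. The cleanest way is to set up coordinates with $v_i$ at the origin and the horizontal direction $\vec{u_{i+2} v_i}$ along a near-$x$-axis; then $u_{i+2}$ lies essentially on the negative $x$-axis, $u_i$ lies in a narrow cone around the $y$-axis, and the inequality becomes a statement that the $x$-coordinate of $u_i$ (which is at most $|u_i v_i|\tan(\eps/2)$) dominates the distance $|u_{i+2} v_i|$ up to the $\tan$ factors — using that $u_{i+2}$ and $u_i$ lie on opposite sides or that the strong-monotonicity angle conditions from Lemma~\ref{lem:strmon-cact:horiz}'s proof pin down the configuration. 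Once the coordinate frame is fixed, this reduces to elementary trigonometry, so the risk is purely one of being sloppy about which angle is bounded by $\eps$ versus $\eps/2$ and in which triangle.
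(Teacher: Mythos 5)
Your proposal goes wrong at its very first step, and the error propagates through the whole argument. You assert that $u_iv_i$ is a downward edge of $\subcact{r_i}$ and hence nearly vertical after the normalization. In fact $u_i$ and $v_i$ are separated by the cutvertices $u_{i-1}$ and $v_{i-1}$ (this is exactly what the proof of Lemma~\ref{lem:strmon-cact:horiz} relies on when it forces every $u_iv_j$ path to visit $u_{i-1}$ and $v_{j-1}$), so $u_iv_i$ is not an edge at all; and Lemma~\ref{lem:strmon-cact:horiz} applied with $j=i$ says the opposite of what you claim: $\vec{u_iv_i}$ makes an angle of at most $\eps/2$ with the \emph{horizontal} axis. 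The quantities $|u_iv_i|$ that the lemma shows to decay geometrically are the nearly horizontal ``widths'' of the drawing at successive levels, not lengths of nearly vertical edges.

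Because of this, your intermediate inequality $|u_{i+2}v_i|\le|u_iv_i|\tan\eps$ is false: both $\vec{u_{i+2}v_i}$ and $\vec{u_iv_i}$ are within $\eps/2$ of horizontal while $\vec{u_iu_{i+2}}$ is within $\eps/2$ of vertical, so in the triangle $u_iv_iu_{i+2}$ the angles at $u_i$ and at $u_{i+2}$ are both within $\eps$ of $90\dg$ and the law of sines gives $|u_{i+2}v_i|\approx|u_iv_i|$ --- there is no small angle opposite the side $u_{i+2}v_i$, so no $\tan\eps$ factor can appear. This is precisely the ``main obstacle'' you flagged, and it is fatal rather than a matter of bookkeeping. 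The correct intermediate quantity is $|u_iu_{i+2}|$: in the triangle $u_iv_iu_{i+2}$ the angle at $v_i$ is at most $\eps$ (it is formed by the two nearly horizontal directions $\vec{v_iu_i}$ and $\vec{v_iu_{i+2}}$) while the angle at $u_{i+2}$ is within $\eps$ of $90\dg$, so the law of sines gives $|u_iu_{i+2}|\le|u_iv_i|\tan\eps$; then in the triangle $u_iu_{i+2}v_{i+2}$ the angle at $u_i$ is at most $\eps$ (two nearly vertical upward directions $\vec{u_iu_{i+2}}$ and $\vec{u_iv_{i+2}}$) while the angle at $v_{i+2}$ is within $\eps$ of $90\dg$, giving $|u_{i+2}v_{i+2}|\le|u_iu_{i+2}|\tan\eps$. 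Chaining these two estimates yields the claimed bound; this is the route the paper takes.
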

\begin{proof}
  For brevity, let $i=2$. First, we show that $|u_2 v_2|$ is
  significantly bigger than $|u_2 u_4|$; see
  Fig.~\ref{fig:strmon-cact-exp:angles:1}. It holds:
  \begin{equation*}
 \frac{|u_2 u_4|}{|u_2 v_2|} = \frac{\sin \angle u_2 v_2 u_4}{\sin \angle u_2 u_4 v_2} \leq \frac{\sin \eps}{\sin(90\dg - \eps)} = \tan \eps.
\end{equation*}

Next, we show that $|u_2 u_4|$ is
  significantly bigger than $|u_4 v_4|$; see
  Fig.~\ref{fig:strmon-cact-exp:angles:2}. It holds:
  \begin{equation*}
 \frac{|u_4 v_4|}{|u_2 u_4|} = \frac{\sin \angle u_4 u_2 v_4}{\sin \angle u_2 v_4 u_4 } \leq \frac{\sin \eps}{\sin(90\dg - \eps)} = \tan \eps.
\end{equation*}
Thus, $|u_4 v_4| \leq |u_2 u_4| \tan \eps \leq |u_2 v_2| (\tan \eps)^2$.
\end{proof}

As a consequence of Lemma~\ref{lem:strmon-cact:lengths} we get $|u_{2k+2} v_{2k+2}| \leq |u_2 v_2| (\tan
\eps)^{2 k}$. It is $(\tan \eps)^2 < 0.414$. Since cactus~$G_k$ contains
$n = \Theta(1) + 44k$ vertices, the following exponential lower bound holds for
the resolution of strongly monotone drawings.

\begin{theorem}
  \label{thm:strmon-cactus:resolution}
  There exists an infinite family of binary cactuses with $n$ vertices that require
  resolution~$\Omega(2^\frac{n}{44})$ for any strongly monotone
  drawing.
\end{theorem}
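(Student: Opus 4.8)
The plan is to assemble the lemmas proved above around the family $\{G_k\}$. Recall that the \emph{resolution} of a straight-line drawing is the ratio between the lengths of its longest and its shortest edge, so it suffices to show that in \emph{every} strongly monotone drawing of $G_k$ there is a pair of edges whose length ratio is $2^{\Omega(n)}$, and to observe that $G_k$ has only $n = \Theta(1) + 44k$ vertices.

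Concretely, I would fix an arbitrary strongly monotone drawing of $G_k$ and first apply Lemma~\ref{lem:strmon-cact:narrow-cone} to obtain an index $i \in \{1, \dots, 11\}$ for which all directions in $\dircone{r_i}$ pairwise enclose an angle of at most $\eps = 360\dg / 11$; rotating the whole drawing (which affects neither edge lengths nor the resolution) then lets me assume, exactly as in the paragraph preceding Lemma~\ref{lem:strmon-cact:horiz}, that each of these directions is within $\eps/2$ of $\vup$. This is precisely the situation in which Lemma~\ref{lem:strmon-cact:lengths} applies to the copy of $C_k$ attached to $r_i$; iterating that inequality over $i = 2, 4, \dots, 2k$ yields $|u_{2k+2} v_{2k+2}| \le (\tan\eps)^{2k}\, |u_2 v_2|$. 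Since $u_2 v_2$ and $u_{2k+2} v_{2k+2}$ are the base edges of the first and the last triangle of $C_k$, and hence actual edges of $G_k$, the longest edge of the drawing has length at least $|u_2 v_2|$ and the shortest at most $|u_{2k+2} v_{2k+2}|$.

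To conclude, I would plug in $\eps = 360\dg / 11 < 33\dg$, so that $(\tan\eps)^2 < 0.414 < \tfrac12$, and deduce that the resolution is at least $|u_2 v_2| / |u_{2k+2} v_{2k+2}| \ge (\tan\eps)^{-2k} > 2^{k}$. With $k = (n - \Theta(1))/44$ this is $\Omega(2^{n/44})$, and letting $k \to \infty$ gives the claimed infinite family. I do not expect a real obstacle in this last theorem --- all of the geometric work already lives in Lemmas~\ref{lem:cact:slope-disj}--\ref{lem:strmon-cact:lengths}; the only point that needs a moment's attention is to verify that the two segments appearing in the ratio are genuine edges of $G_k$ (so that their ratio is bounded above by the resolution, and not merely by some ratio of pairwise vertex distances), which is immediate from the construction of $C_k$.
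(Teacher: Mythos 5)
Your proposal is correct and follows exactly the paper's own argument: the theorem is obtained by combining Lemma~\ref{lem:strmon-cact:narrow-cone} with an iterated application of Lemma~\ref{lem:strmon-cact:lengths} to get $|u_{2k+2}v_{2k+2}| \le (\tan\eps)^{2k}|u_2v_2|$ and then using $(\tan\eps)^2 < 0.414 < \tfrac12$ together with $n = \Theta(1)+44k$. The additional care you take about the base segments being genuine edges of $C_k$ is a reasonable (and correct) clarification, not a deviation.
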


Using this result, we can construct a family of trees requiring
exponential area for any strongly monotone drawing. Consider the
binary spanning tree~$T_k$ of~$G_k$ created by removing the green
edges in Fig.~\ref{fig:strmon-cact-exp:centraltree}
and~\ref{fig:strmon-cact-exp:caterpillar}. Obviously, by
Theorem~\ref{thm:strmon-cactus:resolution} it requires
resolution~$\Omega(2^\frac{n}{44})$ for any strongly monotone
drawing. This answers an open question by Kindermann et
al.~\cite{kssw-mdt-14}.  Replacing degree-2 vertices by shortcuts and
applying a more careful analysis lets us prove the following result.

\begin{theorem}
  \label{thm:strmon-tree:resolution}
  There exists an infinite family of binary trees with $n$ vertices that require
  resolution~$\Omega(2^\frac{n}{22})$ for any strongly monotone
  drawing.
\end{theorem}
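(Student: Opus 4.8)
The plan is to refine both the construction and the analysis behind Theorem~\ref{thm:strmon-cactus:resolution}. In the binary cactus $G_k$, and hence in its spanning tree $T_k$, a constant fraction of the vertices inside each copy of $C_k$ are present only to triangulate the blocks and to keep the cactus binary; they never enter the shrinking estimate of Lemma~\ref{lem:strmon-cact:lengths}, which invokes only the ``rungs'' $u_iv_i$, $u_{i+2}v_{i+2}$ and a few nearby spine vertices. I would therefore let $T'$ be obtained from $T_k$ by repeatedly replacing a degree-$2$ vertex together with its two incident edges by a single edge --- the ``shortcut'' operation from the proof of Lemma~\ref{lem:diverge}; since the branching vertices $r_1,\dots,r_{11}$ can be taken to have degree $\ge 3$ in $T_k$, they are untouched. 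Contraction preserves maximum degree~$3$, so $T'$ is again a binary tree; it keeps the $11$-leaf branching gadget coming from $G'$ and, at each leaf $r_i$, a thinned caterpillar whose spine $u_2,u_4,\dots,u_{2k}$ and rung endpoints $v_2,\dots,v_{2k}$ together have only about $2k$ vertices instead of the roughly $4k$ of $C_k$. Counting gives $|V(T')| = 22k + \Theta(1)$.

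Given an arbitrary strongly monotone drawing of $T'$, I would then re-run the chain of arguments of Lemmas~\ref{lem:cact:slope-disj}--\ref{lem:strmon-cact:lengths} directly on $T'$. Slope disjointness (Lemma~\ref{lem:cact:slope-disj}) holds verbatim for trees, so the pigeonhole step of Lemma~\ref{lem:strmon-cact:narrow-cone} still yields a leaf $r_i$ with all directions of $\dircone{r_i}$ inside a cone of opening angle $\eps = 360\dg/11$, which w.l.o.g.\ makes angle at most $\eps/2$ with $\vup$. Convexity of that cone is the key: every vector $\vec{u_ju_{j'}}$ between spine vertices of $\subcact{r_i}$ --- in particular the ``long'' spine steps that the contraction created in place of the deleted degree-$2$ vertices --- is a non-negative combination of cone vectors and hence again makes angle at most $\eps/2$ with $\vup$. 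So the proof of Lemma~\ref{lem:strmon-cact:horiz} (the rungs $u_iv_j$ are nearly horizontal, from strong monotonicity of the relevant $u_i$-$v_j$ path plus this angle bound) and of Lemma~\ref{lem:strmon-cact:lengths} (consecutive rungs shrink by a factor $(\tan\eps)^2 < 0.414$) go through unchanged up to re-indexing. Iterating over the $\Theta(k)$ rungs gives $|u_{2k+2}v_{2k+2}| \le (\tan\eps)^{2k}|u_2v_2| < 2^{-k}|u_2v_2|$, so every strongly monotone drawing of $T'$ has resolution $\Omega(2^k) = \Omega(2^{|V(T')|/22})$.

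The main obstacle is the ``more careful analysis'' hidden in the second paragraph: one must check that the purely combinatorial facts used in Lemmas~\ref{lem:strmon-cact:horiz} and~\ref{lem:strmon-cact:lengths} --- that a monotone $u_i$-$v_j$ path passes through the required vertices in the stated order, and likewise for the rung paths --- survive the contraction, i.e., that the thinned caterpillar is still a ``chain of paths'' pinning those vertices onto the relevant monotone paths, and that no triangulation edge of $G_k$ was secretly needed for the angle inequalities. I expect this to be routine but delicate, and to go through because strong monotonicity of an $s$-$t$ path already forces each of its edges to make an angle $< 90\dg$ with $\vec{st}$ --- precisely the ingredient those lemmas exploit --- so the extra edges that turned $G_k$ into a triangulated binary cactus were a convenience of the construction rather than a necessity for the lower bound. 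A deliberately lossy final count (using $(\tan\eps)^2 < 1/2$ in place of the true constant) then turns $|V(T')| = 22k+\Theta(1)$ into the exponent $n/22$.
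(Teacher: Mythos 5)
Your proposal is correct and takes essentially the same approach as the paper, which proves this theorem only by the remark that one takes the binary spanning tree $T_k$ of $G_k$, replaces degree-2 vertices by shortcuts, and applies a more careful analysis to halve the vertex count per shrinking step from $44k$ to $22k$. Your elaboration of why the angle and length lemmas survive the contraction (convexity of the direction cone and the fact that strong monotonicity alone drives the inequalities) fills in exactly the details the paper leaves implicit.
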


\subsection{Non-triangulated cactuses}
The construction for an \ic drawing from Section~\ref{subsec:ic-triang} fails if the
blocks are not triangular fans since we now cannot just use downward
edges to reach the common ancestor block. Consider the family of
rooted binary cactuses $G_n = (V_n, E_n)$ defined as
follows. Graph~$G_0$ is a single 4-cycle, where an arbitrary vertex is
designated as the root.  For~$n \geq 1$, consider two disjoint copies
of~$G_{n-1}$ with roots~$a_0$ and~$c_0$. We create~$G_n$ by adding new
vertices~$r_0$ and~$b_0$ both adjacent to~$a_0$ and~$c_0$; see
Fig.~\ref{fig:square-cactus}. For the new block~$\nu$ containing $r_0,
a_0, b_0, c_0$, we set~$\rt{\nu} = r_0$. We select~$r_0$ as the root
of~$G_n$ and~$\nu$ as its root block. For a block~$\mu_i$ with
root~$r_i$, let $a_i,b_i,c_i$ be its remaining vertices, such
that~$b_i r_i \notin E_n$.  For a given drawing, due to the symmetry
of~$G_n$, we can rename the vertices~$a_i$ and~$c_i$ such that
$\angleccw{\vec{r_i c_i}}{\vec{r_i a_i}} \leq 180\dg$. We now prove
the following negative result.
\begin{theorem}
  \label{thm:no-sad}
  For~$n \geq 10$, $G_n$ has no \sa drawing.
\end{theorem}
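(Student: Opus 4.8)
The goal is to show that for $n \geq 10$, the graph $G_n$ admits no self-approaching drawing. The natural strategy is to derive a contradiction from the recursive structure: assuming a \sa drawing of $G_n$ exists, analyze the directions of the "downward" edges $\vec{r_i a_i}$, $\vec{r_i b_i}$, $\vec{r_i c_i}$ at each block $\mu_i$ along a root-to-leaf chain, and show that at each level of recursion the available angular "room" for these directions must shrink by a constant factor, so that after $\Theta(n)$ levels no valid configuration remains. The key structural fact to exploit is that in a \sa drawing of $G_n$, any \stp between two vertices in different deep subgraphs is forced to pass through a specific sequence of cutvertices $r_0, a_0$ (or $c_0$), $r_1, \dots$, and Fact~\ref{rem:sap-char} together with Lemma~\ref{lem:diverge} strongly constrains how consecutive segments of such a path may be oriented.

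First I would set up the recursion precisely. For the root block $\nu$ with $\rt{\nu} = r_0$ and vertices $a_0, b_0, c_0$, consider vertices $s$ deep inside the copy of $G_{n-1}$ rooted at $a_0$ and $t$ deep inside the copy rooted at $c_0$. Every $st$-path must traverse $a_0$ and $c_0$, and more importantly must enter and leave through the $4$-cycle block $\nu$, so it contains one of the two length-two subpaths $a_0, r_0, c_0$ or $a_0, b_0, c_0$ inside $\nu$. Applying Fact~\ref{rem:sap-char} (self-approaching on both ends, since we also need paths in the reverse direction between other vertex pairs) to the forced subpath through $\nu$, I would extract an angular constraint relating the direction in which the path "arrives" at $a_0$ from $s$'s side and the direction in which it "departs" from $c_0$ toward $t$ — specifically that certain cone-shaped regions (the intersection of halfplanes $\hp{\cdot}{\cdot}$ as in $\front(\rho)$) must contain the respective subgraphs. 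The renaming convention $\angleccw{\vec{r_i c_i}}{\vec{r_i a_i}} \leq 180\dg$ is there precisely to fix orientations so these constraints can be stated cleanly.

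The heart of the argument is a shrinking lemma: if the copy of $G_{n-1}$ rooted at $a_0$ must fit inside some cone (or halfplane intersection) of angular width $\theta$ forced by self-approachingness across $\nu$, then inside that copy, the analogous region forced across its own root block has width at most $c\,\theta$ for some constant $c < 1$ (one expects something like $c$ close to $1/2$ or a similar constant coming from the $90\dg$ perpendicularity conditions, analogous to the $\tan\eps$ factors in Lemma~\ref{lem:strmon-cact:lengths}). Iterating, after $10$ levels the forced cone would be too narrow to accommodate even a single $4$-cycle drawn with distinct vertices and self-approaching paths through it — this is where the explicit bound $n \geq 10$ comes from. So the proof is: (1) establish the forced path structure, (2) establish the base-case obstruction (a $4$-cycle confined to a sufficiently narrow cone cannot be drawn with a \sa path through both of its "diagon” pairs), and (3) prove the one-step shrinking lemma and iterate.

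The main obstacle, I expect, is step (3): making the "shrinking" rigorous. A \sap through the $4$-cycle $\nu$ does not by itself force the attached subcactuses into a narrow cone in an obvious way — the constraint comes from combining several \stps simultaneously (paths between different pairs of deep vertices, in both directions, which is why it is self-approaching and not merely greedy that we need, and indeed why the analogous greedy statement is false — every binary cactus has a greedy drawing). I would need to carefully identify four or more vertices $s_1, t_1, s_2, t_2, \dots$ in the two deep subgraphs whose pairwise \saps, via Fact~\ref{rem:sap-char} applied to the shared forced subpath through $\nu$, pin down the directions $\vec{a_0 \cdot}$ and $\vec{c_0 \cdot}$ into a region strictly smaller than the one containing all of $G_{n-1}$'s top block — a perpendicularity argument in the spirit of the proof of Lemma~\ref{lem:path-ic90}, but run in reverse to extract a contradiction rather than to build a drawing. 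Bounding the constant $c$ away from $1$ cleanly is the delicate quantitative point; everything else is bookkeeping over the recursion.
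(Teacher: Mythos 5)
Your overall architecture (forced cutvertex sequences, halfplane constraints from Fact~\ref{rem:sap-char}, a quantitative invariant that degrades with depth until a base case is violated) matches the paper in spirit, but the invariant you propose to track is the wrong one, and this is a genuine gap rather than a presentational difference. You want the \emph{angular} width of the cone containing each subcactus to shrink by a constant factor $c<1$ per level, with the base-case obstruction that a $4$-cycle confined to a sufficiently narrow cone cannot be drawn with \sa paths through it. That base case is false: a long, thin quadrilateral (apex angle $\eps$ at $r$, angles slightly above $90\dg$ at $a$ and $c$) is perfectly self-approaching --- this is essentially the shape the paper's own positive construction in Theorem~\ref{lem:triang-ic} places inside arbitrarily narrow cones, and Observation~\ref{obs:sapolygon} permits it. So no matter how narrow the cone becomes, narrowness alone yields no contradiction, and the iterated angular shrinking, even if you could prove it, would not terminate in an obstruction.

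The paper spends the angular argument all at once: a single pigeonhole step over the $16$ subcactuses at cutvertex-depth~$4$ (Lemma~\ref{lem:thin-subcactus}, via the circular-order property of Lemma~\ref{lem:cact:slope-disj}) produces one subcactus, isomorphic to $G_6$, all of whose upward directions lie within a \emph{fixed} angle $\eps=22.5\dg$; this bound is never improved at deeper levels. The contradiction is then about \emph{lengths}, not angles: because every block is thin ($|ac|\le \tan\eps\cdot|ra|$, Lemma~\ref{lem:lengths}), each block is much smaller than its parent; but because the roots of several consecutive ancestor blocks are trapped in the intersection of two narrow strips spanned by sibling subcactuses (Lemmas~\ref{lem:cone-strip} and~\ref{lem:2strips}), the parent's root must lie in a tiny diamond near $r(\mu_3)$, forcing the parent block to be much \emph{smaller} than its child. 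These two bounds are incompatible for $\eps\le 22.5\dg$. The ingredient your plan lacks is the step from ``subcactuses lie in halfplanes'' to ``ancestor roots lie in a narrow strip'' --- that is where the $4$-cycle (as opposed to a triangular fan) actually bites, via Lemma~\ref{lem:rab-geq90}, which forces \sa paths leaving $b$ toward the two different sides of the ancestor chain to exit through $a$ and $c$ respectively.
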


\begin{figure}[tb]
  \hfill
  \subfloat[]{ \includegraphics[page=1]{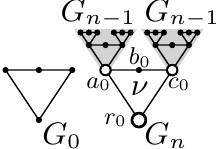} \label{fig:square-cactus}}\hfill
    \subfloat[]{\includegraphics[page=1]{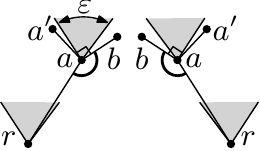}\label{fig:angles-rab}}
\hfill
    \subfloat[]{\includegraphics[page=4]{fig/counterex-basic-lemmas.pdf}\label{fig:angles-horizontal}}
\hfill
    \subfloat[]{\includegraphics[page=5]{fig/counterex-basic-lemmas.pdf}\label{fig:angles-horizontal:2}}
    \hfill\null
  \caption{ \protect\subref{fig:square-cactus} cactuses $G_n$;
        \protect\subref{fig:angles-rab} Lemma~\ref{lem:angles}\sublem{3};
      \protect\subref{fig:angles-horizontal},\protect\subref{fig:angles-horizontal:2}:~Lemma~\ref{lem:angles}\sublem{4}.}
\end{figure}

The outline of the proof is as follows. We show that every \sa drawing~$\Gamma$
of~$G_{10}$ contains a \sa drawing of~$G_3$ 
such that for each block~$\mu$ of~this
$G_3$, the angle at~$r = \rt{\mu}$ is very small, angles at~$a$ and~$c$ are~$90\dg$ or slightly
bigger (Lemma~\ref{lem:angles}) and such that sides~$r a$ and~$r c$ have
almost same length which is significantly greater than~$\dist{a}{c}$
(Lemma~\ref{lem:lengths}).  In addition, the following properties hold for
this~$G_3$. 

\begin{myenum}
\item If $\mu_i$ is contained in the subcactus rooted at~$c_j$, each \sa
  $b_i a_j$-path uses edge~$b_i a_i$, and analogously for the symmetric case;
  see Lemma~\ref{lem:rab-geq90}.
\item Each block is drawn significantly smaller than its parent block; see
  Lemma~\ref{lem:cone-strip}\sublem{1}.
\item If the descendants of block~$\mu$ form subcactuses~$G_k$ with $k \geq 2$ on both
  sides, the parent block of~$\mu$ must be drawn smaller than~$\mu$; see
  Lemma~\ref{lem:cone-strip}\sublem{2}.
\end{myenum}

Obviously, the second and third conditions are contradictory. 
Note that every block has to be \sa. However, it might be non-convex and even
non-planar.
 \newcommand{\ObsSaPolygonText}{ In a \sa drawing of a polygon~$P$, no two
   non-consec\-utive angles can be both less than $90\dg$.}
 \begin{observation}
  \ObsSaPolygonText
  \label{obs:sapolygon}
 \end{observation}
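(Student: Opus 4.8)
The statement to prove is Observation~\ref{obs:sapolygon}: in a self-approaching drawing of a polygon $P$, no two non-consecutive vertices can both have interior angle less than $90\dg$. The plan is to argue by contradiction and to exploit the local characterization of self-approaching paths given in Fact~\ref{rem:sap-char}: a directed straight-line path $(v_1,\dots,v_k)$ is self-approaching iff every later vertex $v_j$ lies in the halfplane $\hp{v_i}{v_{i+1}}$, i.e.\ on the far side of the perpendicular to $v_iv_{i+1}$ erected at $v_{i+1}$. Since the polygon is drawn self-approaching, every one of its bounding paths (read in either direction) must satisfy this condition.

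**Key steps.** Suppose two non-consecutive vertices $a$ and $c$ of $P$ both have interior angle $<90\dg$. Write $P$ as a cyclic sequence of vertices; because $a$ and $c$ are non-consecutive, deleting them splits the cycle into two vertex-disjoint arcs, each containing at least one vertex, joining a neighbour of $a$ to a neighbour of $c$. First I would look at $a$: let $a^-$ and $a^+$ be its two neighbours on $P$. The interior angle at $a$ being less than $90\dg$ means $\ang{\vec{a a^-}}{\vec{a a^+}} < 90\dg$, so the two edges at $a$ leave $a$ in nearly the same direction. Consider the self-approaching path that starts at $a^-$, goes to $a$, then continues along one of the two arcs all the way around to $a^+$. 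Applying Fact~\ref{rem:sap-char} with $i$ the index of the edge $a^-a$, the final vertex $a^+$ must lie in $\hp{a^-}{a}$, the halfplane beyond the perpendicular at $a$ to the segment $a^-a$. But having interior angle $<90\dg$ at $a$ forces $a^+$ to lie strictly on the *near* side of that perpendicular (the side containing $a^-$), because $\ang{\vec{a a^-}}{\vec{a a^+}}<90\dg$ places $\vec{a a^+}$ inside the open halfplane bounded by the line through $a$ perpendicular to $a a^-$ on the $a^-$ side. That already contradicts self-approachingness of that path. To make this airtight I need to pick the arc so that the path from $a^-$ through $a$ around to $a^+$ really is one of the two directed bounding paths of the polygon, which it is: a polygon with $n$ vertices has exactly $n$ edges, and removing the two edges at $a$ leaves a Hamiltonian path on the remaining vertices from $a^-$ to $a^+$; prepending $a$ gives the required path. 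The hypothesis that $a$ and $c$ are non-consecutive is actually what guarantees the two edges at $a$ are distinct edges of $P$ (true for any $n\ge 3$), so in fact the single vertex $a$ already suffices for the contradiction once we use the correct path.

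Let me restate that more carefully: the self-approaching condition, read on the path $\rho=(a^-,a,\dots,a^+)$ traversing $P\setminus\{aa^-,aa^+\}$, says $a^+\in\hp{a^-}{a}$. The line bounding $\hp{a^-}{a}$ is perpendicular to $\vec{a^-a}$ at $a$, equivalently perpendicular to $\vec{a a^-}$ at $a$; the halfplane $\hp{a^-}{a}$ is the one *not* containing $a^-$. A point $p$ with $\ang{\vec{a p}}{\vec{a a^-}}<90\dg$ lies in the halfplane *containing* $a^-$; hence $a^+$, which by the angle hypothesis satisfies exactly this, lies in the complementary closed halfplane and not in the open $\hp{a^-}{a}$ — contradiction, provided we also rule out $a^+$ landing on the bounding line, which corresponds to the angle at $a$ being exactly $90\dg$ and is excluded by the strict inequality $<90\dg$. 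So in truth one vertex with angle $<90\dg$ gives a contradiction — but of course a simple polygon can have up to three angles smaller than $90\dg$ when $n=3$ (a triangle), so the catch is that for a triangle the two ``neighbours'' of $a$ are $c$ and the third vertex and there is no intermediate vertex: the path $\rho$ has length $2$ and the condition only constrains $a^+\in\hp{a^-}{a}$ trivially in a way that a sharp triangle can violate; indeed an acute triangle is *not* self-approaching, consistent with the observation. I therefore need the non-consecutiveness of the *two* small-angle vertices precisely to exclude the triangle case and to ensure the bounding path has enough intermediate vertices for the argument; the cleanest framing is to use both $a$ and $c$ symmetrically.

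**Main obstacle.** The real subtlety is bookkeeping about which bounding path to traverse and in which direction, and confirming that the relevant arc genuinely realizes a self-approaching path in the given drawing — the drawing is of a polygon, so both directed cycles around $P$ decompose, after deleting one edge, into self-approaching Hamiltonian paths, but I must make sure the path I use has $a$ as an *internal* vertex (so that the perpendicular-at-$a$ constraint is active as a ``later vertex'' constraint) and still reaches $c$ or $a^+$ afterwards. Once the path is set up correctly, the geometric contradiction — a vertex whose angle is below $90\dg$ forcing a later vertex to lie on the wrong side of a perpendicular — is immediate from Fact~\ref{rem:sap-char}, and handling the boundary case of an exactly $90\dg$ angle is what the strict inequality in the statement is there for. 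I expect the write-up to be four or five lines, with the only care needed in the sentence that selects the traversal.
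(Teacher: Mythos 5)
The central step of your argument does not go through: you assume that because the drawing of $P$ is \sa, ``every one of its bounding paths (read in either direction)'' must satisfy the condition of Fact~\ref{rem:sap-char}. But a \sa \emph{drawing} only requires that, for each pair of vertices, \emph{some} connecting path is \sa --- not that every path, or any particular path, is. For the pair $(a^-,a^+)$ there are two arcs of $P$: the two-edge path $(a^-,a,a^+)$ and the long arc that avoids $a$ altogether. Your contradiction only rules out the first one being \sa; the long arc remains a valid candidate, so a single vertex of angle $<90\dg$ yields no contradiction (and indeed the observation, correctly, permits one sharp angle, or two consecutive ones). Your closing remark that an acute triangle is not self-approaching flags the same misconception: every triangle drawing \emph{is} \sa, since each pair of vertices is joined by a single segment, and the paper's proof even opens with this remark. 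There is also a bookkeeping slip: the path you write as $(a^-,a,\dots,a^+)$ ``traversing $P\setminus\{aa^-,aa^+\}$'' does not exist in a cycle, since $a^-$ and $a^+$ are $a$'s only neighbours --- once a path enters $a$ from $a^-$ it must leave immediately to $a^+$.

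The non-consecutiveness hypothesis is not there to exclude triangles; it is the engine of the proof, and you need both sharp vertices simultaneously. The paper's argument: if $v_2$ and $v_4$ are non-consecutive with angles $<90\dg$, pick vertices $v_1$ and $v_3$ on the two arcs separating them, so that $v_1,v_2,v_3,v_4$ appear in this circular order. Every $v_1v_3$-path in $P$ passes through $v_2$ or through $v_4$ as an \emph{internal} vertex, using both incident polygon edges there; a path with an internal turn sharper than $90\dg$ is not \sa in either direction by Fact~\ref{rem:sap-char}. Hence \emph{no} \sa $v_1v_3$-path exists, contradicting the definition of a \sa drawing. Your local computation at a sharp vertex is correct and is exactly the needed ingredient --- but it must be applied to both $v_2$ and $v_4$ so as to kill \emph{both} candidate paths for a suitably chosen pair of endpoints.
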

\begin{proof}
  Note that each triangle is \sa.  Let~$v_1, v_2, v_3, v_4$ be pairwise distinct
  vertices appearing in this circular order around the boundary of~$P$.
   Let the angles at both~$v_2$
  and~$v_4$ be less than~$90\dg$. However, a \sa $v_1 v_3$-path must
  use either~$v_2$ or~$v_4$, a contradiction. 
\end{proof}

 \begin{figure}[tb]
    \hfill
  \subfloat[]{ \includegraphics[page=2]{fig/square-cactus.pdf} \label{fig:rab-geq90}}\hfill
  \subfloat[]{ \includegraphics[page=6]{fig/square-cactus.pdf} \label{fig:rab-geq90:2}}\hfill
  \subfloat[]{ \includegraphics[page=3]{fig/square-cactus.pdf} \label{fig:Gr}}
  \hfill\null
  \caption{
    \protect\subref{fig:rab-geq90},\protect\subref{fig:rab-geq90:2}
    construction for Lemma~\ref{lem:rab-geq90};
    \protect\subref{fig:Gr} subcactus~$G_6$ providing the
    contradiction in the proof of Theorem~\ref{thm:no-sad}.}
    \label{fig:counterex-lemmas}
  \end{figure}

The following lemmas will be used to show that the drawings of certain
blocks must be relatively thin, i.e., their downward edges have similar
directions.

\newcommand{\LemThinSubcactusText} {Every \sa drawing of~$G_{10}$
  contains a cutvertex~$\bar{r}$, such that each pair of directions in
  $\dircone{\bar{r}}$ form an angle at most $\eps = 22.5\dg$.  }

\begin{lemma}
  \label{lem:thin-subcactus}
  \LemThinSubcactusText
\end{lemma}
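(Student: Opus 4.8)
The plan is to prove Lemma~\ref{lem:thin-subcactus} by the same pigeonhole strategy that yielded Lemma~\ref{lem:strmon-cact:narrow-cone}, but using Observation~\ref{obs:sapolygon} instead of strong monotonicity and counting the branching inside $G_{10}$ rather than inside a fixed central cactus $G'$. First I would recall that every $4$-cycle block in $G_n$ is a polygon that must be drawn self-approaching, so by Observation~\ref{obs:sapolygon} no two non-consecutive angles of a block $\mu_i$ can both be below $90\dg$; in particular, once the vertices $a_i,c_i$ are named (as in the setup, with $\angleccw{\vec{r_i c_i}}{\vec{r_i a_i}} \leq 180\dg$), the two downward edges $r_i a_i$ and $r_i c_i$ emanate from $r_i$ on the two sides and the subcactuses $G^{a_i}$ and $G^{c_i}$ hang off them. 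The key structural input is Lemma~\ref{lem:cact:slope-disj}: for any block $\mu_i$, the two disjoint subcactuses rooted at its cutvertices $a_i$ and $c_i$ have the property that $\dircone{a_i}$ and $\dircone{c_i}$ each form a single interval in the circular order of directions, and moreover (as in the proof of Lemma~\ref{lem:strmon-cact:narrow-cone}) these two intervals are themselves separated, i.e.\ they appear one after the other without interleaving.

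The main step is then a recursive application of this separation down the levels of $G_{10}$. The root block $\nu$ of $G_{10}$ has two cutvertices $a_0,c_0$, each the root of a copy of $G_9$; the directions in $\dircone{a_0}$ form a contiguous arc disjoint from the arc of $\dircone{c_0}$. Inside $G_9 = G^{a_0}$ the root block again splits its direction arc into two contiguous sub-arcs, one per child subcactus, and so on. Iterating $d$ levels partitions the full circle of directions $\dircone{\bar r}$ (where $\bar r$ ranges over all the $2^{d}$ deepest relevant cutvertices) into $2^{d}$ contiguous arcs in a fixed cyclic order. Choosing $d$ so that $2^{d} \geq 360\dg/22.5\dg = 16$, i.e.\ $d = 4$, the pigeonhole principle gives at least one arc of angular width at most $22.5\dg$; the cutvertex $\bar r$ whose subcactus-directions $\dircone{\bar r}$ lie in that arc is the one we want, and since $G_{10}$ has enough levels ($10 \geq 4$, with room to spare for the later lemmas' needs) such a $\bar r$ exists. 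I would spell out the induction on levels as a short claim: ``after partitioning at the first $\ell$ block-levels, $\bigcup \dircone{\cdot}$ over the $2^{\ell}$ level-$\ell$ subcactuses decomposes into $2^{\ell}$ contiguous arcs in a cyclic order refining the previous one,'' proved by applying Lemma~\ref{lem:cact:slope-disj} to each level-$(\ell{-}1)$ block.

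The one subtlety — and the step I expect to need the most care — is making the separation of the two direction-arcs at a single block fully rigorous from Lemma~\ref{lem:cact:slope-disj} as stated: that lemma asserts each of $\dircone{u}$, $\dircone{v}$ is a single interval, and its proof (via the cyclic order $p_1,p_2$ consecutive and $p_3,p_4$ consecutive on a large circle) in fact establishes that the two intervals do not interleave, so I would invoke exactly that argument, pointing out that $u = a_i$, $v = c_i$ are cutvertices of a common block $\mu_i$ and hence lie on every path between their subcactuses, which is precisely the hypothesis used there. Apart from that, everything is bookkeeping: the depth count $2^{4} = 16 \geq 360\dg/\eps$ with $\eps = 22.5\dg$, and the observation that the direction arcs across disjoint subcactuses at the same level are pairwise separated because any two such cutvertices are separated by some ancestor block whose two-arc split already puts them in different pieces. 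I would close by noting that $\eps = 22.5\dg$ is exactly what the subsequent length-shrinking arguments (Lemmas~\ref{lem:angles}, \ref{lem:lengths}, etc.) require, so the constant $10$ in $G_{10}$ is chosen to leave the needed number of further levels below $\bar r$.
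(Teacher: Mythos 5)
Your proposal is correct and follows essentially the same route as the paper: the paper's proof also takes the $2^4=16$ cutvertices at $C$-depth~$4$, argues via Lemma~\ref{lem:cact:slope-disj} (as in Lemma~\ref{lem:strmon-cact:narrow-cone}) that their direction sets $\dircone{r_j}$ occupy non-interleaved arcs in circular order, and applies the pigeonhole principle to get one arc of width at most $360\dg/16 = 22.5\dg$. Your extra care about why the two intervals at a single block do not interleave, and the implicit bridge that a \sa drawing is greedy so Lemma~\ref{lem:cact:slope-disj} applies, are exactly the points the paper leaves to the reader.
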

\begin{proof}
  Denote by $r_j$, $j=1, \ldots, 16$, the cutvertices
  with~$\cdepth{r_j} = 4$.  By an argument similar to the one in the
  proof of Lemma~\ref{lem:strmon-cact:narrow-cone}, the edges in
  $\dircone{r_j}$ appear in the following circular order by their
  directions: first the edges in $\dircone{r_{\pi(1)}}$, then the
  edges in $\dircone{r_{\pi(2)}}$, \dots, then the edges in
  $\dircone{r_{\pi(16)}}$ for some permutation~$\pi$.  Therefore, by
  the pigeonhole principle, the statement holds for some~$j \in \{ 1,
  \dots, 16\}$ and~$\bar{r} = r_j$.  
\end{proof}

Let $\bar{r}$ be a cutvertex in
the fixed drawing at $\cdepth{\bar{r}} = 4$ with the property shown in Lemma~\ref{lem:thin-subcactus}. Then, $\subcact{\bar{r}}$ is isomorphic
to~$G_6$. From now on, we only consider non-leaf blocks~$\mu_i$ and
vertices~$r_i, a_i, b_i,c_i$ in~$\subcact{\bar{r}}$. We shall
sometimes name the points~$a$ instead of~$a_i$ etc. for
convenience. We assume $\ang{\vup}{\vec{ra}}$, $\ang{\vup}{\vec{rc}}
\leq \eps/2$. The following lemma is proved using basic trigonometric
arguments.

\newcommand{\LemAnglesText}{ It holds: \emph{\sublem{1}}~$\angle a b c \geq
  90\dg$; \emph{\sublem{2}}~$\subcact{a} \subseteq \hp{b}{a}$, $\subcact{c}
  \subseteq \hp{b}{c}$; \emph{\sublem{3}}~$\angle b a r \leq 90\dg + \eps$,
  $\angle b c r \leq 90\dg + \eps$. \emph{\sublem{4}}~For vertices $u$ in
  $\subcact{a}$, $v$ in $\subcact{c}$ of degree~4 it is $\angle(\vec{uv},
  \vright) \leq \eps/2$.   }
\begin{lemma}
  It holds: 
\begin{myenum}
\item[\emph{\sublem{1}}]~$\angle a b c \geq
  90\dg$;
\item[\emph{\sublem{2}}]~$\subcact{a} \subseteq \hp{b}{a}$,
  $\subcact{c} \subseteq \hp{b}{c}$;
\item[\emph{\sublem{3}}]~$\angle b a r \leq 90\dg + \eps$, $\angle b c
  r \leq 90\dg + \eps$. 
\item[\emph{\sublem{4}}]~For vertices $u$ in $\subcact{a}$, $v$ in
  $\subcact{c}$ of degree~4 it is $\angle(\vec{uv}, \vright) \leq
  \eps/2$.
\end{myenum} 
\label{lem:angles}
\end{lemma}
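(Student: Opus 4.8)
The plan is to prove the four parts more or less in the order stated, since the later parts rely on the earlier ones. The geometric setup is the one described right before the lemma: we work inside $\subcact{\bar r}\cong G_6$, $\bar r$ has the narrow-cone property of Lemma~\ref{lem:thin-subcactus}, namely every pair of directions in $\dircone{\bar r}$ spans an angle at most $\eps=22.5\dg$, and after a rotation we assume that for each considered block $\mu_i$ the upward edges $\vec{r_ia_i},\vec{r_ic_i}$ make an angle at most $\eps/2$ with $\vup$. In particular $\angle a_i r_i c_i\le\eps$ for every such block, and after the renaming convention $\angleccw{\vec{r_i c_i}}{\vec{r_i a_i}}\le 180\dg$ we know $a_i$ is ``to the left'' of $c_i$ as seen from $r_i$.

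For \sublem{1}: in the 4-cycle $r\,a\,b\,c$, the angle at $b$ is one of the two non-consecutive angles together with the angle at $r$; since $\angle a r c\le\eps<90\dg$, Observation~\ref{obs:sapolygon} applied to the polygon $(r,a,b,c)$ immediately forces $\angle abc\ge 90\dg$ (otherwise both non-consecutive angles at $r$ and $b$ would be smaller than $90\dg$). For \sublem{2}: by \sublem{1} and the fact that the four points form a (possibly non-convex) quadrilateral whose angle at $b$ is at least $90\dg$, the edges $ba$ and $bc$ meet at $b$ at an angle $\ge 90\dg$; hence $a\in\hp{b}{a}$ trivially and, more to the point, the whole cone at $a$ spanned by the descendants of $a$ — whose directions lie in the narrow cone $\dircone{a}\subseteq\dircone{\bar r}$ of total width $\le\eps$ — stays inside the halfplane $\hp{b}{a}$, because that halfplane's bounding line is perpendicular to $ba$ at $a$ and the descendant edges at $a$ point ``downward'' away from $b$, making an angle close to $0$ with $-\vec{ba}$ up to the small error $\eps$. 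The key quantitative point is $\eps+\eps/2<90\dg$, so the whole subcactus $\subcact a$ stays on the correct side; symmetrically for $\subcact c$. For \sublem{3}: this is the claim pictured in Fig.~\ref{fig:angles-rab}. In triangle $r\,a\,b$ we know $\angle a r b\le\eps$ (since $\vec{ra}$ is within $\eps/2$ of $\vup$ and $\vec{rb}$ — again an upward-type edge in $\dircone{r}$, as $b$ is a child of $r$ — is also within $\eps/2$ of $\vup$, wait: more carefully, $rb$ need not be an edge of $G_n$, so instead I use that $b$ lies in the thin cone at $r$ because $a,c$ do and $b$ is ``between'' them in the 4-cycle) hence $\angle arb$ is small, $\angle abr \le\angle abc$ has no useful upper bound by itself, but using \sublem{1} we have $\angle abr = \angle abc - \angle rbc$; combining $\angle arb+\angle abr+\angle bar=180\dg$ with $\angle abc\ge90\dg$ and the smallness of $\angle arb$ gives $\angle bar\le 90\dg+\eps$. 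The cleanest route: $\angle b a r = 180\dg - \angle a b r - \angle a r b \le 180\dg - (90\dg - \angle rbc) - 0 \le 90\dg + \angle rbc$, and one checks $\angle rbc\le\eps$ from the thin-cone bound on the triangle $r\,c\,b$; symmetric for $\angle bcr$. Finally \sublem{4}: for $u\in\subcact a$, $v\in\subcact c$ of degree $4$, consider a \sap $uv$-path; it must pass through $a$ then $r$ then $c$ (these are cutvertices separating $u$ from $v$), and by \sublem{3} the turning angles at $a$ and at $c$ are each at most $90\dg+\eps$ while the self-approaching condition forbids the path from turning ``back'', so the chord direction $\vec{uv}$ is pinned: it is essentially horizontal because $\vec{ra}$ and $\vec{rc}$ are near-vertical and the overall displacement from the $\subcact a$ side to the $\subcact c$ side is forced to cross from left to right through a near-vertical ``waist''. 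Quantitatively, $\angle(\vec{uv},\vright)$ is bounded by half the spread of the near-vertical directions involved, giving $\le\eps/2$; the argument mirrors Lemma~\ref{lem:strmon-cact:horiz} essentially verbatim, with $90\dg$ angles in a strongly monotone path replaced by the $\hp{}{}$-containment from Fact~\ref{rem:sap-char}.

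The main obstacle is \sublem{3}, specifically controlling $\angle rbc$ (equivalently $\angle rba$): $b$ is a vertex of the 4-cycle but $rb$ is generally not an edge of $G_n$, so I cannot directly say $\vec{rb}\in\dircone{r}$. I will instead derive the needed bound on the position of $b$ from \sublem{2} together with the thin-cone bounds on $a$ and $c$: since $\subcact a\subseteq\hp{b}{a}$ and $\subcact c\subseteq\hp{b}{c}$, and since the long edges $ra,rc$ are near-vertical of comparable length (this is where Lemma~\ref{lem:lengths}, proved next in the paper, is foreshadowed — but for this lemma I only need the one-sided consequence that $b$ cannot be far off the vertical strip through $r$, which follows from the 4-cycle closing up), the triangle $rbc$ has its apex angle at $r$ bounded by $\eps$ and $b$ lies below $r$, forcing $\angle rbc$ and $\angle rba$ to be small. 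I expect to phrase this carefully so as not to create a circular dependency with Lemma~\ref{lem:lengths}.
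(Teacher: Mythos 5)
Your part \sublem{1} coincides with the paper's. The other three parts diverge from the paper's proof and each contains a genuine gap. For \sublem{2}, the paper does not reason about the cone spanned by $\subcact{a}$ at all: it takes an arbitrary vertex $t$ of $\subcact{c}$ and observes that any self-approaching $at$-path must leave the block through $b$ rather than $r$, because $\angle arc\le\eps<90\dg$ would violate Fact~\ref{rem:sap-char} at $r$; applying Fact~\ref{rem:sap-char} to the edge $bc$ of that path then yields $t\in\hp{b}{c}$ directly. Your version instead asserts that the descendant edges at $a$ ``point away from $b$'', i.e.\ it presupposes where $b$ lies relative to $a$ and its subcactus --- but the position of $b$ is precisely what is unknown at this stage (it is pinned down only later, in Lemma~\ref{lem:rab-geq90}\sublem{1}, using the present lemma), so your argument for \sublem{2} is circular as written.

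For \sublem{3} you correctly flag your own obstacle: you need $\angle rbc\le\eps$ or ``$b$ lies below $r$'', and neither follows from the thin-cone hypothesis, since $rb$ is not an edge and $\vec{rb}$ does not belong to $\dircone{\bar r}$. The paper avoids this entirely by looking one level down instead of at $b$: it takes the child block $\mu'$ with $\rt{\mu'}=a$ and a vertex $a'\ne a$ of $\mu'$, notes that $\vec{ra}$ and $\vec{aa'}$ are both upward directions in the narrow cone, so $\angle(\vec{ra},\vec{aa'})\le\eps$, while \sublem{2} gives $\angle baa'\ge90\dg$; if $\angle bar>90\dg+\eps$ these two facts are incompatible. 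No positional information about $b$ relative to $r$ is required. For \sublem{4}, your claim that a self-approaching $uv$-path ``must pass through $a$ then $r$ then $c$'' is false: $r$ is not a cutvertex separating $u$ from $v$ (one can go around via $b$), and indeed by the argument of \sublem{2} such a path uses $b$, not $r$. The remaining ``pinning'' sketch is not a proof. The paper's mechanism is concrete: choose neighbors $u_1$ in $\subcact{u}$ and $v_1$ in $\subcact{v}$ maximizing $\angle u_1uv$ and $\angle v_1vu$, show these angles are at least $90\dg$ via self-approaching paths, invoke Lemma~\ref{lem:diverge} to make $\ray{u_1}{u}$ and $\ray{v_1}{v}$ converge to a point $p$ with $\angle upv\le\eps$ and $\vec{pu},\vec{pv}$ within $\eps/2$ of $\vup$, which forces $\vec{uv}$ to be within $\eps/2$ of horizontal; a separate short argument (using that $\vec{ra}$ is counterclockwise of $\vec{rc}$) rules out the orientation $\angle(\vec{uv},-\vright)\le\eps/2$, a case you do not address.
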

\begin{proof}
\sublem{1}~It is~$\angle arc \leq \eps$. Thus, by
  Observation~\ref{obs:sapolygon}, $\angle a b c \geq 90\dg$.
\smallskip

\noindent\sublem{2}~Let $t$ be a vertex of~$\subcact{c}$. Since $\angle
  arc \leq \eps$, any \sa $at$-path must contain~$b c$. Thus, $t \in
  \hp{b}{c}$, and the claim for $\subcact{c}$ and, similarly,
  for~$\subcact{a}$ follows.
\smallskip

\noindent\sublem{3}~Consider block~$\mu'$ containing~$a' \neq a$,
  $\rt{\mu'} = a$; see Fig.~\ref{fig:angles-rab}. Then,
  $\angle(\vec{ra}, \vec{a a'}) \leq \eps$. By~\sublem{2}, it is~$b a
  a' \geq 90\dg$. If~$\angle b a r > 90\dg + \eps$, it
  is~$\angle(\vec{ra}, \vec{a a'}) > \eps$, a contradiction. The same
  argument applies for $\angle b c r$. \smallskip

\noindent\sublem{4}~Since~$u,v$ have degree~4, they are roots of some
  blocks. Let~$u_1$ be a neighbor of~$u$ in~$\subcact{u}$ and~$v_1$ a
  neighbor of~$v$ in~$\subcact{v}$ maximizing~$\angle u_1 u v$
  and~$\angle v_1 v u$; see Fig.~\ref{fig:angles-horizontal}. By
  considering \sa $u_1 v$ and $v_1 u$-paths, it follows $\angle u_1 u
  v, \angle v_1 v u \geq 90\dg$. Also, $\ray{u_1}{u}$
  and~$\ray{v_1}{v}$ converge by Lemma~\ref{lem:diverge}. Let~$p$ be
  their intersection. Then, $\angle upv \leq \eps$ and $\angle puv,
  \angle pvu \leq 90\dg$. It is $\angle(\vec{pu}, \vup) \leq \eps/2$
  and~$\angle(\vec{pv}, \vup) \leq \eps/2$. Therefore, if~$\vec{uv}$
  points upward, it forms an angle at most $\eps/2$ with the
  horizontal direction. If $\vec{uv}$ points downward, by symmetric
  arguments, $\vec{vu}$ forms an angle at most $\eps/2$ with the
  horizontal direction. The same holds for~$\vec{ac}$, $\vec{av}$,
  $\vec{uc}$.

  It remains to show that~$u$ is ``to the left''
  of~$v$. Since~$\vec{ra}$ is counterclockwise relative to~$\vec{rc}$,
  it is~$\angle(\vec{ac},\vright) \leq \eps/2$. Assume
  $\angle(\vec{uv},-\vright) \leq \eps/2$. Then, $u$ or~$v$ (\wlg,
  $u$) must be contained in both vertically aligned cones with
  apices~$a$ and~$c$ and angle~$\eps$ (dark gray area in
  Fig.~\ref{fig:angles-horizontal:2}). This contradicts the fact that
  $\vec{uc}$ forms an angle of at most $\eps/2$ with the horizontal
  direction.
\end{proof}

\newcommand{\LemLengthsText}{ It holds: \emph{\sublem{1}}~$ \frac{|ra|}{|rc|}$,
  $\frac{|rc|}{|ra|} \geq \cos \eps$; \emph{\sublem{2}}~$ \frac{|ac|}{|ra|},
  \frac{|ac|}{|rc|} \leq \tan \eps$; \emph{\sublem{3}}~The distance from~$a$ to the
  line through~$rc$ is at least $|ac| \cos \eps$.  \emph{\sublem{4}}~Consider
  block~$\mu$ containing $a,b,c,r$, vertex~$u \neq a$ in~$\subcact{a}$ and~$v
  \neq c$ in~$\subcact{c}$, $\deg(u) = \deg(v)=4$.  Then, $\frac{|au|}{|ac|} \leq
  \tan \eps$ and~$|uv| \leq (1 + 2\tan \eps) |ac|$. }

We can now describe block angles at~$a_i$, $c_i$ more precisely and
characterize certain \sa paths in~$\subcact{\bar{r}}$.  We show that a
\sa path from~$b_i$ \emph{downwards and to the left}, i.e., to an
ancestor block $\mu_j$ of~$\mu_i$, such that~$\mu_i$ is
in~$\subcact{c_j}$, must use~$a_i$. Similarly, a \sa path
\emph{downwards and to the right} must use~$c_i$. Since for several
ancestor blocks of~$\mu_i$ the roots lie on both of these two kinds of
paths, we can bound the area containing them and show that it is
relatively small. This implies that the ancestor blocks are small as
well, providing a contradiction.

\begin{lemma}
  \label{lem:rab-geq90}
  Consider non-leaf blocks~$\mu_0,\mu_1,\mu_2$, such that~$\rt{\mu_1}=c_0$
  and~$\mu_2$ in~$\subcact{a_1}$; see Fig.~\ref{fig:rab-geq90}.
  \begin{myenum}
  \item[\emph{(i)}]~It is $\angle r_2 a_2 b_2, \angle r_2 c_2 b_2 \in
    [90\dg, 90\dg + \eps]$, $b_2$ lies to the right
    of~$\ray{r_2}{a_2}$ and to the left of~$\ray{r_2}{c_2}$.
  \item[\emph{(ii)}]~Each \sa $b_2 a_0$-path uses $a_2$; each \sa $b_2
    c_1$-path uses~$c_2$.
  \end{myenum}
\end{lemma}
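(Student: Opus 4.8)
\emph{Proof plan.}
The plan is to pin down, in part~(i), the location of $b_2$ inside the block $\mu_2$, and then in part~(ii) to use Fact~\ref{rem:sap-char} together with the known almost--horizontal placement of $a_0$ and $c_1$ to forbid the ``wrong'' first edge of a \sa path leaving $b_2$.

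\emph{Part~(i).} The bounds $\angle r_2 a_2 b_2,\angle r_2 c_2 b_2\le 90\dg+\eps$ are Lemma~\ref{lem:angles}\sublem{3} for $\mu_2$. For the rest I locate $b_2$. Since $\mu_2$ is not a leaf block, $a_2$ and $c_2$ are cutvertices, so $\dircone{\bar r}$ contains the edges from $a_2$ (resp.\ $c_2$) to the two non-root vertices of its child block, and by the choice of $\bar r$ these directions lie within $\eps/2$ of $\vup$. By Lemma~\ref{lem:angles}\sublem{2} for $\mu_2$ we have $\subcact{a_2}\subseteq\hp{b_2}{a_2}$, so $\vec{a_2b_2}$ forms an angle $\ge 90\dg$ with each of these almost--vertical directions; together with $\angle r_2 a_2 b_2\le 90\dg+\eps$ and with $\vec{a_2r_2}$ within $\eps/2$ of $-\vup$, this forces $\vec{a_2b_2}$ (and likewise $\vec{c_2b_2}$) to be almost horizontal, namely within $\tfrac{3}{2}\eps$ of $\vright$ or of $-\vright$. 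A \sa $r_2b_2$-path is $(r_2,a_2,b_2)$ or $(r_2,c_2,b_2)$, so by Fact~\ref{rem:sap-char} at least one of $\angle r_2a_2b_2,\angle r_2c_2b_2$ is $\ge 90\dg$; by the left--right symmetry of the statement assume $\angle r_2a_2b_2\ge 90\dg$. Lemma~\ref{lem:angles}\sublem{4} for $\mu_2$ (with $u=a_2$, $v=c_2$, both of degree $4$) gives that $\vec{a_2c_2}$ is within $\eps/2$ of $\vright$ and $a_2$ lies to the left of $c_2$, so $a_2$ and $c_2$ are at almost the same height. If $\vec{a_2b_2}$ were within $\tfrac{3}{2}\eps$ of $-\vright$, then $b_2$ would lie to the left of both $a_2$ and $c_2$ at almost their height, so $\vec{b_2a_2}$ and $\vec{b_2c_2}$ would both lie within $\tfrac{3}{2}\eps$ of $\vright$ and hence $\angle a_2b_2c_2\le 3\eps=67.5\dg<90\dg$, contradicting Lemma~\ref{lem:angles}\sublem{1}. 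So $\vec{a_2b_2}$ is within $\tfrac{3}{2}\eps$ of $\vright$, and the symmetric argument (now $b_2$ to the right of both $a_2$ and $c_2$ would make $\angle a_2b_2c_2$ too small) gives $\vec{c_2b_2}$ within $\tfrac{3}{2}\eps$ of $-\vright$. Consequently $b_2$ lies horizontally strictly between $a_2$ and $c_2$, which --- because $\vec{r_2a_2},\vec{r_2c_2}$ are within $\eps/2$ of $\vup$ and $a_2$ is counterclockwise of $c_2$ --- is exactly the statement that $b_2$ lies to the right of $\ray{r_2}{a_2}$ and to the left of $\ray{r_2}{c_2}$; and $\angle r_2a_2b_2,\angle r_2c_2b_2\ge 90\dg$ follow since $\vec{a_2b_2}$ (resp.\ $\vec{c_2b_2}$) is within $\tfrac{3}{2}\eps$ of $\vright$ (resp.\ $-\vright$) while $\vec{a_2r_2}$ (resp.\ $\vec{c_2r_2}$) is within $\eps/2$ of $-\vup$ and $b_2$ lies on the stated side.

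\emph{Part~(ii).} Suppose a \sa $b_2a_0$-path $\rho$ began with the edge $b_2c_2$. Since $b_2$'s only neighbours are $a_2,c_2$ and $a_0\notin\subcact{c_2}$, the path must continue $\rho=(b_2,c_2,r_2,\dots)$ (it cannot enter $\subcact{c_2}$ and return), so $a_0\in\hp{b_2}{c_2}$ by Fact~\ref{rem:sap-char}. By part~(i), $b_2$ lies to the left of the almost--vertical line through $r_2$ and $c_2$ and $\angle r_2c_2b_2\ge 90\dg$, so $\vec{c_2b_2}$ points within $\tfrac{3}{2}\eps$ of $-\vright$. On the other hand $\mu_2\subseteq\subcact{a_1}\subseteq\subcact{c_0}$, so the degree-$4$ vertex $c_2$ lies in $\subcact{c_0}$ while the degree-$4$ vertex $a_0$ lies in $\subcact{a_0}$, and Lemma~\ref{lem:angles}\sublem{4} for $\mu_0$ gives that $\vec{a_0c_2}$ is within $\eps/2$ of $\vright$ with $a_0$ to the left of $c_2$, i.e.\ $\vec{c_2a_0}$ is within $\eps/2$ of $-\vright$. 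Hence the angle between $\vec{c_2b_2}$ and $\vec{c_2a_0}$ is at most $2\eps<90\dg$, so $a_0\notin\hp{b_2}{c_2}$ --- a contradiction. Thus every \sa $b_2a_0$-path uses $a_2$. The second claim is symmetric: a \sa $b_2c_1$-path using $a_2$ would begin $(b_2,a_2,r_2,\dots)$ with $c_1\in\hp{b_2}{a_2}$, but by part~(i) $\vec{a_2b_2}$ points within $\tfrac{3}{2}\eps$ of $\vright$, while Lemma~\ref{lem:angles}\sublem{4} for $\mu_1$ (with $u=a_2\in\subcact{a_1}$, $v=c_1\in\subcact{c_1}$, both of degree $4$) gives $\vec{a_2c_1}$ within $\eps/2$ of $\vright$, so the angle between $\vec{a_2b_2}$ and $\vec{a_2c_1}$ is at most $2\eps<90\dg$ and $c_1\notin\hp{b_2}{a_2}$.

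\emph{Main obstacle.} Everything after part~(i) is the short argument above. The delicate point is part~(i): organising the angular estimates so that the two $90\dg$ \emph{lower} bounds and the betweenness of $b_2$ come out exactly, including the near-degenerate placements of $b_2$ very close to $a_2$ or to $c_2$ (and of $a_2$ very close to $c_2$); this is the same flavour of trigonometric case analysis as in the proof of Lemma~\ref{lem:angles}, and with $\eps=22.5\dg$ all the accumulated error terms stay safely below $90\dg$.
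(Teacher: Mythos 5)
Your part~(ii) is essentially the paper's argument and would be fine once part~(i) is in place, but part~(i) has a genuine gap: you try to prove it from purely \emph{local} information about $\mu_2$ and its child blocks, whereas the lemma's hypothesis about $\mu_0$ and $\mu_1$ --- which your part~(i) never uses --- is exactly what is needed. Two steps fail. First, the constraints you cite for ``almost horizontal'' ($\subcact{a_2}\subseteq\hp{b_2}{a_2}$ together with $\angle r_2a_2b_2\le 90\dg+\eps$) only bound the angle of $\vec{a_2b_2}$ with $\vup$ from below; they do not exclude $\vec{a_2b_2}$ pointing steeply downward, and for $\vec{c_2b_2}$ the needed lower bound $\angle r_2c_2b_2\ge 90\dg$ is never obtained, since the existence of a \sa $r_2b_2$-path gives that bound for only \emph{one} of the two angles. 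Second, and more fundamentally, the exact inequality $\angle r_2c_2b_2\ge 90\dg$ cannot be extracted from a chain of approximate directional estimates. Take $a_2=(-0.5,0)$, $c_2=(0.5,0)$, $r_2=(0,-3)$, $b_2=(-0.4,0.018)$: then $\vec{a_2c_2}=\vright$, all of $\vec{r_2a_2},\vec{r_2b_2},\vec{r_2c_2}$ lie within $\eps/2$ of $\vup$, $\vec{a_2b_2}$ and $\vec{c_2b_2}$ lie well within $\tfrac32\eps$ of $\vright$ and $-\vright$ respectively, $\angle a_2b_2c_2>90\dg$, $\angle r_2a_2b_2\approx 90.7\dg$, and both subcactus halfplane conditions hold --- yet $\angle r_2c_2b_2\approx 81.7\dg<90\dg$. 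So your closing step (``$\ge 90\dg$ follows since the relevant vectors are within $O(\eps)$ of horizontal/vertical and $b_2$ is on the stated side'') is not valid, and part~(ii), which leans on $\vec{c_2b_2}$ being within $\tfrac32\eps$ of $-\vright$, collapses with it.

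The paper closes exactly this gap using $a_0$ and $c_1$: if, say, $\angle r_2a_2b_2<90\dg$, then a \sa path from $b_2$ to $a_0$ or to $c_1$ cannot start with $b_2a_2$ (it would have to continue to $r_2$, violating Fact~\ref{rem:sap-char}), so \emph{both} targets must be reached through $c_2$; since $a_0$ lies almost horizontally to the left of $c_2$ and $c_1$ almost horizontally to the right (Lemma~\ref{lem:angles}\sublem{4}), the two conditions $a_0,c_1\in\hp{b_2}{c_2}$ force $\vec{c_2b_2}$ to be nearly vertical and pointing away from $r_2$, whence $\angle r_2c_2b_2\ge 180\dg-\eps$, contradicting Lemma~\ref{lem:angles}\sublem{3}. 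That dichotomy is where the exact $90\dg$ comes from; the same argument with the roles swapped gives the bound at $c_2$. (A smaller issue: your ``by left--right symmetry assume $\angle r_2a_2b_2\ge 90\dg$'' is not a legitimate reduction, since the hypothesis $\rt{\mu_1}=c_0$, $\mu_2$ in $\subcact{a_1}$ and the conclusion of part~(ii) are not invariant under swapping the $a$'s and $c$'s.)
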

\begin{proof}
  \sublem{1}~Assume $\angle r_2 a_2 b_2 < 90\dg$. Then, all \sa
    $b_2 a_0$ and~$b_2 c_1$-paths must use~$c_2$.  By
    Lemma~\ref{lem:angles}\sublem{4}, the lines through $a_0 c_2$ and
    $c_2 c_1$ are ``almost horizontal'', i.e., $\angle(\vec{a_0 c_2},
    \vright)$, $\angle(\vec{c_2 c_1}, \vright) \leq \eps/2$.  Since
    $r_2 c_2$ is ``almost vertical'', $r_2$ must lie below these lines
    and it is $\angle a_0 c_2 r_2$, $\angle c_1 c_2 r_2 \in
    [90\dg-\eps, 90\dg+\eps]$; see Fig.~\ref{fig:rab-geq90:2}. First,
    let~$b_2$ lie to the left of~$\ray{r_2}{c_2}$. Then, $b_2$ is
    above~$a_0 c_2$, and it is $\angle r_2 c_2 b_2 = \angle a_0 c_2
    r_2 + \angle a_0 c_2 b_2 \geq (90\dg - \eps) + 90\dg =180\dg -
    \eps$.  Now let $b_2$ lie to the right of~$\ray{r_2}{c_2}$. Then,
    $b_2$ is above $c_2 c_1$, and it is $\angle r_2 c_2 b_2 = \angle
    c_1 c_2 r_2 + \angle c_1 c_2 b_2 \geq (90\dg - \eps) + 90\dg =
    180\dg - \eps$.  Since $\eps < 22.5\dg$, this contradicts
    Lemma~\ref{lem:angles}\sublem{3}. Similarly, $\angle r_2 c_2 b_2
    \geq 90\dg$. Thus, by Lemma~\ref{lem:angles}\sublem{3}, $\angle
    r_2 a_2 b_2, \angle r_2 c_2 b_2 \in [90\dg, 90\dg+\eps]$.
    Since~$\angle a_2 b_2 c_2 \geq 90\dg$, $b_2$ lies to the right
    of~$\ray{r_2}{a_2}$ and to the left of~$\ray{r_2}{c_2}$.
    (If~$b_2$ lies to the left of both rays, it is~$\angle a_2b_2c_2 =
    \angle(\vec{a_2b_2},\vec{c_2b_2}) \leq 2\eps < 90\dg$.)
\smallskip

  \noindent\sublem{2}~Similarly, if a \sa $b_2 a_0$-path uses~$c_2$
    instead of~$a_2$, it is~$\angle r_2 c_2 b_2 \geq 180\dg - \eps$.
        The last part follows analogously. 
\end{proof}

The next lemma allows us to show that certain blocks are drawn smaller than their ancestors.
\begin{lemma}
  \label{lem:lengths}
  It holds:
  \begin{myenum}
  \item[\emph{\sublem{1}}]~$ \frac{|ra|}{|rc|}$, $\frac{|rc|}{|ra|}
    \geq \cos \eps$;
  \item[\emph{\sublem{2}}]~$ \frac{|ac|}{|ra|}, \frac{|ac|}{|rc|} \leq
    \tan \eps$;
  \item[\emph{\sublem{3}}]~The distance from~$a$ to the line
    through~$rc$ is at least $|ac| \cos \eps$.
  \item[\emph{\sublem{4}}]~Consider block~$\mu$ containing $a,b,c,r$,
    vertex~$u \neq a$ in~$\subcact{a}$ and~$v \neq c$
    in~$\subcact{c}$, $\deg(u) = \deg(v)=4$.  Then, $\frac{|au|}{|ac|}
    \leq \tan \eps$ and~$|uv| \leq (1 + 2\tan \eps)
    |ac|$. 
  \end{myenum}
\end{lemma}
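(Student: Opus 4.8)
The plan is to treat the four parts in order, since each later part builds on the earlier ones. For \sublem{1}, recall that by assumption $\ang{\vup}{\vec{ra}}\le\eps/2$ and $\ang{\vup}{\vec{rc}}\le\eps/2$, so $\angle arc\le\eps$. In triangle $rac$ we also know from Lemma~\ref{lem:angles}\sublem{3} that $\angle rac=\angle bar-\angle bac$ and $\angle rca$ are each at most $90\dg+\eps$; combined with $\angle arc\le\eps$ and the fact that the three angles sum to $180\dg$, both $\angle rac$ and $\angle rca$ are also at least $90\dg-\eps$ (the small apex angle at $r$ forces the other two close to $90\dg$). The law of sines then gives $|ra|/|rc|=\sin\angle rca/\sin\angle rac\ge\sin(90\dg-\eps)/\sin(90\dg+\eps)=\cos\eps$ (using $\sin(90\dg+\eps)\le1$ and $\sin(90\dg-\eps)=\cos\eps$); the symmetric bound follows by swapping $a$ and $c$.

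For \sublem{2}, again apply the law of sines in triangle $rac$: $|ac|/|ra|=\sin\angle arc/\sin\angle rca\le\sin\eps/\sin(90\dg-\eps)=\tan\eps$, and likewise $|ac|/|rc|\le\tan\eps$. For \sublem{3}, the distance from $a$ to the line through $rc$ equals $|ac|\sin\angle acr$ (drop a perpendicular from $a$ to that line through the vertex $c$), and since $\angle acr=\angle rca\in[90\dg-\eps,90\dg+\eps]$ we have $\sin\angle acr\ge\cos\eps$, giving the claimed lower bound $|ac|\cos\eps$.

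For \sublem{4}, the bound $|au|/|ac|\le\tan\eps$ is obtained by rescaling the argument of \sublem{2}: apply Lemma~\ref{lem:lengths}\sublem{2} to the block $\subcact{a}$ whose root is $a$ — more precisely, $u$ and its ``sibling'' play the roles of $a$ and $c$ for that sub-block, so $|au|\le\tan\eps$ times the relevant base length; one then needs to compare that base length to $|ac|$. The cleaner route, and the one I expect to be the main obstacle, is to use Lemma~\ref{lem:angles}\sublem{4}: the vector $\vec{uv}$ is almost horizontal (angle $\le\eps/2$ with $\vright$), and $\vec{ac}$ is almost horizontal as well, so $u$ lies almost directly above $a$ (within the narrow vertical cone of angle $\eps$ at $a$) and $v$ almost directly above $c$. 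Since $\subcact{a}\subseteq\hp{b}{a}$ and the downward edge $au$ has direction within $\eps/2$ of $\vup$ while $\vec{ac}$ is within $\eps/2$ of $\vright$, projecting onto the direction $\vec{ac}$ shows the horizontal displacement from $a$ to $u$ is at most $|au|\sin(\textrm{something}\le\eps)$, which must be small relative to $|ac|$; bounding $|au|$ itself requires that $u$ stay inside the region already pinned down, and the ratio $|au|/|ac|\le\tan\eps$ drops out of the same triangle estimate as \sublem{2} applied one level down. Finally, $|uv|\le|ua|+|ac|+|cv|\le\tan\eps\,|ac|+|ac|+\tan\eps\,|ac|=(1+2\tan\eps)|ac|$ by the triangle inequality along $u\to a\to c\to v$, using $|au|,|cv|\le\tan\eps\,|ac|$. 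The delicate point throughout is making sure the ``one level down'' application of \sublem{2} is legitimate — i.e., that the sub-block $\mu$-child at $a$ inherits the same small-apex-angle hypothesis, which it does because $\bar r$ was chosen in Lemma~\ref{lem:thin-subcactus} so that \emph{every} cutvertex of $\subcact{\bar r}$ has all its upward directions within a cone of angle $\eps$.
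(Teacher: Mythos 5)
Your parts \sublem{2}, \sublem{3} and the concluding triangle inequality $|uv|\le|ua|+|ac|+|cv|$ of \sublem{4} coincide with the paper's argument. The divergence is in how you establish the one fact that everything here hinges on, namely $\angle acr\in[90\dg-\eps,90\dg+\eps]$. The paper reads this off from Lemma~\ref{lem:angles}\sublem{4} (whose proof explicitly records that $\vec{ac}$, like $\vec{uv}$, $\vec{uc}$, $\vec{av}$, forms an angle at most $\eps/2$ with $\vright$), combined with $\ang{\vup}{\vec{rc}}\le\eps/2$: a near-horizontal $\vec{ca}$ against a near-vertical $\vec{cr}$ gives the angle at $c$ within $\eps$ of $90\dg$. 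You instead try to get it from Lemma~\ref{lem:angles}\sublem{3} plus the angle sum of triangle $rac$, and this has two genuine problems. First, $\angle rac\le\angle bar$ (equivalently your identity $\angle rac=\angle bar-\angle bac$) presupposes that $c$ lies inside the wedge at $a$ between $\ray{a}{r}$ and $\ray{a}{b}$; the paper explicitly warns that a block ``might be non-convex and even non-planar,'' so this containment is not free. Second, even granting it, the angle sum only yields $\angle rca\ge 180\dg-\eps-(90\dg+\eps)=90\dg-2\eps$, not $90\dg-\eps$ as you assert; this degrades \sublem{1} to $\cos 2\eps$ and \sublem{2} to $\tan 2\eps$. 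Since the whole point of Lemmas~\ref{lem:lengths} and~\ref{lem:2strips} is a numerical contradiction at the fixed value $\eps=22.5\dg$, the weakened constants are not harmless.

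In \sublem{4}, the first bound $|au|/|ac|\le\tan\eps$ is also not established. Your ``apply \sublem{2} one level down to the block rooted at $a$'' compares the wrong quantities: that instance of \sublem{2} relates the side $a'c'$ of the child block to its downward edges $aa'$, $ac'$, and says nothing about $|au|$ versus $|ac|$ (indeed $u$ need not be adjacent to $a$). The paper's actual argument is a single law-of-sines computation in the triangle $acu$: $\angle acu\le\eps$ because both $\vec{ca}$ and $\vec{cu}$ are within $\eps/2$ of horizontal by Lemma~\ref{lem:angles}\sublem{4}, and $\angle auc\in[90\dg-\eps,90\dg+\eps]$, whence $|au|/|ac|=\sin\angle acu/\sin\angle auc\le\tan\eps$. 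You gesture towards the near-horizontality of $\vec{uv}$ and $\vec{ac}$ but never assemble it into this (or any other checkable) estimate, so as written this step is a gap.
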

\begin{proof}
\sublem{1}~Due to symmetry, we show only one part. By
  Lemma~\ref{lem:angles}\sublem{4}, $\angle a c r \in [90\dg - \eps,
  90\dg + \eps]$. Therefore,
  \begin{equation*}
\frac{|ra|}{|rc|} = \frac{\sin \angle acr}{\sin \angle
    rac} \geq \frac{\sin(90\dg - \eps)}{1} = \cos \eps.
\end{equation*}

\noindent\sublem{2}~It is~$\angle a r c \leq \eps$. Therefore, 
\begin{equation*}
\frac{|ac|}{|ra|} =
  \frac{\sin \angle arc}{\sin \angle acr} \leq \frac{\sin(\eps)}{\sin(90\dg -
    \eps)} = \tan \eps.
\end{equation*}

\noindent\sublem{3}~Let~$d$ be the point on the line through~$rc$
  minimizing~$|ad|$. Since~$\angle a c r \in [90\dg - \eps, 90\dg + \eps]$, it
  is~$\angle(\vec{ac}, \vec{ad}) \leq \eps$. Thus, $|ad| \geq |ac| \cos \eps$.
\smallskip

\noindent\sublem{4}~By Lemma~\ref{lem:angles}\sublem{4}, $\angle a c u
  \leq \eps$ and $\angle a u c \in [90\dg - \eps, 90\dg +
  \eps]$. Thus,
  \begin{equation*}
    \frac{|au|}{|ac|} = \frac{\sin \angle a c u}{\sin \angle a u c} \leq
    \frac{\sin \eps}{ \sin(90\dg - \eps)} = \tan \eps.
  \end{equation*}

 Similarly, $|vc| \leq |ac| \tan \eps$. Thus, it is $|uv| \leq
  |ua|+|ac|+|cv| \leq (1 + 2\tan \eps) |ac|$. 

\end{proof}

From now on, let~$\mu_0$ be the root block of~$\subcact{\bar{r}}$ and $\mu_1$,
$\mu_2$, $\mu_3$ its descendants such that~$\rt{\mu_1} = c_0$, $\rt{\mu_2} =
a_1$, $\rt{\mu_3} \in \{ a_2, c_2 \}$; see Fig.~\ref{fig:Gr}.  Light gray blocks
are the subject of Lemma~\ref{lem:cone-strip}\sublem{1}, which shows that
several ancestor roots lie inside a cone with a small angle. Dark gray blocks
are the subject of Lemma~\ref{lem:cone-strip}\sublem{2}, which considers the
intersection of the cones corresponding to a pair of sibling blocks and shows
that some of their ancestor roots lie inside a narrow strip; see
Fig.~\ref{fig:counterex:strip} for a sketch.

\newcommand{\LemConeStripText}{ Let~$\mu$ be a block in~$\subcact{c_2}$ with
  vertices~$a$, $b$, $c$, $\rt{\mu}$. \emph{\sublem{1}} Let~$\mu$ have depth~5
  in~$\subcact{\bar{r}}$. Then, the cone $\hp{b}{a} \cap \hp{b}{c}$
  contains~$\rt{\mu}$, $\rt{\pred{\mu}}$, $\rt{\gpred{2}{\mu}}$
  and~$\rt{\gpred{3}{\mu}}$. \emph{\sublem{2}}~Let $\mu$ have depth~4
  in~$\subcact{\bar{r}}$. There exist~$u$ in~$\subcact{a}$ and $v$ in
  $\subcact{c}$ of degree~4 and a strip~$S$ containing $\rt{\mu}$,
  $\rt{\pred{\mu}}$, $\rt{\gpred{2}{\mu}} = \rt{\mu_2}$, such that $u$ and~$v$
  lie on the different boundaries of~$S$. }
\begin{lemma}
  \label{lem:cone-strip}
    Let~$\mu$ be a block in~$\subcact{c_2}$ with vertices~$a$, $b$,
    $c$, $\rt{\mu}$.
  \begin{myenum}
  \item[\emph{\sublem{1}}]~Let~$\mu$ have depth~5
    in~$\subcact{\bar{r}}$. Then, the cone $\hp{b}{a} \cap \hp{b}{c}$
    contains~$\rt{\mu}$, $\rt{\pred{\mu}}$, $\rt{\gpred{2}{\mu}}$
    and~$\rt{\gpred{3}{\mu}}$.
  \item[\emph{\sublem{2}}]~Let $\mu$ have depth~4
    in~$\subcact{\bar{r}}$. There exist~$u$ in~$\subcact{a}$ and $v$
    in $\subcact{c}$ of degree~4 and a strip~$S$ containing
    $\rt{\mu}$, $\rt{\pred{\mu}}$, $\rt{\gpred{2}{\mu}} = \rt{\mu_2}$,
    such that $u$ and~$v$ lie on the different boundaries of~$S$.
  \end{myenum}
\end{lemma}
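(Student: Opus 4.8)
The plan is to read both parts off Fact~\ref{rem:sap-char}, applied to carefully chosen \saps that start at the top vertex $b$ of a block, run down its spine of cutvertices, and then continue far into a subcactus hanging off one of its ancestors --- once to the left and once to the right.

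For part \sublem{1} I would first fix the combinatorics. Since $\mu\subseteq\subcactb{c_2}{\mu_2}\subseteq\subcactb{a_1}{\mu_1}$, the block path of $T_G$ from the root of $\subcact{\bar r}$ to $\mu$ runs through $\mu_3$ and then $\mu_2$, so $\gpred{2}{\mu}=\mu_3$ with $\rt{\gpred{2}{\mu}}=c_2$ and $\gpred{3}{\mu}=\mu_2$ with $\rt{\gpred{3}{\mu}}=a_1$; moreover $\rt{\mu},\rt{\pred{\mu}},c_2,a_1$ are, in this order, exactly the cutvertices that every path from a vertex of $\mu$ to a vertex lying below $\mu_2$ must pass through. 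Using Lemma~\ref{lem:angles}\sublem{4} for the block $\mu_0$, I would fix a degree-$4$ vertex $u$ in $\subcact{a_0}$ (which, like every vertex of $\subcact{a_0}$, is separated from $\mu$ by each of $\rt{\mu},\rt{\pred{\mu}},c_2,a_1$), so that $\vec{uc}$ is within $\eps/2$ of $\vright$ (as $u$ sits in the ``left'' and $c$ in the ``right'' child subcactus of $\mu_0$), and symmetrically, via Lemma~\ref{lem:angles}\sublem{4} for $\mu_1$, a degree-$4$ vertex $v$ in $\subcact{c_1}$ with $\vec{av}$ within $\eps/2$ of $\vright$. The heart of the argument is to show that a \sa $bu$-path must leave $b$ along the edge $ba$: were it instead $b,c,\rt{\mu},\dots,u$, then $u$ and $\rt{\mu}$ would lie in $\hp{b}{c}$ by Fact~\ref{rem:sap-char}, and combining this with $\vec{uc}$ being nearly horizontal, $\vec{c\,\rt{\mu}}$ being nearly vertical, and $b$ lying between $\ray{\rt{\mu}}{a}$ and $\ray{\rt{\mu}}{c}$ (the conclusion of Lemma~\ref{lem:rab-geq90}\sublem{1}, whose argument applies to $\mu$ up to renaming $a\leftrightarrow c$, since $\mu$ has degree-$4$ vertices on both sides below it) would force $\angle\,\rt{\mu}\,c\,b\ge 180\dg-\eps$, contradicting Lemma~\ref{lem:angles}\sublem{3}. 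Hence every \sa $bu$-path reads $b,a,\rt{\mu},\dots,c_2,\dots,a_1,\dots,u$, so by Fact~\ref{rem:sap-char} all of $\rt{\mu},\rt{\pred{\mu}},c_2,a_1$ lie in $\hp{b}{a}$; the symmetric argument with $v$ places the same four vertices in $\hp{b}{c}$, which yields the claimed containment in $\hp{b}{a}\cap\hp{b}{c}$.

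For part \sublem{2} I would let $\mu_a$ and $\mu_c$ be the root blocks of $\subcact{a}$ and $\subcact{c}$; both are depth-$5$ blocks inside $\subcactb{c_2}{\mu_2}$, with $\rt{\mu_a}=a$, $\rt{\mu_c}=c$, $\pred{\mu_a}=\pred{\mu_c}=\mu$, $\gpred{2}{\mu_a}=\gpred{2}{\mu_c}=\pred{\mu}$, and $\gpred{3}{\mu_a}=\gpred{3}{\mu_c}=\gpred{2}{\mu}=\mu_2$. Applying part \sublem{1} to $\mu_a$ and to $\mu_c$, the cone $\hp{b_{\mu_a}}{a_{\mu_a}}\cap\hp{b_{\mu_a}}{c_{\mu_a}}$ contains $a,\rt{\mu},\rt{\pred{\mu}},\rt{\mu_2}$ and the cone $\hp{b_{\mu_c}}{a_{\mu_c}}\cap\hp{b_{\mu_c}}{c_{\mu_c}}$ contains $c,\rt{\mu},\rt{\pred{\mu}},\rt{\mu_2}$, so $\rt{\mu},\rt{\pred{\mu}},\rt{\mu_2}$ lie in both. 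Taking $u=a_{\mu_a}$ (a degree-$4$ vertex of $\subcact{a}$) and $v=c_{\mu_c}$ (a degree-$4$ vertex of $\subcact{c}$), which lie on the boundary lines of $\hp{b_{\mu_a}}{a_{\mu_a}}$ and of $\hp{b_{\mu_c}}{c_{\mu_c}}$ respectively, and observing that both boundary lines are nearly vertical (being orthogonal to the nearly-horizontal vectors $\vec{b_{\mu_a}a_{\mu_a}}$ and $\vec{b_{\mu_c}c_{\mu_c}}$) and hence differ in direction by only $O(\eps)$, one obtains the required strip $S$ as the region between the lines through $u$ and $v$ taken in a common direction between those two; it contains $\rt{\mu},\rt{\pred{\mu}},\rt{\mu_2}$ because these lie on the correct side of each of the two original boundary lines.

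I expect the crux to be the first step of part \sublem{1}: forcing the \sa path from $b$ to leave on the prescribed side and then traverse the whole spine of cutvertices. Making this precise requires choosing the far endpoints deep enough (in $\subcact{a_0}$ and $\subcact{c_1}$) that the spine through $\rt{\mu},\rt{\pred{\mu}},c_2,a_1$ is unavoidable, placing them in the correct side subcactuses so that Lemma~\ref{lem:angles}\sublem{4} forces the connecting segments to be nearly horizontal, and then rerunning the quantitative angle estimate behind Lemma~\ref{lem:rab-geq90} around $\rt{\mu}$ and $c$ (also handling the two possible zig-zag types of $\mu$'s two parent edges, which reduce to one another by the $a\leftrightarrow c$ symmetry). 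Given part \sublem{1}, part \sublem{2} should amount to bookkeeping: turning the two nearly-vertical cone boundary lines into an honest, sufficiently thin parallel-sided strip.
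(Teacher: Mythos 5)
Your part \sublem{1} is essentially the paper's argument: the paper likewise takes two \sa paths from $b$ to targets far down on the left and on the right (it uses $b_0$ and $b_1$ rather than degree-4 vertices deep in $\subcact{a_0}$ and $\subcact{c_1}$), forces their first edges to be $ba$ and $bc$ by Lemma~\ref{lem:rab-geq90}\sublem{2} applied to $\mu$ (you re-derive that lemma inline with the same angle estimate rather than citing it), and then reads the containment of the four cutvertex roots off Fact~\ref{rem:sap-char}. That part is sound.

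Part \sublem{2} has a genuine gap. You choose the \emph{outer} vertices $u=a_{\mu_a}$ and $v=c_{\mu_c}$ and conclude that the roots lie in a strip between parallel lines through $u$ and $v$ ``because these lie on the correct side of each of the two original boundary lines.'' That inference fails: $\hp{b_{\mu_a}}{a_{\mu_a}}$ is the half-plane to the \emph{left} of the near-vertical line through the leftmost vertex $a_{\mu_a}$, and $\hp{b_{\mu_c}}{c_{\mu_c}}$ is the half-plane to the \emph{right} of the near-vertical line through the rightmost vertex $c_{\mu_c}$, so their intersection below the blocks is a downward-opening wedge whose apex sits at depth of order $|uv|/\eps$ and whose width grows linearly with depth. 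It is not contained in \emph{any} strip, let alone one of width at most $|uv|$ with $u$ and $v$ on its two boundaries: a point can lie in both half-planes yet arbitrarily far from both of your parallel lines. The paper instead takes the \emph{inner} vertices $u=c_l$ (the $c$-vertex of the child block rooted at $a$) and $v=a_r$ (the $a$-vertex of the child block rooted at $c$), so that the relevant half-planes are ``right of the line through $c_l$'' and ``left of the line through $a_r$''; and, crucially, it proves via Lemma~\ref{lem:diverge} (using that $\subcact{c_l}$ and $\subcact{a_r}$ lie in $\hp{b_l}{c_l}$ resp.\ $\hp{b_r}{a_r}$ by Lemma~\ref{lem:angles}\sublem{2}) that the two downward boundary rays \emph{converge} at a point $p$. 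Only then are the roots trapped in the bounded triangle $c_l a_r p$, which fits in a strip of width at most $|c_l a_r|$ with $c_l$ and $a_r$ on its boundaries. Both the switch to the inner vertices and the convergence argument are absent from your sketch, and without them the strip you construct need not contain $\rt{\mu}$, $\rt{\pred{\mu}}$, $\rt{\mu_2}$.
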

\begin{proof}
\noindent\sublem{1}~Consider a \sa $b b_0$-path~$\rho_0$ and a \sa $b
  b_1$-path~$\rho_1$.  By Lemma~\ref{lem:rab-geq90}\sublem{2}
  applied to~$\mu$, $b a$ is the first edge
  of~$\rho_0$ and $b c$ the first edge of~$\rho_1$. Since the
  cutvertices~$\rt{\mu}$, $\rt{\pred{\mu}}$, $\rt{\gpred{2}{\mu}}$,
  $\rt{\gpred{3}{\mu}}$ are on both~$\rho_0$ and~$\rho_1$, the
  statement holds. \smallskip

\noindent\sublem{2}~Consider blocks~$\mu_l$, $\mu_r$, such
  that~$\rt{\mu_l} = a$ and~$\rt{\mu_r} = c$. By~\sublem{1},
  $\rt{\mu}$, $\rt{\pred{\mu}}$, $\rt{\gpred{2}{\mu}}$ are in $\Lambda
  := \hp{b_l}{a_l} \cap \hp{b_l}{c_l} \cap \hp{b_r}{a_r} \cap
  \hp{b_r}{c_r}$. Let~$\vec{v_l}$ be the vector~$\vec{b_l c_l}$
  rotated by~$90\dg$ clockwise and $\vec{v_r}$ be the vector~$\vec{b_r
    a_r}$ rotated by~$90\dg$ counterclockwise. Note that by
  Lemma~\ref{lem:angles}\sublem{2},
  $\subcact{c_l}$, $\subcact{a_r}$ lie in~$\hp{b_l}{ c_l}$,
  $\hp{b_r}{a_r}$ respectively. Therefore, $\ray{c_l}{\vec{v_l}}$
  and~$\ray{a_r}{\vec{v_r}}$ (green resp. blue arrows in
  Fig.~\ref{fig:counterex:strip}) converge, since the converse would
  contradict Lemma~\ref{lem:diverge}. Let~$p$ be their intersection.
  Due to the chosen directions, $\rt{\mu}$, $\rt{\pred{\mu}}$,
  $\rt{\gpred{2}{\mu}}$ are below both~$c_l$ and~$a_r$.  Therefore,
  $\rt{\mu}$, $\rt{\pred{\mu}}$, $\rt{\gpred{2}{\mu}}$ are contained
  in the triangle $c_l a_r p$, which lies inside a strip~$S$ of width
  at most~$|c_l a_r|$, whose respective boundaries contain~$c_l$
  and~$a_r$. By Lemma~\ref{lem:lengths}\sublem{4} and~\sublem{2}, $
  |c_l a_r| \leq (1 + 2 \tan \eps) |a c| \leq (1 + 2 \tan \eps)(\tan
  \eps) \min \{ |\rt{\mu} a|, |\rt{\mu} c|\}$. 
   \end{proof}

\begin{figure}[tb]
  \hfill
  \subfloat[]{\includegraphics[page=1]{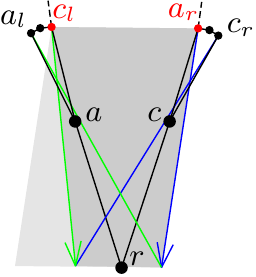} \label{fig:counterex:strip}}\hfill
  \subfloat[]{\includegraphics[page=2]{fig/counterex-lemmas.pdf} \label{fig:counterex:2strips}}\hfill
  \subfloat[]{\includegraphics[page=3]{fig/counterex-lemmas.pdf} \label{fig:counterex:diamond}}\hfill\null
  \caption{Showing the contradiction in Theorem~\ref{thm:no-sad}.}
\end{figure}

Again, we consider two siblings and the intersection of their corresponding
strips, which forms a small diamond containing the root of the ancestor block;
see Fig.~\ref{fig:counterex:2strips}, \ref{fig:counterex:diamond}.

\newcommand{\LemTwoStripsText}{ Consider block~$\mu = \mu_3$
  containing~$r=r(\mu),a,b,c$, and let $r_\pi := \rt{\pred{\mu_3}}$. It holds:
  \emph{\sublem{1}} $| r_\pi r| \leq \frac{(1 + 2 \tan \eps)(\tan \eps)^2}{\cos \eps}
  ( |r a|+ |r c|)$; \emph{\sublem{2}} $ |ra|,|rc| \leq |r r_\pi| (\tan \eps)^2
  $.}
\begin{lemma}
  \label{lem:2strips}
  \LemTwoStripsText
\end{lemma}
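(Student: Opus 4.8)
The plan is to combine the two ``strip'' statements from Lemma~\ref{lem:cone-strip}\sublem{2} applied to two sibling blocks, intersect the resulting strips to pin down $r_\pi$ inside a small diamond, and then read off the length bounds. Concretely, I would first recall the setup from Fig.~\ref{fig:Gr}: $\mu_3$ has root $r=\rt{\mu_3}\in\{a_2,c_2\}$, its parent block $\pred{\mu_3}$ has root $r_\pi$, and $\gpred{2}{\mu_3}=\mu_2$ with $\rt{\mu_2}=a_1$. Both $a$-side and $c$-side subcactuses of $\mu_3$ are nontrivial (they are copies of $G_k$ with $k\ge 2$), so Lemma~\ref{lem:cone-strip}\sublem{2} can be invoked for $\mu=\mu_3$ \emph{and} for its sibling block (the other child of $\pred{\mu_3}$, whose root is the ``$b$-side'' of $\pred{\mu_3}$ or the symmetric partner). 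Each application yields a strip $S$, $S'$ of width at most $(1+2\tan\eps)(\tan\eps)\min\{|ra|,|rc|\}$ (resp. the analogous quantity for the sibling), with $\rt{\mu_3},r_\pi,\rt{\mu_2}$ lying in $S$ and the sibling's corresponding three roots lying in $S'$; crucially $r_\pi$ and $\rt{\mu_2}$ lie in \emph{both} strips.

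For \sublem{1}: the two strips $S,S'$ have boundary directions that are nearly orthogonal. Indeed, $S$ is (up to $O(\eps)$) ``almost horizontal'' — its boundaries pass through the degree-4 vertices $u\in\subcact{a}$, $v\in\subcact{c}$, and by Lemma~\ref{lem:angles}\sublem{4} the direction $\vec{uv}$ makes an angle at most $\eps/2$ with $\vright$; by the symmetric reasoning for the sibling, $S'$ is ``almost vertical''. Hence $S\cap S'$ is contained in a diamond (parallelogram) whose diameter is bounded, up to a factor like $1/\cos(2\eps)\le 1/\cos\eps$ coming from the angle between the two strip directions, by the larger of the two widths, i.e. by $\frac{(1+2\tan\eps)(\tan\eps)}{\cos\eps}\min\{|ra|,|rc|\}$. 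Since $r_\pi,\rt{\mu_3}=r$ both lie in this diamond, $|r_\pi r|$ is at most the diameter, and using $\min\{|ra|,|rc|\}\le\tan\eps\cdot|ac|\le\tan\eps\cdot\tfrac12(|ra|+|rc|)$ — wait, more directly, one bounds $\min\{|ra|,|rc|\}\le\tfrac12(|ra|+|rc|)$ trivially and then absorbs one extra $\tan\eps$ from Lemma~\ref{lem:lengths}\sublem{2} (since the strip width already carried a $\tan\eps$ relative to $|ac|$, and $|ac|\le\tan\eps\min\{|ra|,|rc|\}$ gives the second power). This yields $|r_\pi r|\le\frac{(1+2\tan\eps)(\tan\eps)^2}{\cos\eps}\,\min\{|ra|,|rc|\}\le\frac{(1+2\tan\eps)(\tan\eps)^2}{\cos\eps}\,\bigl(|ra|+|rc|\bigr)$ after re-examining which factor of $\tan\eps$ lands where; the precise bookkeeping of the two $\tan\eps$ factors against $|ac|$ versus $\min\{|ra|,|rc|\}$ is the one place needing care.

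For \sublem{2}: this is essentially the contrapositive reading of the diamond bound. We have $|ra|,|rc|\le|ra|+|rc|$, and from \sublem{1}, $|r_\pi r|\ge$ nothing directly — instead I would argue the \emph{reverse} inequality by noting $r$ itself is $a$ or $c$ of the parent block $\pred{\mu_3}$: applying Lemma~\ref{lem:lengths}\sublem{2} to $\pred{\mu_3}$ gives $|ac$ of the parent$|\le\tan\eps\cdot|r_\pi r|$, and then Lemma~\ref{lem:lengths}\sublem{4} relates $|ra|,|rc|$ (which sit inside $\subcact{r}=\subcact{a\text{ or }c\text{ of }\pred{\mu_3}}$) to $|ac$ of parent$|$ with another factor $\tan\eps$, chaining to $|ra|,|rc|\le(\tan\eps)^2|r_\pi r|$. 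The main obstacle throughout is keeping the geometry of ``which vertex plays the role of $a,b,c,r$ in which block'' straight across the three generations $\mu_3,\pred{\mu_3},\mu_2$, and making sure the sibling block used for $S'$ genuinely has two nontrivial subcactuses so that Lemma~\ref{lem:cone-strip}\sublem{2} applies — this is where the hypothesis $n\ge 10$ (giving enough depth in $G_n$) is really used.
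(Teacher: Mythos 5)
Your part~\sublem{2} is fine and matches the paper's (terse) justification: since $r\in\{a_2,c_2\}$ is a non-root vertex of $\pred{\mu_3}$, Lemma~\ref{lem:lengths}\sublem{2} applied to $\pred{\mu_3}$ gives $|a_2c_2|\le\tan\eps\,|r_\pi r|$, and Lemma~\ref{lem:lengths}\sublem{4} gives $|ra|,|rc|\le\tan\eps\,|a_2c_2|$. Part~\sublem{1}, however, has two genuine gaps. First, you build the strips from $\mu_3$ and a \emph{sibling} of $\mu_3$. Lemma~\ref{lem:cone-strip}\sublem{2} is stated only for blocks of depth~4, and $\mu_3$ has depth~3; more importantly, with your choice the intersection of the two strips is only guaranteed to contain $r_\pi$ and $\rt{\gpred{2}{\mu_3}}$, but not $r=\rt{\mu_3}$ (which lies in the first strip only), so the diameter of the intersection says nothing about $|r_\pi r|$. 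The paper instead applies Lemma~\ref{lem:cone-strip}\sublem{2} to the two \emph{children} of $\mu_3$ (the blocks rooted at $a$ and at $c$, both of depth~4); each resulting strip then contains all three of $a$ (resp.\ $c$), $r$ and $r_\pi$, which is what makes the diameter of the intersection relevant.

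Second, the geometry of the intersection is not what you describe. A strip whose two boundary lines pass through $u$ and $v$ and whose width is comparable to $|uv|$ has boundaries nearly \emph{perpendicular} to the near-horizontal vector $\vec{uv}$; moreover $s_1$ contains both $a$ and $r$ and has width $d=(1+2\tan\eps)(\tan\eps)^2|ac|\ll|ra|$, so it is forced to align with $\vec{ar}$, and likewise $s_2$ with $\vec{cr}$. These two directions differ by at most $\eps$, so the strips are nearly \emph{parallel} (both near-vertical), not orthogonal, and their intersection is a long thin parallelogram rather than a small diamond: its diameter exceeds the strip widths by a factor of roughly $|ra|/|ac|\approx 1/\tan\eps$, which is exactly why the final bound carries $(\tan\eps)^2$ and not $(\tan\eps)^3$. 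Controlling this diameter is the real content of the lemma: with $a',c'$ the relevant corners of the parallelogram one has $|ra'|/|ra|=d/d_a$ and $|rc'|/|rc|=d/d_c$, where $d_a$ (resp.\ $d_c$) is the distance from $a$ to the line through $rc$ (resp.\ from $c$ to the line through $ra$), and one needs the lower bound $d_a,d_c\ge|ac|\cos\eps$ of Lemma~\ref{lem:lengths}\sublem{3} to conclude $|rr_\pi|\le|ra'|+|rc'|\le\frac{(1+2\tan\eps)(\tan\eps)^2}{\cos\eps}(|ra|+|rc|)$. You never invoke Lemma~\ref{lem:lengths}\sublem{3}, and the ``bookkeeping of the two $\tan\eps$ factors'' that you explicitly defer is precisely this computation, so the argument as proposed does not close.
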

\begin{proof}
\noindent\sublem{1}~Let~$r_\pi = \rt{\pred{\mu}}$. Define~$d = (1 + 2
  \tan \eps)(\tan \eps)^2 |ac|$. Then, by
  Lemma~\ref{lem:lengths}\sublem{2} and~\sublem{4} and
  Lemma~\ref{lem:cone-strip}\sublem{2}, vertices~$a$, $r$ and~$r_\pi$ are
  contained in a strip~$s_1$ (green in
  Fig.~\ref{fig:counterex:2strips}) of width~$d$. Additionally, both
  boundaries of $s_1$ contain vertices of~$\subcact{a}$ (red dots),
  which lie in~$\hp{b}{a}$ and, by
  Lemma~\ref{lem:rab-geq90}\sublem{1}, to the left
  of~$\ray{r}{a}$. Thus, the downward direction
  along~$s_1$ is counterclockwise compared to~$\vec{a r}$. (Otherwise,
  the green strip could not contain~$a$.)
  Similarly, vertices~$c$, $r$ and~$r_0$ are contained in a
  strip~$s_2$ (blue) of width at most~$d$, and both boundaries of
  $s_2$ contain vertices of~$\subcact{c}$, which lie to the right
  of~$\ray{r}{c}$. Thus, the downward direction along~$s_2$ is
  clockwise compared to~$\vec{c r}$; see
  Fig.~\ref{fig:counterex:2strips}.

  Let us find an upper bound for the diameter of the
  parallelogram~$s_1 \cap s_2$. In the critical case, the right side
  of~$s_1$ touches~$ra$ and the left side of~$s_2$ touches~$rc$; see
  Fig.~\ref{fig:counterex:diamond}. Let~$a'$ (resp. $c'$) be the
  intersection of the right (resp. left) sides of~$s_1$ and~$s_2$,
  and~$r'$ the intersection of the left side of~$s_1$ and right side
  of~$s_2$. Let $d_a$ be the distance from~$a$ to the line through
  $rc$ and~$d_c $ the distance from~$c$ to the line through $ra$. By
  Lemma~\ref{lem:lengths}\sublem{3}, it is~$d_a, d_c \geq |ac| \cos
  \eps$. Moreover, it holds: $\frac{|r a'|}{|r a|} = \frac{d}{d_a}$
  and~$\frac{|r c'|}{|r c|} = \frac{d}{d_c}$. Therefore, $|ra'| \leq
  \frac{d |ra|}{|ac| \cos \eps}$, $|rc'| \leq \frac{d |rc|}{|ac| \cos
    \eps}$ and
  \begin{equation*}
    |r r'| \leq |r a'| + |r c'| \leq \frac{(1 + 2 \tan \eps)(\tan \eps)^2}{\cos
      \eps} (|ra| + |rc|).
  \end{equation*}
  Since $\angle a' r c' \leq \eps$, $r r'$ is the diameter, thus, $|r r_\pi|
  \leq |r r'|$. 
\smallskip

\noindent\sublem{2}~The claim follows from Lemma~\ref{lem:lengths}\sublem{2}
  and~\sublem{4}. 
\end{proof}

For~$\eps \leq 22.5\dg$, the two claims of Lemma~\ref{lem:2strips} contradict
each other. This concludes the proof of Theorem~\ref{thm:no-sad}.

\section{Planar Increasing-Chord Drawings of 3-Trees}
\label{sec:schnyder}

In this section, we show how to construct planar increasing-chord drawings of
3-trees. We make use of \emph{Schnyder labelings}~\cite{Schnyder1990} and
drawings of triangulations based on them. For a plane triangulation~$G=(V,E)$
with external vertices~$r,g,b$, its Schnyder labeling is an orientation and
partition of the interior edges into three trees~$T_r, T_g, T_b$ (called \emph{red},
\emph{green} and \emph{blue tree}), such that for each internal vertex $v$, its incident edges
appear in the following clockwise order: exactly one outgoing red, an arbitrary
number of incoming blue, exactly one outgoing green, an arbitrary number of
incoming red, exactly one outgoing blue, an arbitrary number of incoming
green. Each of the three outer vertices~$r,g,b$ serves as the root of the tree
in the same color and all its incident interior edges are incoming in the
respective color.
 For $v \in V$, let~$\regred{v}$ (the \emph{red region} of~$v$) denote
the region bounded by the $v g$-path in~$T_g$, the $v b$-path in~$T_b$ and
the edge~$gb$. Let~$|\regred{v}|$ denote
the number of the interior faces in~$\regred{v}$. The green and blue
regions~$\reggreen{v}$, $\regblue{v}$ are defined analogously. Assigning $v$ the
coordinates $(|\regred{v}|,|\reggreen{v}|,|\regblue{v}|) \in \R^3$ results in a
plane straight-line drawing of~$G$ in the plane $\{ x = (x_1,x_2,x_3) \mid x_1 +
x_2 + x_3 = f-1 \}$ called \emph{Schnyder drawing}. Here, $f$ denotes the number
of faces of~$G$. For a thorough introduction to this topic, see the book of
Felsner~\cite{Felsner2004}.

For~$\alpha,\beta \in [0\dg,360\dg]$, let~$[\alpha,\beta]$
denote the corresponding counterclockwise cone of directions. 
We consider drawings satisfying the following constraints.

\begin{definition}
  \label{def:alpha-Schnyder}
  Let~$G=(V,E)$ be a plane triangulated graph with a Schnyder
  labeling. For~$0\dg \leq \alpha \leq 60\dg$, we call an arbitrary
  planar straight-line drawing of $G$ \emph{$\alpha$-Schnyder}
  if for each internal vertex~$v \in V$, its outgoing red edge has
  direction in~$[90\dg-\frac{\alpha}{2},90\dg+\frac{\alpha}{2}]$, blue
  in~$[210\dg-\frac{\alpha}{2}, 210\dg+\frac{\alpha}{2}]$ and green
  in~$[330\dg-\frac{\alpha}{2},330\dg+\frac{\alpha}{2}]$ (see
  Fig.~\ref{fig:schnyder:cones}).
\end{definition}

According to Definition~\ref{def:alpha-Schnyder}, classical Schnyder
drawings are \aschnyder{$60\dg$}.  The next lemma shows an
interesting connection between $\alpha$-Schnyder and increasing-chord
drawings.

\begin{lemma}
  \label{lem:ic-schnyder}
  \aschnyder{$30\dg$} drawings are increasing-chord drawings. 
\end{lemma}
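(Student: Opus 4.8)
The plan is to show that in an \aschnyder{$30\dg$} drawing, any pair of vertices $u,w$ is connected by a path all of whose consecutive edge directions differ by at most $90\dg$, so that Lemma~\ref{lem:path-ic90} applies. The natural candidate path is a ``Schnyder path'': from $u$, follow outgoing edges in one fixed color — say, walk up the blue tree $T_b$ — until we reach an ancestor that also lies on a monochromatic path from $w$, then descend. More carefully, I would argue that for any two vertices there is a path that first goes up along at most one colored tree from $u$ and then comes down along at most one colored tree to $w$; the key structural fact from Schnyder labelings is that the three paths $P_r(v),P_g(v),P_b(v)$ from $v$ to the three outer vertices are vertex-disjoint except at $v$ and partition the interior, so the region-containment of the endpoints dictates which colored paths to use to route between them. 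I would first handle the case where $w$ lies in one of the three Schnyder paths emanating from $u$ (then the path is monochromatic going up, or a single up-then-down switch), and then the general case by a region argument.

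The core computation is the angle bound. Along an outgoing blue edge, the direction lies in $[210\dg - 15\dg, 210\dg + 15\dg]$ by Definition~\ref{def:alpha-Schnyder} with $\alpha = 30\dg$; traversing an \emph{up} edge of $T_b$ from child to parent means moving in an outgoing-blue direction, so all ``up-blue'' steps have directions within a $30\dg$ cone of each other, hence pairwise angle $\le 30\dg < 90\dg$. Traversing a \emph{down} edge of (say) $T_r$ from parent to child means moving \emph{against} an outgoing-red direction, i.e.\ in a direction within $[90\dg + 180\dg - 15\dg, 90\dg + 180\dg + 15\dg] = [255\dg,285\dg]$, again a $30\dg$ cone. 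The delicate part is bounding the angle between an up-step in one color and a down-step in another: up-blue lies in $[195\dg,225\dg]$ and down-red lies in $[255\dg,285\dg]$, whose extreme directions differ by at most $285\dg-195\dg = 90\dg$ — exactly at the threshold, which is why $\alpha = 30\dg$ and not larger. One must check all color combinations that can actually occur along a valid up-then-down Schnyder path (up-blue then down-green, up-green then down-red, etc.) and verify each stays within $90\dg$; here the cyclic $120\dg$ spacing of the three color directions is what makes it work, and I expect this case analysis, together with confirming the endpoints of the relevant cones are attained only weakly so the bound is $\le 90\dg$, to be the main obstacle.

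Planarity of the resulting drawing is immediate since an \aschnyder{$30\dg$} drawing is by definition a planar straight-line drawing, so nothing extra is needed there; the only content is the existence of an increasing-chord path between each pair, which the above delivers via Lemma~\ref{lem:path-ic90}. A subtlety to nail down is that the chosen route really is a simple path in $G$ (no repeated vertices) — this follows because an up-walk in a single Schnyder tree is a tree path, a down-walk in another Schnyder tree is a tree path, and they meet in exactly one vertex by the disjointness of the three monochromatic paths from that meeting vertex — and that the ``turn'' vertex is chosen so that its outgoing edge in the descending color is indeed the first edge of the descent (so the two monochromatic pieces are compatible directionally). Once the path is simple and every pair of consecutive edges turns by at most $90\dg$, Lemma~\ref{lem:path-ic90} gives increasing chords and we are done.
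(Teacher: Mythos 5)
Your proposal is correct and takes essentially the same route as the paper: route from $s$ up one Schnyder tree to the first vertex $u$ where that monochromatic path meets the boundary of the region of $t$ containing $s$, then descend another tree from $u$ to $t$, and use the $30\dg$-wide direction cones spaced $120\dg$ apart to bound the angles (the paper glues the two monochromatic pieces via $\front$ and Fact~\ref{rem:contpath} rather than by one application of Lemma~\ref{lem:path-ic90} to the whole path, but that is an immaterial difference). One small caution: Lemma~\ref{lem:path-ic90} requires the $90\dg$ bound for \emph{all} pairs of edges on the path, not just consecutive ones as your opening sentence suggests --- your actual computation (all up-steps in one $30\dg$ cone, all down-steps in another, with extreme directions at most $90\dg$ apart) does establish the all-pairs bound, so the argument stands.
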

\begin{proof}
  Let~$G=(V,E)$ be a plane triangulation with a given Schnyder
  labeling and~$\Gamma$ a corresponding \aschnyder{$30\dg$}
  drawing. Let $r,g,b$ be the red, green
  and blue external vertex, respectively, and $T_r, T_g, T_b$  the directed trees
  of the corresponding color.

  Consider vertices~$s,t \in V$. First, note that monochromatic
  directed paths in~$\Gamma$ have increasing chords by
  Lemma~\ref{lem:path-ic90}.  Assume $s$ and $t$ are not connected by
  such a path. Then, they are both internal and $s$ is contained in
  one of the regions~$\regred{t}$, $\reggreen{t}$, $\regblue{t}$.
      Without loss of generality, we assume $s \in \regred{t}$.
      The $s r$-path in $T_r$ crosses the boundary of $\regred{t}$, and we
  assume without loss of generality that it crosses the blue boundary of
  $\regred{t}$ in $u \neq t$; see Fig.~\ref{fig:30schnyder:paths}.  The other
  cases are symmetric.

  Let~$\rho_r$ be the $s u$-path in~$T_r$ and $\rho_b$ the $t u$-path
  in~$T_b$; see Fig.~\ref{fig:30schnyder:2cones}. On the one hand, the direction
  of a line orthogonal to a segment of~$\rho_r$ is in~$[345\dg, 15\dg] \cup
  [165\dg,195\dg]$. On the other hand, $\rho_b$ is contained in a cone
  $[15\dg,45\dg]$ with apex~$u$. Thus, $\rho_b^{-1} \subseteq \front(\rho_r)$,
  and $\concat{\rho_r}{\rho_b^{-1}}$ is \sa by Fact~\ref{rem:contpath}.  By a
  symmetric argument it is also \sa in the other direction, and hence has
  increasing chords.
\end{proof}

\emph{Planar 3-trees} are the graphs
 obtained from a triangle by
repeatedly choosing a (triangular) face $f$, inserting a new vertex
$v$ into $f$, and connecting $v$ to each vertex of $f$. 

\begin{figure}[tb]
  \hfill
  \subfloat[]{\includegraphics[page=1]{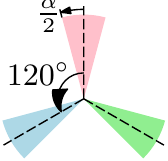} \label{fig:schnyder:cones}}
  \hfill
  \subfloat[]{\includegraphics[page=4]{fig/schnyder.pdf} \label{fig:30schnyder:paths}}
  \hfill
  \subfloat[]{\includegraphics[page=5]{fig/schnyder.pdf} \label{fig:30schnyder:2cones}}
  \hfill
  \subfloat[]{\includegraphics[page=2]{fig/schnyder.pdf} \label{fig:3trees:1}}
  \hfill
  \subfloat[]{\includegraphics[page=3]{fig/schnyder.pdf} \label{fig:3trees:2}}
  \hfill\null
  \caption{\protect\subref{fig:schnyder:cones}--\protect\subref{fig:30schnyder:2cones} $30\dg$-Schnyder drawings are increasing-chord;
 \protect\subref{fig:3trees:1},\protect\subref{fig:3trees:2} special case of planar 3-trees.}
\end{figure}

\begin{lemma}
  \label{lem:3trees}
  Planar 3-trees have \aschnyder{$\alpha$} drawings for any~$0\dg < \alpha \leq
  60\dg$.
\end{lemma}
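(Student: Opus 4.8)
The plan is to prove this by induction on the number of vertices, following the recursive construction of planar 3-trees, while maintaining the $\alpha$-Schnyder angle constraints on every internal vertex throughout. First I would fix $\alpha$ with $0\dg < \alpha \le 60\dg$ and recall that a planar 3-tree $G$ is built from a triangle $r,g,b$ by repeatedly inserting a vertex into a face and connecting it to the three face-vertices; the key structural fact is that the last inserted vertex $v$ has degree $3$, its three neighbors form a triangle which was a face just before $v$ was added, and deleting $v$ yields a smaller planar 3-tree $G'$ whose Schnyder labeling is the restriction of that of $G$ (the three edges at $v$ get their colors/orientations from Schnyder's local rule, and $v$ is a leaf in each of the three trees, or rather a leaf sending one outgoing edge of each color to the three neighbors). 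So the induction hypothesis gives an $\alpha$-Schnyder drawing $\Gamma'$ of $G'$, and the task is to place $v$ inside its (triangular) face $f = xyz$ of $\Gamma'$ so that $v$'s three outgoing edges $vx, vy, vz$ land in the three prescribed $\alpha$-wide cones, and so that no previously-established angle constraint is violated.

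The main observation that makes this work is that inserting $v$ into face $f$ and connecting it to $x,y,z$ does not change the set of outgoing edges of $x,y,z$ — the new edges $xv, yv, zv$ are all incoming at $x$, $y$, $z$ in their respective colors (by Schnyder's rule, the new vertex's three neighbors each receive an incoming edge in one color), so the direction constraints at $x,y,z$ (and all other old vertices) are automatically inherited from $\Gamma'$. Hence the only new constraint is the one at $v$ itself: I need a point $p$ in the interior of $f$ such that $\vec{pa}$ lies in $[90\dg - \alpha/2, 90\dg+\alpha/2]$ where $a$ is the red-neighbor of $v$, and similarly for the blue and green neighbors in their $210\dg$ and $330\dg$ cones. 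I would argue that such a point exists because the three half-infinite cones of apex-directions — "points from which $a$ is seen in direction roughly $90\dg$", etc. — each is a region bounded by lines through $a$ (resp. $b'$, $c'$), and their common intersection is a nonempty open region; one then checks this region meets the interior of the triangle $f$. Concretely, the set of points $p$ with $\vec{pa} \in [90\dg-\alpha/2, 90\dg+\alpha/2]$ is the wedge below $a$ opening downward; similarly a wedge up-and-right of the blue neighbor and a wedge up-and-left of the green neighbor; because the three neighbors are in the "correct" cyclic position around $f$ (forced by the Schnyder structure of the drawing $\Gamma'$, i.e. the colors around face $f$ are consistent), these three wedges have a common intersection, and a small neighborhood of an appropriately chosen interior point of $f$ lies in it.

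The hard part, and the step I expect to require the most care, is verifying that this common intersection actually meets the interior of the specific triangle $f$ in $\Gamma'$, rather than merely being a nonempty region somewhere in the plane. Here I would use the special structure of planar 3-trees illustrated in Fig.~\ref{fig:3trees:1} and \ref{fig:3trees:2}: when $v$ is inserted into $f = xyz$, the Schnyder colors of the three sides of $f$ and the orientations around $f$ are constrained, so that $x,y,z$ play the roles of "the vertex seen downward", "seen lower-left", "seen lower-right" relative to any interior point of $f$ up to the $\alpha/2$ slack — but this requires that $\Gamma'$ already draws the edges of $f$ with the right rough directions, which in turn should follow from the $\alpha$-Schnyder property of $\Gamma'$ applied to the endpoints of those edges. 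I would likely strengthen the induction hypothesis if needed, e.g. asserting that in any $\alpha$-Schnyder drawing every internal face is "well-oriented" in the sense that its three bounding edges have the three colors in the correct rotational pattern and hence lie in appropriately separated direction ranges; this is really a consequence of the local Schnyder rule at each vertex together with planarity, and it guarantees enough room inside $f$ to place $v$. A clean alternative, which I might adopt to avoid the geometric bookkeeping, is to first establish the statement for $\alpha = 60\dg$ — where classical Schnyder drawings already work — and then prove a "compression" lemma stating that any $60\dg$-Schnyder drawing can be continuously deformed (scaling each coordinate direction, or applying the affine map shrinking the Schnyder triangle) into an $\alpha$-Schnyder drawing for any smaller $\alpha$, since shrinking the angular cones uniformly corresponds to an affine transformation of the ambient plane that preserves planarity and the partition into trees. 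I expect the affine-transformation route to be the most economical, with the induction serving only to establish the $60\dg$ base case if that is not taken as known.
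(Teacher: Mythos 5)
Your preferred route --- compressing a classical ($60\dg$-)Schnyder drawing into an $\alpha$-Schnyder one by an affine map --- cannot work. A nonsingular affine map induces a homeomorphism of the circle of directions that commutes with the antipodal map, so it sends every semicircle of directions onto a semicircle. In a $60\dg$-Schnyder drawing the three outgoing cones $[60\dg,120\dg]$, $[180\dg,240\dg]$, $[300\dg,360\dg]$ together with their antipodal cones partition the entire circle of directions; equivalently, after identifying antipodes, the three cones partition a semicircle. Their images under the induced direction map therefore still partition a semicircle and have total angular measure $180\dg$, so they cannot all be shrunk below $60\dg$, let alone to width $\alpha$. The only thing an affine map can concentrate directions toward is a single antipodal pair (the major axis of the image ellipse of the unit circle); it cannot simultaneously narrow three cones centered at mutually $120\dg$-separated directions. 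So the ``compression lemma'' is false and this alternative must be abandoned.

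Your first route is essentially the paper's proof, and you correctly locate the difficulty: the hypothesis ``$G'$ has an $\alpha$-Schnyder drawing'' does not by itself guarantee that the three wedges (downward from the red neighbor, up-right from the blue one, up-left from the green one) have a common point \emph{inside} the face $f=xyz$. But you leave exactly that step as a hoped-for consequence of ``the local Schnyder rule at each vertex together with planarity,'' and that claim is the entire content of the lemma: one can extract pairwise intersection of the wedges from the possible labelings of the face boundary, but a common point, located inside $f$, does not follow for an arbitrary $\alpha$-Schnyder drawing of $G'$ without further argument. The paper does not attempt this. Instead it constructs the drawing top-down and maintains, as an explicit invariant of every face of the current drawing, that the three-cone pattern can be placed at a point of the face --- in fact centered at one of its vertices, then perturbed slightly into the interior --- so that all three face vertices lie in the interiors of three distinct cones. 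Inserting the new vertex exactly at such a pattern center puts its three edges into the prescribed cones, leaves all old constraints untouched (the new edges are incoming at $x,y,z$, as you note), and automatically re-establishes the invariant for each of the three new faces, since each has the pattern centered at the newly inserted vertex. If you replace your generic induction hypothesis by this per-face invariant, your argument becomes the paper's; as written, both of your routes leave a genuine gap, and the one you say you would adopt is unrecoverable.
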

\begin{proof}
  We describe a recursive construction of an \aschnyder{$\alpha$} drawing of a
  planar 3-tree. We start with an equilateral triangle and put a vertex~$v$ in
  its center. Then, we align the pattern from Fig.~\ref{fig:schnyder:cones}
  at~$v$. For the induction step, consider a triangular face~$xyz$ and assume
  that the pattern is centered at one of its vertices, say~$x$, such that the
  other two vertices are in the interiors of two distinct cones; see
  Fig.~\ref{fig:3trees:1}. It is now possible to move the pattern inside the
  triangle slightly, such that the same holds for all three vertices~$x,y,z$;
  see Fig.~\ref{fig:3trees:2}. We insert the new vertex at the center of the
  pattern and again get the situation as in Fig.~\ref{fig:3trees:1}. 
\end{proof}

Lemmas~\ref{lem:ic-schnyder} and~\ref{lem:3trees} provide a constructive proof for the following theorem.

\begin{theorem}
  \label{thm:3trees}
  Every planar 3-tree has a planar increasing-chord drawing.
\end{theorem}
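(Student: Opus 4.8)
The plan is to combine Lemma~\ref{lem:3trees} and Lemma~\ref{lem:ic-schnyder} directly. First I would apply Lemma~\ref{lem:3trees} with the choice $\alpha = 30\dg$ (which satisfies $0\dg < \alpha \leq 60\dg$), obtaining a planar straight-line \aschnyder{$30\dg$} drawing $\Gamma$ of the given planar 3-tree $G$; note that planar 3-trees are in particular plane triangulations, so they carry a Schnyder labeling and Definition~\ref{def:alpha-Schnyder} applies. Then I would invoke Lemma~\ref{lem:ic-schnyder} to conclude that $\Gamma$ is an increasing-chord drawing. Since $\Gamma$ is planar by construction, this yields a planar increasing-chord drawing, which is exactly the claim.

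The only real content beyond quoting these two lemmas is making sure the two notions of ``drawing'' are the same object: Lemma~\ref{lem:3trees} produces a \emph{planar} \aschnyder{$30\dg$} drawing, and Lemma~\ref{lem:ic-schnyder} takes \emph{any} \aschnyder{$30\dg$} drawing and shows it is increasing-chord, without destroying planarity (indeed it does not modify the drawing at all). So the composition is immediate. I would also remark explicitly that the construction in Lemma~\ref{lem:3trees} is effective (it is a finite recursive procedure following the face-insertion order that defines the planar 3-tree), so the resulting existence proof is in fact constructive, matching the phrasing ``constructive proof'' in the sentence preceding the theorem.

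There is essentially no obstacle here: the theorem is a corollary of the two preceding lemmas, and the proof is a two-line argument. If anything, the ``hard part'' was already discharged inside Lemma~\ref{lem:ic-schnyder}, namely verifying that in a $30\dg$-Schnyder drawing the perpendiculars along a red monochromatic path stay out of the cone containing the complementary blue path, so that the concatenation $\concat{\rho_r}{\rho_b^{-1}}$ is self-approaching in both directions via Fact~\ref{rem:contpath}; and inside Lemma~\ref{lem:3trees}, namely the geometric observation that the Schnyder cone pattern can be re-centered within each newly subdivided face so that all three face-vertices fall into distinct open cones. Given those, the write-up is simply:

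\begin{proof}
  Let $G$ be a planar 3-tree. By Lemma~\ref{lem:3trees} applied with $\alpha = 30\dg$, $G$ admits a planar straight-line \aschnyder{$30\dg$} drawing $\Gamma$. By Lemma~\ref{lem:ic-schnyder}, every \aschnyder{$30\dg$} drawing, and in particular $\Gamma$, is an increasing-chord drawing. Hence $\Gamma$ is a planar increasing-chord drawing of $G$. Moreover, the recursive construction in the proof of Lemma~\ref{lem:3trees} follows the vertex-insertion order defining the planar 3-tree and is therefore effective, so the drawing can be computed explicitly.
\end{proof}
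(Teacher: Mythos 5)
Your proposal is correct and matches the paper exactly: the paper proves the theorem with the single sentence that Lemmas~\ref{lem:ic-schnyder} and~\ref{lem:3trees} together provide a constructive proof, i.e., apply Lemma~\ref{lem:3trees} with $\alpha = 30\dg$ and then Lemma~\ref{lem:ic-schnyder}. Your additional remarks about planarity being preserved and the construction being effective are consistent with (and slightly more explicit than) the paper's one-line justification.
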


\section{Self-Approaching Drawings in the Hyperbolic Plane}
\label{sec:hyperbolic}

Kleinberg~\cite{k-gruhs-07} showed that every tree can be drawn greedily
in the hyperbolic plane~$\Hp$. This is not the case in~$\R^2$. Thus,
$\Hp$ is more powerful than~$\R^2$ in this regard. Since \sa drawings
are closely related to greedy drawings, it is natural to investigate
the existence of \sa drawings in~$\Hp$.

We shall use the \emph{Poincar\'e disk} model for~$\Hp$, in
which~$\Hp$ is represented by the unit disk~$D = \{ x \in \R^2 : |x| <
1\}$ and lines are represented by circular arcs orthogonal to the
boundary of~$D$. For an introduction to the Poincar\'e disk model, see, for example, Kleinberg~\cite{k-gruhs-07} and the references therein.

First, let us consider a tree~$T=(V,E)$. A drawing of~$T$ in~$\R^2$ is
\sa if and only if no normal on an edge of~$T$ in any point crosses
another edge~\cite{acglp-sag-12}. The same condition holds in~$\Hp$.

\newcommand{\disth}[2]{d(#1,#2)}
\begin{lemma}
  \label{lem:H:normals}
  A straight-line drawing~$\Gamma$ of a tree~$T$ in~$\Hp$ is \sa if and only if
  no normal on an edge of~$T$ crosses $\Gamma$ in another point.
\end{lemma}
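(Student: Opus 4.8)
The plan is to reduce the hyperbolic statement to the same local-geometric characterization that holds in the Euclidean plane, namely that self-approachingness of a piecewise-smooth curve is equivalent to the condition that the perpendicular to the curve at every point never meets the curve again (Icking et al.~\cite{ikl-sac-99}). This characterization is a statement about geodesic curves and perpendiculars, and its proof in~\cite{ikl-sac-99} is essentially metric, using only the first-variation/monotonicity behavior of distance along a geodesic; since geodesics, angles, and the notion of a curve being self-approaching all make sense verbatim in~$\Hp$, I would first argue that this characterization carries over to the Poincar\'e disk model. Concretely, a piecewise-geodesic path $\rho$ fails to be self-approaching precisely when there are points $a$ before $b$ before $c$ along $\rho$ with $\disth{a}{c} < \disth{b}{c}$; by continuity and compactness we may pick such a witness so that locally $c$ is fixed and $b$ is a point where the distance-to-$c$ function along $\rho$ has a strict local increase, which (first variation of arc length in a Riemannian manifold of any curvature) forces the geodesic segment of $\rho$ through $b$ to make an angle strictly greater than $90\dg$ with the geodesic $bc$ on the $c$-side. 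Conversely, if the perpendicular to $\rho$ at some point $a$ meets $\rho$ again at a later point $c$, then the distance from $a$ to $c$ along the curve is non-monotone near $a$, violating self-approachingness. So the first main step is: \emph{Claim. In $\Hp$, a piecewise-geodesic curve is self-approaching iff at no point does the perpendicular geodesic to the curve meet the curve again at a later point.} I would state and prove this claim for a single curve, not yet using the tree structure.

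Given the claim, the lemma for trees follows by the same argument Alamdari et al.\ use in~$\R^2$. For the ``only if'' direction: if some normal to an edge $e=uv$ at a point $a$ crosses another edge $e'$ of $T$ at a point $q$, then since $T$ is a tree there is a unique simple path in $T$ from $u$ (or $v$) through $e'$; extend $a$--$q$ into a path in the drawing that goes along $e$ from $a$ to an endpoint of $e$ and then along the tree to the far endpoint of $e'$, hitting $q$ as an interior point; wait — more carefully, one takes the path $\rho$ in $T$ consisting of $e$ together with the tree path connecting the appropriate endpoint of $e$ to an endpoint of $e'$, and observes that the perpendicular to $\rho$ at $a$ (which is the perpendicular to $e$ at $a$, since $a$ lies in the relative interior of $e$) crosses $\rho$ again at $q$; by the claim $\rho$ is not self-approaching, and since $\rho$ is the unique $st$-path in $T$ for its endpoints $s,t$, the drawing $\Gamma$ is not self-approaching. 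For the ``if'' direction: suppose no normal to any edge meets $\Gamma$ elsewhere; take any $s,t\in V$ and let $\rho$ be the unique $st$-path in $T$; at every interior point $a$ of an edge of $\rho$, the perpendicular to $\rho$ at $a$ is the perpendicular to that edge, which by hypothesis does not meet $\Gamma\supseteq\rho$ again, and at the (finitely many) vertices of $\rho$ there is no well-defined perpendicular but self-approachingness only needs to be checked on the smooth pieces together with a limiting argument at the breakpoints (equivalently, use Fact~\ref{rem:sap-char}-type reasoning: the relevant closed halfplane condition passes to the limit). Hence $\rho$ is self-approaching, and $\Gamma$ is a self-approaching drawing.

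The step I expect to be the main obstacle is establishing the perpendicular-characterization claim in $\Hp$ with enough care, since the original proof in~\cite{ikl-sac-99} is written for the Euclidean plane and quotes Euclidean facts (e.g. that the locus of points equidistant from $a$ and $a'$ is a line, and that a normal to a curve separates a neighborhood into ``points getting closer'' and ``points getting farther''). In $\Hp$ the perpendicular bisector of two points is still a geodesic, and the first-variation formula still says $\frac{d}{dt}\disth{\gamma(t)}{c} = -\cos\theta(t)$ where $\theta(t)$ is the angle at $\gamma(t)$ between $\dot\gamma$ and the geodesic toward $c$; these are exactly the ingredients needed, so the claim does go through, but I would need to phrase the argument intrinsically (via the distance function and geodesic angles) rather than citing Euclidean lemmas. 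A clean way to do this is to prove the claim directly from Fact~\ref{rem:sap-char}'s spirit: show that along a geodesic segment $v_iv_{i+1}$, the set of points whose distance to $v_i$ is at most $\disth{v_i}{x}$ for all later $x$ is exactly the closed half-plane bounded by the perpendicular to $v_iv_{i+1}$ at $v_i$ and not containing a backward tangent direction — this is where convexity of hyperbolic balls and the triangle inequality do the work — and then the rest is bookkeeping identical to the Euclidean case. The other minor technical point, handling non-smoothness at the tree's vertices, is routine: self-approachingness is a closed condition, so it suffices to verify the defining inequality for $a,b,c$ lying in the interiors of edges and pass to the limit.
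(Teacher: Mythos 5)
Your proposal is correct and takes essentially the same route as the paper: both directions come down to the fact that the perpendicular to a geodesic at a point is the equidistant locus separating $\Hp$ into points strictly closer to the ``before'' side and points strictly closer to the ``after'' side, which the paper establishes as an explicit claim in the Poincar\'e disk (for $p\in H_-$ it is $\disth{p}{p_-}<\disth{p}{p_+}$) and then normalizes via isometries, exactly the content of your first-variation argument; the tree case then follows from uniqueness of tree paths and chaining the single-arc inequality across breakpoints, as in your sketch. The one spot to tighten is the direction ``normal crosses $\Rightarrow$ not \sa'': a later point $c$ lying exactly on the perpendicular at $a$ yields only the equality $\disth{a_-}{c}=\disth{a_+}{c}$ and hence no violation, so you must use that the normal \emph{crosses} $\Gamma$ to pick $c$ strictly inside the backward open halfplane (the paper's $c\in H_-$), giving the strict inequality $\disth{a}{c}<\disth{b}{c}$.
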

\begin{proof}
  The proof is similar to the Euclidean case. We present it for the sake of
  completeness. First, let~$\Gamma$ be a \sa drawing, for which the condition of
  the lemma is violated. \Wlog, let $\rho = (s,u, \dots, t)$ be the $s t$-path
  in~$T$, such that the normal on~$su$ in a point~$r$ crosses~$\rho$ in another
  point. Due to the piecewise linearity of~$\rho$, we may assume $r$ to be in
  the interior of~$su$. Let~$H_+ = \{ p=(p_x,p_y) \in D \mid p_y > 0 \}$ and
  $H_- = \{ p=(p_x,p_y) \in D \mid p_y < 0 \}$ the top and bottom hemispheres
  of~$D$. For~$p_1, p_2 \in D$, let~$\disth{p_1}{p_2}$ denote the hyperbolic
  distance between~$p_1$ and~$p_2$, i.e., the hyperbolic length of the
  corresponding geodesics. We recall the following basic fact whose proof is
  given, e.g., by Kleinberg~\cite{k-gruhs-07}.

  \begin{claim}
    Let~$0<y<1$, $p_- = (0,-y)$, $p_+ = (0,y)$. Then, for each~$p \in H_-$, it
    is $\disth{p}{p_-} < \disth{p}{p_+}$.
  \end{claim}

  Due to isometries, we can assume that~$r$ is in the origin of~$D$, $su$ is
  vertical, $s \in H_-$, $u \in H_+$. Let $a \in H_-$, $b \in H_+$ be two points
  on $su$, such that~$|ar|=|rb|$. Since the normal on~$su$ in~$r$
  crosses~$\rho$, there must exist a point~$c$ on~$\rho$, $c \in H_-$, such that
  $a,b,c$ are on~$\rho$ in this order. However, it is~$\disth{a}{c} <
  \disth{b}{c}$, a contradiction to~$\rho$ being \sa.

  Let~$\Gamma$ be a drawing of~$T$, for which the condition holds. Let~$a,b,c$
  be three consecutive points on a path~$\rho$ in~$\Gamma$. First, assume~$a,b$
  lie on the same arc of~$\Gamma$. We apply an isometry to~$\Gamma$, such that
  $ab$ is vertical, $a \in H_-$, $b \in H_+$, and $a,b$ are equidistant from the
  origin~$o$. The normal to~$\rho$ in~$o$ is the equator. Thus, it is~$c \notin
  H_-$, and~$\disth{b}{c} \leq \disth{a}{c}$. By applying this argument
  iteratively, this inequality also holds if~$a,b$ lie on different arcs. 
\end{proof}

 According to the
characterization by Alamdari et al.~\cite{acglp-sag-12}, some binary
trees have no self-approaching drawings in~$\R^2$.  We show that this
is no longer the case in~$\Hp$.

\begin{theorem}
  \label{thm:H:bintrees}
  Let $T=(V,E)$ be a tree, such that each node of~$T$ has degree either~1
  or~3. Then,~$T$ has a \sa drawing in~$\Hp$, in which every arc has the same
  hyperbolic length and every pair of incident arcs forms an angle of~$120\dg$.
\end{theorem}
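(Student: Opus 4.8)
The plan is to give an explicit, highly symmetric recursive embedding and then verify the self-approaching property via Lemma~\ref{lem:H:normals}. First I would set up the recursion: pick a root edge, place its two endpoints symmetrically about the origin of the Poincar\'e disk, and grow the tree outward. At each degree-3 vertex $v$ reached from its parent along an incoming arc, the two outgoing arcs leave $v$ at angles $\pm 120\dg$ from the incoming direction, and every arc is a geodesic segment of the same fixed hyperbolic length $\ell$. This is well defined because at a degree-1 vertex we simply stop, and at a degree-3 vertex the $120\dg$ condition together with the common length $\ell$ determines the two children's positions. The parameter $\ell$ will be chosen later; the natural guess is that \emph{any} sufficiently large $\ell$ works, because in $\Hp$ subtrees hanging off a vertex occupy increasingly narrow angular sectors as one moves away from the origin, so distinct subtrees stay well separated.

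Next I would make the separation precise. The key geometric fact I want is: for each vertex $v$, the entire subtree rooted at $v$ (together with the half-open arc from $v$'s parent to $v$) is contained in a region $C_v$ bounded by the perpendicular bisector-type objects one gets from Lemma~\ref{lem:H:normals} — concretely, a hyperbolic half-plane (or a wedge) "beyond" $v$ in the direction away from the parent. If I can show that at a degree-3 vertex $v$ the two child regions $C_{v_1}$, $C_{v_2}$ are disjoint, lie on the correct side of every normal to the arc from $v$'s parent to $v$, and each is contained in the appropriate half-plane determined by normals to the two outgoing arcs, then by induction no normal to any edge meets $\Gamma$ a second time, and Lemma~\ref{lem:H:normals} finishes the proof. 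The $120\dg$ angle is exactly what makes the two outgoing arcs each lie in the closed half-plane bounded by the normal to the other at $v$ (since $120\dg > 90\dg$), and also lie on the far side of the normal to the incoming arc at $v$; the induction then just has to propagate a containment region that shrinks geometrically.

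Concretely I would carry it out as: (1) fix the embedding rule and record that by construction any two incident arcs meet at $120\dg$ and all arcs have length $\ell$; (2) define, for a non-root vertex $v$ with parent arc $e$, the "forward cone" $C_v$ = intersection of the hyperbolic half-plane bounded by the normal to $e$ at $v$ containing $v$'s children with a bounding wedge of angular half-width depending only on $\ell$; (3) prove the inductive claim that $C_{v_1},C_{v_2}\subseteq C_v$, that they are disjoint, and that $C_{v_i}$ avoids every normal to $e$ and to the sibling arc — this is where I would use that hyperbolic geodesics diverge exponentially, so for $\ell$ above some absolute threshold the two child wedges fit; (4) conclude that no normal to any edge crosses $\Gamma$ twice, hence $\Gamma$ is \sa by Lemma~\ref{lem:H:normals}. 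The main obstacle is step~(3): unlike the Euclidean case, I must control the image of a normal line (a full geodesic orthogonal to an arc) as it travels across the disk and make sure the exponential narrowing of descendant wedges beats the spreading of these normals; pinning down the right invariant region $C_v$ and the threshold for $\ell$ — likely via an explicit computation in the upper half-plane model or via the Poincar\'e-disk distance estimate already quoted as the Claim inside Lemma~\ref{lem:H:normals} — is the delicate part. Everything else is bookkeeping about which side of which normal a wedge lies on, which the $120\dg$ angle makes automatic.
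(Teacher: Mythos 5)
Your overall strategy---a maximally symmetric recursive embedding with equal arc lengths and $120\dg$ angles at every internal vertex, verified through Lemma~\ref{lem:H:normals} by exhibiting nested invariant regions that confine each subtree---is the same strategy the paper follows. However, the step you defer, your step~(3), is the entire mathematical content of the theorem, and as written your plan does not yet contain the idea that makes it go through. The identical construction in the Euclidean plane fails (complete binary trees admit no \sa drawing in $\R^2$, by the characterization of Alamdari et al.), so the assertion that ``for $\ell$ above some absolute threshold the two child wedges fit'' is precisely the hyperbolic phenomenon that must be established, not invoked. To finish along your lines you would need an explicit invariant region $C_v$, an explicit threshold for $\ell$, and a verification that for every edge $uv$ the \emph{entire} rest of the drawing avoids the full slab of normals of $uv$---including edge pairs where neither edge is an ancestor of the other, which your forward-cone induction covers only after an additional homogeneity argument that re-roots the tree at $v$.

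The paper closes exactly this gap without any quantitative estimate by anchoring the drawing to a tiling of $\Hp$: it subdivides each edge once and inscribes a $K_{1,3}$ into a regular hyperbolic hexagon whose interior angles are chosen to be $90\dg$ (any value in $(0\dg,90\dg]$ works), then grows the tree by reflecting this tile across the alternate sides containing leaves. The hexagonal tiles play the role of your regions $C_v$, but the containments come for free: the right angles force all other tiles into one of three quadrangular regions of the form $\hp{v_0}{v_i}\setminus(\hp{v_i}{p_{2i-1}}\cup \hp{v_i}{p_{2i-2}})$, every normal on the central arc $v_0v_1$ lies in the slab $D\setminus(\hp{v_0}{v_1}\cup\hp{v_1}{v_0})$ bounded by two supporting lines of the hexagon, and the rest of the drawing avoids that slab; by the homogeneity of the tiling, checking this single arc suffices. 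The tiling also determines the common arc length of the theorem automatically, rather than leaving $\ell$ as a parameter whose admissible range must be computed. If you prefer to complete your direct version, the delicate hyperbolic trigonometry you identify really does have to be carried out; choosing the right tile is what lets the paper skip it.
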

\begin{proof}
  For convenience, we subdivide each edge of~$T$ once. We shall show that both
  pieces are collinear in the resulting drawing~$\Gamma$ and have the same
  hyperbolic length.

  \begin{figure}[tb]
    \hfill\subfloat[]{\includegraphics[scale=1.3, page=1]{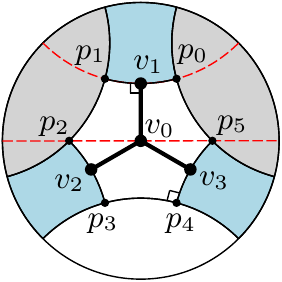}
      \label{fig:H:bintrees:1}}\hfill
    \hfill\subfloat[]{\includegraphics[scale=1.3, page=2]{fig/H-bintrees.pdf}
      \label{fig:H:bintrees:2}}\hfill
    \hfill\subfloat[]{\includegraphics[scale=1.3, page=3]{fig/H-bintrees.pdf}
      \label{fig:H:bintrees:3}}\hfill\null
    \caption{Constructing \ic drawings of binary trees and cactuses in $\Hp$.}
  \end{figure}

  First, consider a regular hexagon~$\hexagon = p_0 p_1 p_2 p_3 p_4 p_5$
  centered at the origin~$o$ of~$D$; see
  Fig.~\ref{fig:H:bintrees:1}. In~$\Hp$, it can have angles smaller
  than~$120\dg$. We choose them to be~$90\dg$ (any angle between $0\dg$ and
  $90\dg$ would work). Next, we draw a~$K_{1,3}$ with center~$v_0$ in~$o$ and
  the leaves~$v_1,v_2,v_3$ in the middle of the arcs~$p_0 p_1$, $p_2 p_3$, $p_4
  p_5$ respectively.

  For each such building block of the drawing consisting of a~$K_{1,3}$ inside a
  regular hexagon with $90\dg$ angles, we add its copy mirrored at an arc of the
  hexagon containing a leaf node of the tree constructed so far. For example, in
  the first iteration, we add three copies of~$\hexagon$ mirrored at~$p_0 p_1$, $p_2
  p_3$ and $p_4 p_5$, respectively, and the corresponding inscribed $K_{1,3}$
  subtrees. The construction after two iterations is shown in
  Fig.~\ref{fig:H:bintrees:2}. This process can be continued infinitely to
  construct a drawing~$\Gamma_\infty$ of the infinite binary tree. However, we
  stop after we have completed~$\Gamma$ for the tree~$T$.

  We now show that~$\Gamma_\infty$ (and thus also $\Gamma$) has the desired
  properties. Due to isometries and Lemma~\ref{lem:H:normals},
  it suffices to consider edge $e = v_0 v_1$ and show that a normal
  on~$e$ does not cross $\Gamma_\infty$ in another point. To see this, consider
  Fig.~\ref{fig:H:bintrees:1}. Due to the choice of the angles of~$\hexagon$,
  all the other hexagonal tiles of~$\Gamma_\infty$ are contained in one of the
  three blue quadrangular regions $\square_i := \hp{v_0}{v_i} \setminus
  (\hp{v_i}{p_{2i-1}} \cup \hp{v_i}{p_{2i-2}})$, $i=1,2,3$.  Thus, the
  regions~$\hp{v_1}{p_1}$ and~$\hp{v_1}{p_0}$ (gray) contain no point
  of~$\Gamma_\infty$.  Therefore, since each normal on~$v_0 v_1$ is contained in
  the ``slab'' $D \setminus (\hp{v_0}{v_1} \cup \hp{v_1}{v_0})$ bounded by the
  diameter through~$p_2,p_5$ and the line through~$p_0,p_1$ (dashed) and is
  parallel to both of these lines, it contains no other point
  of~$\Gamma_\infty$. 
\end{proof}

We note that our proof is similar in spirit to the one by
Kleinberg~\cite{k-gruhs-07}, who also used tilings of~$\Hp$ to prove that any
tree has a greedy drawing in~$\Hp$.

As in the Euclidean case, it can be easily shown that if a tree~$T$ contains a
node~$v$ of degree~4, it has a \sa drawing in~$\Hp$ if and only if~$T$ is a
subdivision of~$K_{1,4}$ (apply an isometry, such that~$v$ is in the origin
of~$D$). This completely characterizes the trees admitting a \sa drawing
in~$\Hp$. Further, it is known that every binary cactus and, therefore,
every 3-connected planar graph has a binary spanning
tree~\cite{angelini2010algorithm, Moitra2008}.

\begin{corollary}
  \emph{\sublem{1}}~A tree~$T$ has an \ic drawing in~$\Hp$ if and only if~$T$
    either has maximum degree~3 or is a subdivision of~$K_{1,4}$.
  \emph{\sublem{2}}~Every binary cactus and, therefore, every 3-connected planar
    graph has an \ic drawing in~$\Hp$.
\end{corollary}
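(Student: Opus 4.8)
The plan is to derive both parts as straightforward consequences of \textbf{Theorem~\ref{thm:H:bintrees}}, which already hands us \sa (and hence, by the degree-3 bound on angles, \ic) drawings of trees with all degrees in $\{1,3\}$, together with the degree-4 remark preceding the corollary. For part~\sublem{1}, I would argue in two directions. The ``only if'' direction reuses the Euclidean obstruction: a degree-$5$-or-higher vertex $v$ in $T$ forces, after placing $v$ at the origin of $D$ by an isometry, five incident arcs leaving $v$; the normals to these arcs at $v$ are five distinct lines through $o$, so by the pigeonhole principle two incident arcs make an angle less than $90\dg$ at $v$, and a \sa path through $v$ between their far endpoints would have its perpendicular at the ``cut'' point crossing the other edge, contradicting Lemma~\ref{lem:H:normals}. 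If instead $T$ has a degree-$4$ vertex $v$ but is \emph{not} a subdivision of $K_{1,4}$, then some branch at $v$ contains another vertex of degree $\ge 3$, i.e.\ $T$ contains a subdivided $K_{1,4}$ with an extra pendant branch; the same normal-crossing argument (four arcs at the origin force a pair at angle $< 90\dg$... actually exactly $90\dg$ is achievable for $K_{1,4}$ alone, which is why the pure subdivision of $K_{1,4}$ is fine, but any additional branching breaks it) yields the contradiction. The ``if'' direction: if $\Delta(T)\le 3$, apply Theorem~\ref{thm:H:bintrees} directly; if $T$ is a subdivision of $K_{1,4}$, place the center at $o$ and draw the four rays along two orthogonal diameters, each subdivision vertex collinear with its ray, so every pair of incident arcs meets at $90\dg$ or $180\dg$ and no normal crosses another arc — this is the hyperbolic analogue of the Euclidean ``plus sign'' drawing.

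For part~\sublem{2}, I would invoke the known fact, cited in the excerpt (Angelini et al.~\cite{angelini2010algorithm}, Moitra~\cite{Moitra2008}), that every binary cactus — and in particular every spanning binary cactus of a 3-connected planar graph — contains a \emph{binary spanning tree}, i.e.\ a spanning tree of maximum degree $3$. Since every 3-connected planar graph has a spanning binary cactus, composing these two facts gives a spanning tree $T'$ of maximum degree $3$ of any 3-connected planar graph. By part~\sublem{1} (or directly by Theorem~\ref{thm:H:bintrees} applied to $T'$), $T'$ has an \ic drawing in $\Hp$. Now I would use the ``spanning subgraph'' principle already articulated at the start of Section~\ref{sec:3conn}: an \ic drawing of a spanning subgraph is an \ic drawing of the whole graph, since every pair of vertices is still joined by the \ic path living in the subgraph (the extra edges only add more vertex positions, they do not destroy existing paths, and a drawing being \ic only requires the \emph{existence} of one \ic path per pair). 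Placing the remaining edges of the cactus / 3-connected graph as geodesic arcs between the already-fixed vertex positions yields the desired \ic drawing in $\Hp$.

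I expect the main obstacle to be the ``only if'' half of part~\sublem{1} in the degree-$4$ case — pinning down exactly why a subdivision of $K_{1,4}$ is drawable but a graph with a degree-$4$ vertex \emph{plus} any further branching is not. The subtlety is that four arcs at the origin can be made mutually orthogonal, so the pigeonhole argument alone does not immediately forbid degree $4$; one must instead observe that forcing all four incident angles to be exactly $90\dg$ is the \emph{only} way to satisfy Lemma~\ref{lem:H:normals} locally, and that this rigidly fixes the four directions as two orthogonal diameters, leaving no room for any of the four branches to itself branch again without creating a sub-$90\dg$ angle or a normal crossing somewhere down the branch. This is precisely the argument Alamdari et al.~\cite{acglp-sag-12} make in $\R^2$, and the excerpt asserts it ``can be easily shown'' in $\Hp$ ``as in the Euclidean case'' via the same origin-isometry trick; I would flesh this out by noting that once $v$ is at $o$, hyperbolic normals through $o$ behave like Euclidean ones (they are diameters), so the entire Euclidean rigidity argument transfers verbatim. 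The remaining steps — the explicit $K_{1,4}$ drawing, the spanning-tree composition, and the spanning-subgraph principle — are routine.

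\begin{proof}
\sublem{1}~If $\Delta(T) \le 3$, then $T$ has a \sa (hence, by the $120\dg$ angles and Lemma~\ref{lem:path-ic90}, \ic) drawing in~$\Hp$ by Theorem~\ref{thm:H:bintrees}. If $T$ is a subdivision of~$K_{1,4}$, apply an isometry so that the branch vertex~$v$ lies at the origin~$o$ of~$D$, draw the four rays along two orthogonal diameters and place each subdivision vertex collinear with the corresponding geodesic through~$o$; then incident arcs meet at $90\dg$ or $180\dg$, and since each normal to an arc is a geodesic parallel to the two diameters, it crosses no other arc, so the drawing is \sa by Lemma~\ref{lem:H:normals}, and \ic by symmetry. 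Conversely, suppose $T$ has a \sa drawing in~$\Hp$. If $T$ had a vertex~$v$ of degree~$\ge 5$, move $v$ to~$o$ by an isometry; the five incident arcs leave $o$ along five distinct geodesics, so two of them form an angle less than~$90\dg$ at~$v$, and a \sa path between their far endpoints through~$v$ would have its normal at a point just past~$v$ crossing the other edge, contradicting Lemma~\ref{lem:H:normals}. If $T$ had a vertex~$v$ of degree~$4$ but were not a subdivision of~$K_{1,4}$, then some branch at~$v$ contains a further vertex of degree~$\ge 3$. Moving $v$ to~$o$, Lemma~\ref{lem:H:normals} forces all four incident angles at~$v$ to be exactly $90\dg$, which rigidly fixes the four incident geodesics as two orthogonal diameters; but then the additional branching vertex down one branch forces, by the same local argument applied there, a pair of incident arcs at angle less than~$90\dg$ or a normal crossing — a contradiction. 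This is exactly the Euclidean argument of Alamdari et al.~\cite{acglp-sag-12}, which transfers since hyperbolic normals through~$o$ are diameters.

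\sublem{2}~Every binary cactus, and hence every spanning binary cactus of a 3-connected planar graph, contains a spanning tree of maximum degree~$3$~\cite{angelini2010algorithm, Moitra2008}. Since every 3-connected planar graph~$H$ has a spanning binary cactus, $H$ has a spanning tree~$T'$ with $\Delta(T') \le 3$; by~\sublem{1}, $T'$ has an \ic drawing in~$\Hp$. Drawing the remaining edges of~$H$ as geodesic arcs between the fixed vertex positions, every pair of vertices is still joined by the \ic path in~$T'$, so the resulting drawing of~$H$ is \ic. The same argument applied with $H$ a binary cactus gives the claim for binary cactuses.
\end{proof}
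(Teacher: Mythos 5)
Your proposal is correct and follows essentially the same route as the paper, which proves this corollary implicitly via Theorem~\ref{thm:H:bintrees}, the origin-isometry argument for the degree-4 case (asserted to transfer from Alamdari et al.\ exactly as you do), and the known fact that every binary cactus, hence every 3-connected planar graph, has a spanning tree of maximum degree~3 whose \ic drawing survives the re-insertion of the remaining edges. One small inaccuracy: the \ic-ness of the tree drawing follows from the coincidence of \sa and \ic drawings for trees (paths are unique), not from Lemma~\ref{lem:path-ic90}, which is a Euclidean statement.
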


Again, note that this is not the case for binary cactuses in~$\R^2$; see the
example in Theorem~\ref{thm:no-sad}. We use the above construction to produce
\emph{planar} \sa drawings of binary cactuses in~$\Hp$. We show how to choose a
spanning tree and angles at vertices of degree~2, such that non-tree edges can
be added without introducing crossings; see Fig.~\ref{fig:H:bintrees:3} for a
sketch.

\newcommand{\CorHypPlaneBinCactusText}{Every binary cactus has a planar \ic
  drawing in~$\Hp$.}
\begin{corollary}
 \CorHypPlaneBinCactusText
 \label{cor:HypPlaneBinCactus}
\end{corollary}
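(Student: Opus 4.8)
The plan is to adapt the construction from Theorem~\ref{thm:H:bintrees} so that the spanning binary tree is drawn in~$\Hp$ with the hexagonal-tiling layout, and then to argue that the missing cactus edges can be inserted as geodesic arcs without creating crossings and without destroying the self-approaching property. First I would fix a binary cactus~$G$ and recall that it has a binary spanning tree~$T$~\cite{angelini2010algorithm,Moitra2008}; however, a single generic binary spanning tree is not enough, because we also need to control the cyclic order of the tree edges at each vertex so that the non-tree edge of every block ``closes up'' a face. So the first real step is to choose~$T$ carefully: for each block~$\mu$ of~$G$ (a cycle $v_0 v_1 \dots v_\ell v_0$) we delete exactly one edge, and we do this so that the two tree-paths replacing it leave the rest of the drawing on one side. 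Concretely, root~$G$ at a non-cutvertex, orient each block, and in each block delete the edge incident to $\rt{\mu}$ on the ``far'' side; the remaining path of the block, together with the deleted chord, will bound a region containing no other part of the drawing.

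Next I would build the drawing. Using the machinery of Theorem~\ref{thm:H:bintrees}, subdivide each tree edge and place the vertices at centers/arc-midpoints of a $90\dg$-angled regular-hexagon tiling of~$D$, so that every arc has the same hyperbolic length, incident arcs meet at $120\dg$, and (crucially) each building block's sub-drawing is confined to one of the three quadrangular regions $\square_i = \hp{v_0}{v_i}\setminus(\hp{v_i}{p_{2i-1}}\cup \hp{v_i}{p_{2i-2}})$. The point of reusing exactly this layout is that the ``slab'' argument already shows no normal on a tree edge crosses the rest of~$\Gamma_\infty$; what remains is to handle the degree-2 subdivision vertices that sit on a deleted block edge. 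At a vertex~$v$ of degree~2 in~$T$ we have freedom in the angle between its two incident arcs (it need not be $120\dg$ or $90\dg$), and I would use this freedom — this is the ``choose angles at vertices of degree~2'' remark in the text — to route the two halves of the block around the outside so that the block's face is empty. Then the deleted chord of~$\mu$ is re-added as the geodesic arc between its two endpoints, drawn inside that empty face; planarity follows because each such arc lies in a region disjoint from everything else, and self-approachingness follows from Lemma~\ref{lem:H:normals} by checking that no normal on the newly added arc, nor on any arc adjacent to it, crosses~$\Gamma$ — again using that the block's face and the surrounding slabs are empty.

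The step I expect to be the main obstacle is exactly this last verification: after re-inserting the chord, an $st$-path may now traverse both tree arcs and chord arcs, and I must ensure the self-approaching condition of Fact~\ref{rem:sap-char}/Lemma~\ref{lem:H:normals} survives for \emph{all} pairs, not just those already handled by the tree. The delicate case is when the self-approaching path enters a block, uses the new chord, and exits — one has to show the normal to the chord at every interior point still lies in the empty slab between $\hp{v_0}{v_1}$-type halfplanes, which forces a constraint relating the hexagon angle, the block size, and the chosen degree-2 angles. I would address this by shrinking the block's contribution: scale each subcactus hanging off the block so that the whole block (chord included) fits inside a hexagonal tile small enough that its chord's normals stay within the tile, exactly as the Euclidean construction in Theorem~\ref{lem:triang-ic} scales subdrawings into the $\Diamond_i$ regions. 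With that scaling in place, the hyperbolic slab/front arguments of Theorem~\ref{thm:H:bintrees} apply verbatim to the chord arcs as well, giving a planar self-approaching — hence, by symmetry of the $120\dg$/geodesic layout, increasing-chord — drawing of~$G$ in~$\Hp$.
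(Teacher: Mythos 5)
Your overall skeleton matches the paper's proof: delete the edge $v_0 v_k$ incident to $\rt{\mu}$ from each cycle block to obtain a binary spanning tree, embed that tree via the hexagonal tiling of Theorem~\ref{thm:H:bintrees} (fixing the angles at the degree-2 vertices so that consecutive block edges turn by $120\dg$), and re-insert each deleted chord as a geodesic arc. However, the final third of your argument rests on a misreading of what must be verified. A drawing is \ic if \emph{some} \ic path exists between every pair of vertices; the spanning tree already supplies such a path for every pair (its drawing is \sa by Theorem~\ref{thm:H:bintrees}, hence \ic since tree paths are unique), so the re-inserted chords never need to carry any \sa path at all. The only property they must satisfy is that they introduce no crossings. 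Your ``main obstacle'' --- controlling normals on the chord and on paths traversing it --- therefore does not arise, and the fix you propose for it would fail: Lemma~\ref{lem:H:normals} characterizes \sa drawings of \emph{trees} only and is not the right criterion once chords are added, and the hexagonal tiling is rigid (all arcs have the same hyperbolic length, all tiles congruent), so there is no freedom to ``scale each subcactus hanging off the block'' in the way the Euclidean construction of Theorem~\ref{lem:triang-ic} scales subdrawings into the $\Diamond_i$ regions.

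Conversely, the part you treat as routine --- planarity of the added chords --- is where the paper's proof does its actual work, and your ``the block's face is empty'' claim is an assertion rather than a proof. The needed verification is: for each internal vertex $v_j$ of the opened cycle, after an isometry placing $v_j$ at the origin with $\vec{v_j v_{j+1}}$ vertical, the endpoints $v_0$ and $v_k$ lie inside the $120\dg$ cone spanned by $\ray{v_j}{v_{j-1}}$ and $\ray{v_j}{v_{j+1}}$ while the subcactus attached at $v_j$ lies outside that cone; moreover $\subcactb{v_0}{\mu} \subseteq \hp{v_1}{v_0}$ and $\subcactb{v_k}{\mu} \subseteq \hp{v_{k-1}}{v_k}$ with $v_k \notin \hp{v_1}{v_0}$ and $v_0 \notin \hp{v_{k-1}}{v_k}$, which together show the chord $v_0 v_k$ misses every edge and every subcactus. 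Supplying this case analysis (and dropping the scaling detour) would turn your outline into the paper's proof.
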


\begin{proof}
  \Wlog, let~$G$ be a binary cactus rooted at block~$\nu$ such that each
  block~$\mu$ of~$G$ is either a single edge or a cycle.  For each block~$\mu$ forming
  a cycle~$v_0$, $v_1$, \dots, $v_k, v_0$, $\rt{\mu} = v_0$, we remove edge~$v_0
  v_k$, thus obtaining a binary tree~$T$.  We embed it similar to the proof of
  Theorem~\ref{thm:H:bintrees} such that additionally the counterclockwise angle $\angle v_{j-1} v_j
  v_{j+1} = 120\dg$ for $j=1, \dots, k-1$. Obviously,~$T$ is
  drawn in a planar way since for each edge~$e$ of~$T$, each half of~$e$ is drawn inside
  its hexagon.  

  It remains to show that for each~$\mu$, adding arc~$v_0 v_k$ introduces no
  crossings.  For each~$j = 1, \dots, k-1$, we can apply an isometry to the
  drawing, such that $v_j$ is in the origin and~$\vec{v_j v_{j+1}}$ points
  upwards; see Fig.~\ref{fig:H:bintrees:3}. According to the construction
  of~$T$, subcactus~$\subcactb{v_0}{\mu}$ (maximal subcactus of~$G$
  containing~$v_0$ and no other vertex of~$\mu$) lies in the green region
  contained in~$\hp{v_1}{v_0}$ and~$\subcactb{v_k}{\mu}$ in the blue region
  contained in~$\hp{v_{k-1}}{v_k}$. Since $v_0 \notin \hp{v_{k-1}}{v_k}$
  and~$v_k \notin \hp{v_1}{v_0}$, arc~$v_0 v_k$ crosses
  neither~$\subcactb{v_0}{\mu}$ nor~$\subcactb{v_k}{\mu}$. Furthermore, $v_0$
  and~$v_k$ lie inside the~$120\dg$ cone~$\Lambda_j$ formed
  by~$\ray{v_j}{v_{j+1}}$ and~$\ray{v_{j}}{v_{j-1}}$.
 Thus, $v_0 v_k$ does not
  cross~$v_{j-1}{v_j}$, $v_jv_{j+1}$. Since subcactus~$\subcactb{v_j}{\mu}$ is
  in $\Hp \setminus \Lambda_j$ (it lies in the red area in
  Fig.~\ref{fig:H:bintrees:3}), it is not crossed by~$v_0 v_k$
  either. 
\end{proof}

\section{Conclusion}

We have studied the problem of constructing \sa and \ic drawings of
3-connected planar graphs and triangulations in the Euclidean and
hyperbolic plane.  Due to the fact that every such graph has a
spanning binary cactus, and in the case of a triangulation even one
that has a special type of triangulation (downward-triangulation), \sa
and \ic drawings of binary cactuses played an important role.

We showed that, in the Euclidean plane, downward-triangulated binary
cactuses admit planar \ic drawings, and that the condition of being
downward-triangulated is essential as there exist binary cactuses that
do not admit a (not necessarily planar) \sa drawing.  Naturally, these
results imply the existence of non-planar \ic drawings of
triangulations.  It remains open whether every 3-connected planar
graph has a \sa or \ic drawing. If this is the case, according to our
example in Theorem~\ref{thm:no-sad}, the construction must be
significantly different from both known
proofs~\cite{angelini2010algorithm, Moitra2008} of the weak
Papadimitriou-Ratajczak conjecture~\cite{Papadimitriou2005} (you
cannot just take an arbitrary spanning binary cactus) and would prove
a stronger statement.

For planar 3-trees, which are special triangulations, we introduced
$\alpha$-Schnyder drawings, which have increasing chords for $\alpha \le 30\dg$,
to show the existence of planar \ic drawings.  It is an open question whether
this method works for further classes of triangulations.  Which triangulations admit
$\alpha$-Schnyder drawings for arbitrarily small values of $\alpha$ or for $\alpha = 30\dg$?

Finally, we studied drawings in the hyperbolic plane.  Here we gave a
complete characterization of the trees that admit an \ic drawing (which
then is planar) and used it to show the existence of non-planar \ic
drawings of 3-connected planar graphs.  For binary cactuses even a
planar \ic drawing exists.

It is worth noting that all \sa drawings we constructed are actually
\ic drawings.  Is there a class of graphs that admits a \sa drawing
but no \ic drawing?

   {\small

}

\end{document}